\newcommand{\R}{{\mathord{\mathbb R}}}
\newcommand{\Z}{{\mathord{\mathbb Z}}}
\newcommand{\N}{{\mathord{\mathbb N}}}
\newcommand{\C}{{\mathord{\mathbb C}}}
\def\lam {\lambda}
\newcommand{\HH}{\mathcal{H}}
\newcommand{\FF}{\mathcal{F}}
\newcommand{\hh}{\mathfrak{h}}
\newcommand{\ran}{{\rm Ran}}
\newcommand{\inn}[1]{\langle {#1} \rangle }
\newcommand{\abs}[1]{\left| #1 \right|}
\newcommand{\muu}[1]{\begin{align*} #1 \end{align*}}
\def\one{{\sf 1}\mkern-5.0mu{\rm I}}
\newcommand{\ben}{\begin{displaymath}}
\newcommand{\een}{\end{displaymath}}
\newcommand{\beqn}{\begin{equation}}
\newcommand{\eeqn}{\end{equation}}
\newcommand{\beqna}{\begin{eqnarray*}}
\newcommand{\eeqna}{\end{eqnarray*}}
\def\inf{{\rm inf}\,}
\newcommand{\sfrac}[2]{\textrm{\footnotesize $\frac{#1}{#2}$}}
\newtheorem{lemma}{Lemma}
\newtheorem{theorem}[lemma]{Theorem}
\newtheorem{remark}{Remark}
\newtheorem{proposition}[lemma]{Proposition}
\newtheorem{hyp}{Hypothesis}
\title{Ground States for translationally invariant Pauli-Fierz Models
at zero Momentum}
\author{David Hasler \and Oliver Siebert}
\date{\small Department of Mathematics, Friedrich-Schiller University of Jena \\
Jena, Germany}
\begin{document}

\maketitle

\begin{abstract}
We  consider the  translationally  invariant Pauli-Fierz model describing
a charged particle  interacting with the electromagnetic field. We show under
natural assumptions   that   the fiber Hamiltonian at zero
momentum has  a ground state.
\end{abstract}

\section{Introduction}

Non-relativistic qed has been a successful theory describing low energy aspects
of quantum mechanical  matter interacting with  the quantized radiation field.
Within this model many physical phenomena have been mathematically understood.
In the present paper, we shall discuss an aspect which falls within
 the scattering problem  of an electron interacting
with the quantized radiation field.
The main result which we prove is that the the total system composed of electron and photon field
at  zero momentum is  free of infrared divergences. In this situation the infrared singularity is critical. As soon as the momentum is nonzero a ground stated ceases to exist, \cite{haslerherbst1}.

Starting with the work of  Bloch and Nordsieck \cite{blonor:37} the so called
infrared catastrophe in scattering theory has been intensively investigated. Although
the physical reasons for infrared divergences were  well understood at that time  and   did
not lead to  any physical problems, the formal treatment was not  satisfactory.
In \cite{kulfad:70}  Fadeev and Kulish   introduced a new space of
asymptotic states and gave a theoretical discussion of this phenomenon
in the framework of relativistic quantum field theory.
   Fr\"ohlich
studied  such asymptotic
 states in  the so called Nelson Model    \cite{F73,F74}, which is mathematically well
defined. Extending these   results an iterative algorithm for constructing
 asymptotic  states in Nelson's model was developed in \cite{piz05}.
 This construction was extended to the model of non-relativistic qed
in  \cite{ChenFroehlichPizzo2}. Specifically,  the case of zero momentum  in nonrelativistic
qed has  previously been investigated  in   \cite{BCFS05,C01}.

In the present  paper we consider the Hamiltonian $H$ of non-relativistic qed
describing an electron, with or without spin, coupled to the quantized
radiation field.  The Hamiltonian commutes with the operator of
total momentum. We are interested in the operator, $H(0)$, obtained by restricting
the Hamiltonian   to the subspace of  total momentum zero.
Based on  a natural  energy inequality, we prove that for all values
of the coupling constant, $H(0)$ has a ground state. The energy inequality has been
shown to hold in the spinless case for all values of the coupling constant \cite{G72}, and in the
case of spin the energy inequality follows in a related situation from the main theorem in  \cite{C01}.
The existence of  a ground state can  for example be   used to obtain expansions on the binding energy  of the hydrogen atom  \cite{barchevouvou:10}.
To the best of our knowledge  the  result, which we prove, has so far only  been shown in the  spinless case   
 for  small values of the coupling constant  \cite{BCFS05},  see  also \cite{ChenFroehlichPizzo2} for related results.
In contrast to the proofs given there,   our proof  is  non-perturbative and   independent
of the magnitude  of any ultraviolet cutoff  parameter.  The  proof, which we give,  uses a compactness argument and  is  not constructive. Nevertheless,
 once the   existence of the ground state  is established one can use other methods to obtain
   asymptotic expansions  of the ground state as well as its energy   \cite{arai:14,brahaslan:18}, in this context  see also \cite{haisei02}.

The idea of the  proof given in the present paper follows closely the ideas   introduced  in   \cite{GriLieLos:01}, which were applied in a similar case in \cite{LMS06}. However,  in the  situation
which we encounter the infrared singularity is more severe
and subtler estimates are necessary.
 We believe that the estimates  which we use in the present paper might establish  alternative proofs  of  existence of ground states in  other critical cases, as for example  \cite{hasher:11}.

In the next section  we introduce the model and state the main result.
The proofs are presented in Section 3.

\section{Model and Statement of Results}

\label{sec:nonqed} \label{sec:eldef}

Let   $\hh$ be a complex Hilbert space.  We introduce the symmetric Fock space
$$
\FF(\hh)  =  \bigoplus_{n=0}^\infty  \hh^{(n)} ,
$$
where $\hh^{(0)} = \C$ and where  $\hh^{(n)} =  \mathcal{P}_n( \bigotimes_{k=1}^n \hh ) $    for $n \geq 1$,
with   $\mathcal{P}_n$ denoting  the orthogonal projection onto the subspace of
totally symmetric tensors. Thus we can identify
$\psi  \in \FF(\hh) $ with the sequence $(\psi_{(n)})_{n \in \N_0}$ with $\psi_{(n)} \in \hh^{(n)}$.
The vacuum  is the vector $\Omega :=
(1,0,0, \ldots ) \in \FF(\hh)$.
We define for $f \in \hh$ the creation operator $a^*(f)$ acting on vectors $\psi \in \FF$ by
$$
(a^*(f) \psi)_{(n)} = \sqrt{n} \mathcal{P}_n ( f \otimes \psi_{(n-1)} )
$$
with domain $D(a^*(f)) := \{ \psi \in \FF : a^*(f) \psi \in \FF \}$. This  yields a densely defined
closed operator. For $f \in \hh$ we define the annihilation $a(f)$ as the adjoint of $a^*(f)$, i.e.,
$$
a(f)  = \left[a^*(f) \right]^* .
$$
 It follows from the definition  that
 $a(f)$ is anti-linear, and $a^*(f)$ is linear in $f$. Creation and annihilation operators  are well
 known to satisfy the so called canonical commutation relations
$$
[ a^*(f) , a^*(g) ] = 0 \quad , \quad [a(f) , a(g) ] = 0 \quad ,
\quad [a(f) , a^*(g) ] =  \inn{ f , g }_\hh \; ,
$$
where $f , g \in \hh$, $[\cdot, \cdot]$ stands for the commutator, and    $\inn{ f , g}_\hh$ denotes the
inner product of $\hh$. For a self-adjoint operator  $A$  in $\hh$ we define  the operator $d \Gamma(A)$ as
follows.   In $\hh^{(n)}$  we define
 $$
A^{(n)} := A \otimes \one \otimes \cdots \otimes  \one  + \one \otimes A \otimes \cdots \otimes \one + \cdots + \one \otimes \cdots \otimes \one \otimes A ,  \, n \in \N ,
$$
in the sense of \cite[VIII.10]{rs1} and  $A^{(0)} := 0$.
By definition   $\psi \in \FF(\hh) $ is  in the domain of
$d \Gamma(A)$  if  $\psi_{(n)} \in D(A^{(n)})$ for all  $n \in \N_0$  and
\begin{align} \label{eq:defofgammaA}
(d \Gamma(A) \psi )_{(n)} = A^{(n)} \psi_{(n)}   ,  \quad n \in \N_0 ,
\end{align}
is  a vector in $\FF(\hh)$,  in which  case $d \Gamma(A) \psi $ is defined by \eqref{eq:defofgammaA}.
The operator $d \Gamma(A)$ is self-adjoint, see for example \cite[VIII.10]{rs1}.

Henceforth, we shall consider specifically
\begin{equation} \label{eq:defofhilconc}
\hh := L^2(\Z_2  \times \R^3 ) \cong L^2(\R^3 ; \C^2 )
\end{equation}
and write $\FF$ for $\FF(\hh)$.
The Hilbert space $\hh$ describes a so called  transversally polarized photon. By physical interpretation
the variable $(\lambda,k) \in  \Z_2  \times \R^3$ consists of the wave
vector $k$ and the polarization label  $\lambda$.
Because of   \eqref{eq:defofhilconc}, the   elements  $\psi \in \FF$ can be identified
with sequences $(\psi_{(n)})_{n=0}^\infty$ of so called $n$-photon wave
functions, $\psi_{(n)}
\in L^2_{\rm sym}(( \Z_2 \times \R^3)^n)$, where the subscript ``${\rm sym}$'' stands for the subspace
of functions wich  are totally symmetric in their $n$ arguments.
Henceforth, we shall make  use of  this identification without mention.
The Fock space inherits a scalar
product from $\hh$, explicitly
$$
\inn{  \psi , \varphi }  =  \overline{\psi}_{(0)} \varphi_{(0)} +
\sum_{n=1}^\infty \sum_{\substack{ \lam_1, \ldots ,\lam_n \\ \in \{ 1 , 2 \} } } \int
\overline{\psi_{(n)}(\lam_1, {k}_1, \ldots , \lam_n, {k}_n)}
\varphi_{(n)}(\lam_1, {k}_1, \ldots , \lam_n, {k}_n ) d {k}_1 \ldots d
{k}_n \; .
$$

We shall make use of the physics notation of the creation and annihilation operators.
One defines for $(\lambda,k)\in \Z_2\times \R^3$  and $\psi \in \FF$
\begin{align} \label{eq:defofannihdir}
[a_\lambda(k) \psi]_{(n)}(\lam_1, {k}_1, \ldots , \lam_n, {k}_n)  & = \sqrt{n+1} \psi_{(n+1)}(\lambda, k , \lam_1, {k}_1, \ldots , \lam_n, {k}_n)  , \, n \in \N_0 .
\end{align}
Now  \eqref{eq:defofannihdir} defines a well defined operator $a_\lambda(k)$ on
$$D_\mathcal{S} := \{ \psi \in \FF : \psi_{(n)}  = 0 \text{ for all but finitely many } , \psi_{(n)} \in   \mathcal{S}((\Z_2 \times \R^3)^n) \} ,$$
where $\mathcal{S}$ stands for the Schwartz space.  In the sense of quadratic forms on  $D_\mathcal{S} \times D_\mathcal{S}$ its adjoint $a_\lambda^*(k)$ is well defined. Furthermore,  in the sense of quadratic forms one has for all $f \in \hh$ the identities
\begin{align*}
a(f)  & = \sum_{\lambda=1,2} \int     \overline{f(\lam, {k}) } a_\lam({k}) d{k} \quad , \\
a^*(f) &  = \sum_{\lambda=1,2} \int f(\lam, {k})
a^*_\lambda(k) d{k} \; .
\end{align*}
For details we refer the reader to  \cite[X.7]{ressim:fou}.
The field energy operator denoted by $H_f$
is given by
$$
H_f = d \Gamma(\omega)  ,
$$
where $\omega : \Z_2 \times \R^3 \to \R$ is defined by  $\omega(\lambda,k) = |k|$.
 The operator of momentum $P_f$ is defined as a three dimensional vector of operators, where
 the $j$-th component is defined by
 $$
 (P_f)_j := d \Gamma(\pi_j) ,
 $$
 where $\pi_j : \Z_2 \times \R^3 \to \R$ is defined by  $\pi_j(\lambda,k) = k_j$.

 The Hilbert space describing the
system composed of a charged particle with  spin $s \in \{ 0 , \frac{1}{2} \}$  and the quantized field is
$$
\HH_0 := L^2( \R^3 \times \Z_{2 s + 1} )  \otimes \FF \; .
$$
The Hamiltonian is
$$
H =   \frac{1}{2}\left(  p  + e A (x)  \right)^2  + e  S \cdot  B(x)  + H_f \; ,
$$
with
\begin{align} \label{eq:44}
A(x) & = \sum_{\lambda=1,2} \int \frac{ \varepsilon_{\lambda}(k) }{\sqrt{2|k|}}\left(
\overline{\rho(k)} a_{\lambda}(k) e^{ik \cdot x}  +
\rho(k) a_{\lambda}^*(k) e^{-ik \cdot x}  \right) dk
\; ,  \\
B(x) & = \sum_{\lambda=1,2} \int \frac{  i k \wedge \varepsilon_{\lambda}(k)  }{\sqrt{2|k|}} \left(
\overline{\rho(k)} a_{\lambda}(k) e^{ik \cdot x}  -
\rho(k) a_{\lambda}^*(k) e^{-ik \cdot x} \right) dk
\; ,
\end{align}
where the $\varepsilon_{\lambda}(k) \in \R^3$ are vectors, depending
measurably on $\widehat{k} = k /|k|$,  such that $(k/|k|,
\varepsilon_{1}(k), \varepsilon_{2}(k))$ forms an orthonormal basis. For  the proof
we shall  make  an explicit choice in   \eqref{eq:cheps}, below. We shall adopt the standard convention that for  $v = (v_1,v_2,v_3)$  we write $v^2 :=   \sum_{j=1}^3 v_j v_j $.
By $x$ we denote the position of the electron and its canonically
conjugate momentum by $p = - i \nabla_x$.
If $s=1/2$, let $S = (\sigma_1,\sigma_2,\sigma_3)$   denote the vector of  Pauli-matrices.
If $s=0$, let $S = 0$.
The number $e \in \R$ is called the coupling constant.
The so called form factor $\rho : \R^3 \to \C$ is a measurable function for which we shall  assume  the following
 hypothesis for the main theorem. 
\begin{hyp}\label{h:formfac1}   For some  $0 < \Lambda < \infty$ we have
\begin{equation}  \label{e:formfac1}
\rho(k) = \frac{1}{(2 \pi)^{3/2}} \chi_\Lambda(|k|) \; ,   \quad k \in \R^3 ,
\end{equation}
where $\chi_{\Lambda} = 1_{[0,\Lambda]}$.
\end{hyp}
We note that Hypothesis A is usually assumed in Pauli-Fierz type models.
The
Hamiltonian is translation invariant and commutes with the generator
of translations, i.e., the operator of total momentum
$$
P_{\rm tot} = p + P_f  \; .
$$
Let
$$
W = \exp(i x \cdot P_f) \; .
$$
Note $W P_{\rm tot} W^* = p $ so that in the new representation  $p$
is the total momentum. One easily  computes
$$
W H W^* =  \frac{1}{2}\left( p - P_f  +  e A \right)^2   + e  S \cdot  B         +      H_f \; ,
$$
where set $A := A(0)$ and $B := B(0)$. Let $F$ be the Fourier transform in the electron
variable $x$, i.e., on $L^2(\R^3)$,
\begin{eqnarray}
\label{eq:fourier} ( F \psi)(\xi ) = \frac{1}{(2\pi)^{3/2}}
\int_{\R^3} e^{-i \xi \cdot x } \psi(x) dx \; .
\end{eqnarray}
Then the composition $U=FW$ is a  unitary operator 
\begin{align*}
 U : \HH_0 \to
L^2(\R^3  \times \Z_{2s+1}) \otimes \FF \cong \int_{\R^3}^\oplus \C^{2s+1} \otimes   \FF d\xi,
\end{align*} 
 yielding  the so called fiber decomposition of the
Hamiltonian,
\begin{align*}
U H U^* = \int_{\R^3}^{\oplus} H(\xi) d\xi , 
\end{align*} 
where 
$$ H(\xi) = \frac{1}{2}\left( \xi - P_f  +  e A
\right)^2  + e  S \cdot  B    + H_f
 \;
$$
is an operator in  the so called  reduced  Hilbert space  $$\HH :=  \C^{2s+1} \otimes \FF .$$ The operator   $H(\xi)$ is self-adjoint on
 $D( P_f^2) \cap D(H_f )$, see  Theorem   \ref{hyp:op0} in the next section.
To prove that $H(0)$ has a ground state, we use a  compactness argument
similar to  \cite{GriLieLos:01}.   The idea
is to first introduce a  positive photon mass  in the field energy. For $m \geq 0$ we define
$$
H_{f,m} = d \Gamma(\omega_m) ,
$$
where $$\omega_m(\lambda, k) = \sqrt{m^2 + k^2} , $$
and  study the operator
\begin{equation} \label{def:ofhm}
H_m(\xi)  = \frac{1}{2}(\xi - P_f + eA)^2 + e S \cdot B + H_{f,m}.
\end{equation}
We  set
$$
E_m(\xi)  := \inf \sigma(H_m(\xi)) .
$$

The proof is based on the following energy inequality.
Specifically we can show our  result for any $e \in \R$ for which there exists an $m_0 > 0$ such that
   for all $m \in (0,m_0)$
 \begin{eqnarray}
  E_m(\xi) \geq E_m(0) \quad , \quad \forall \xi \in \R^3 .
\label{eq:eineq}
\end{eqnarray}
Inequality \eqref{eq:eineq} has been investigated  in the literature.
In spinless case, $s = 0$,   Inequality \eqref{eq:eineq} has in fact been shown  for all values
 $m \geq 0$ and   $e \in \R$
using   functional integration,  \cite{G72,SPOHN04,H06, LMS06}.
In case  of spin $s=1/2$  Inequality   \eqref{eq:eineq}    has  to the best
of our knowledge  not yet been  shown by means of  functional integration.
 For $s=1/2$  an   Inequality of the type  \eqref{eq:eineq} follows  in a related 
situation for  small $|e|$  from the main theorem
stated in \cite{C01}, which in turn  is based on  perturbative arguments. We now state 
the main result of this paper.

\begin{theorem} \label{thm:main1}  Suppose Hypothesis A  holds, and let   $e \in \R$.   If   there exists an $m_0 > 0$ such that
for all  $m \in (0,m_0)$ the  energy inequality  \eqref{eq:eineq} holds, then
     the operator
$H(0)$ has a ground state.
\end{theorem}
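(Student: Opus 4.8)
The plan is to realise $H(0)$ as a limit of the massive fiber Hamiltonians $H_m(0)$ and to build its ground state as a weak limit of theirs, using the energy inequality \eqref{eq:eineq} twice: to produce the massive ground states, and to control the infrared behaviour of the weak limit. Set $\Pi:=eA-P_f$, so $H(0)=\tfrac12\Pi^2+eS\cdot B+H_f$ and $H_m(0)=\tfrac12\Pi^2+eS\cdot B+H_{f,m}$, and let $N$ be the photon number operator. For $m\in(0,m_0)$ a standard Weyl-sequence argument (the HVZ-type theorem for fiber Hamiltonians) gives $\inf\sigma_{\mathrm{ess}}(H_m(0))\ge E_m(0)+m$: a state with a photon escaping to infinity has energy at least $E_m(-k)+\omega_m(k)$ for the transported residual momentum $-k$, and by \eqref{eq:eineq} applied at $-k$ together with $\omega_m\ge m$ this is $\ge E_m(0)+m>E_m(0)$. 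Hence $E_m(0)$ is a discrete eigenvalue and $H_m(0)$ has a normalized ground state $\psi_m$. (This is where $m>0$ enters; at $m=0$ the essential spectrum reaches down to $E_0(0)$.) Also $m\mapsto E_m(0)$ is nondecreasing, $E_0(0)\le E_m(0)$, and $E_m(0)\to E_0(0)$ as $m\downarrow0$ by a trial-vector comparison in a common core.

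We next record a priori bounds uniform in $m$ and the decisive consequence of the energy inequality. From $\langle\psi_m,H_m(0)\psi_m\rangle=E_m(0)\le E_{m_0}(0)$, the positivity of $\tfrac12\Pi^2$, and the relative form bound of $S\cdot B$ by $H_{f,m}$ (with $m$-independent constants), one obtains uniform bounds on $\|H_{f,m}^{1/2}\psi_m\|$, $\|A\psi_m\|$, $\|P_f\psi_m\|$ and $\|\Pi\psi_m\|$. Since $H_m(\xi)=H_m(0)+\xi\cdot\Pi+\tfrac12|\xi|^2$, the trial-vector estimate $E_m(\xi)\le E_m(0)+\xi\cdot\langle\psi_m,\Pi\psi_m\rangle+\tfrac12|\xi|^2$ combined with \eqref{eq:eineq} forces $\langle\psi_m,\Pi\psi_m\rangle=0$. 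Moreover $\xi\mapsto H_m(\xi)$ is a polynomial (in particular analytic of type (A)) family, $E_m(0)$ is a discrete eigenvalue, and $\xi=0$ minimises $E_m$, so second-order perturbation theory and the positive semidefiniteness of the Hessian of $E_m$ at $0$ give
\[
 \langle\Pi_j\psi_m,\,(H_m(0)-E_m(0))^{-1}\,\Pi_j\psi_m\rangle\le\tfrac12,\qquad j=1,2,3,
\]
the resolvent being taken on the orthogonal complement of the ground-state eigenspace, which contains $\Pi_j\psi_m$ by the vanishing of $\langle\psi_m,\Pi\psi_m\rangle$. This is the crucial input, encoding the criticality of zero momentum.

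The infrared estimate is the heart of the matter. Write $R_m(k)=(H_m(0)-E_m(0)+\omega_m(k))^{-1}$. The pull-through formula applied to $H_m(0)\psi_m=E_m(0)\psi_m$ gives, for a.e.\ $k$,
\[
 a_\lambda(k)\psi_m=R_m(k)\Bigl(\tfrac12\{k\cdot\Pi,a_\lambda(k)\}-\tfrac{e\rho(k)}{\sqrt{2|k|}}\,\varepsilon_\lambda(k)\cdot\Pi+\tfrac{ie\rho(k)}{\sqrt{2|k|}}\,S\cdot(k\wedge\varepsilon_\lambda(k))\Bigr)\psi_m .
\]
The last term carries an extra power of $|k|$ and is harmless; the first term carries a factor $|k|/\omega_m(k)\le1$ and an operator $a_\lambda(k)$, so its contribution to $\int\sum_\lambda\|a_\lambda(k)\psi_m\|^2dk$ is absorbed by a bootstrap using the a priori bounds and the ultraviolet cutoff. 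The middle term is the dangerous one: bounding $R_m(k)$ by $\omega_m(k)^{-1}$ alone yields only $\|a_\lambda(k)\psi_m\|\lesssim|k|^{-3/2}$, which is not square integrable near $k=0$. Here one uses the previous bound: writing $\Pi_j\psi_m=(H_m(0)-E_m(0))^{1/2}v_{m,j}$ with $\|v_{m,j}\|^2\le\tfrac12$ and using $\|R_m(k)(H_m(0)-E_m(0))^{1/2}\|\le(2\omega_m(k))^{-1/2}$ recovers the missing half power of $|k|$, giving
\[
 \|a_\lambda(k)\psi_m\|\le C\,\frac{|\rho(k)|}{|k|}\qquad\text{uniformly in }m .
\]
Splitting $N=N_{<\delta}+N_{\ge\delta}$ into photons of energy $<\delta$ and $\ge\delta$, the bound $N_{\ge\delta}\le\delta^{-1}H_{f,m}$ and $\langle\psi_m,N_{<\delta}\psi_m\rangle=\int_{|k|<\delta}\sum_\lambda\|a_\lambda(k)\psi_m\|^2dk\lesssim\delta$ give a uniform bound on $\langle\psi_m,N\psi_m\rangle$ and, crucially, $\int_{|k|<\delta}\sum_\lambda\|a_\lambda(k)\psi_m\|^2dk\to0$ as $\delta\downarrow0$ uniformly in $m$. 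I expect this step, above all squeezing the infrared exponent into the square-integrable range by means of the energy inequality (and making the pull-through rigorous on the appropriate domains), to be the main obstacle; it is the ``subtler estimate'' demanded by the more severe singularity.

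Finally, compactness and the limit $m\downarrow0$. By the uniform bounds there is a sequence $m_j\downarrow0$ along which $\psi_{m_j}\rightharpoonup\psi$ weakly in $\HH$. Suppose $\psi=0$ and split $\psi_{m_j}$ into: its part with $>n$ photons (norm$^2\le C/n$ by the number bound); its part carrying a photon of energy $<\delta$ or $>R$ (norm$^2\lesssim\delta+C/R$ by the infrared bound and the $H_{f,m}$-bound); and the remainder, supported in the sectors with $\le n$ photons all of energy in $[\delta,R]$. On this remainder the pull-through formula furnishes, uniformly in $m$, a Lipschitz modulus of continuity in the photon momenta (since $\|R_m(k)-R_m(k')\|\le\delta^{-2}|k-k'|$ there), so the remainders lie in a compact subset of $\HH$ and the weakly null sequence tends to $0$ in norm; letting $j\to\infty$ and then $n,R\to\infty$, $\delta\downarrow0$ contradicts $\|\psi_{m_j}\|=1$, so $\psi\ne0$. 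Since $0\le\omega_{m_j}-\omega\le m_j$, we have $H_{m_j}(0)\to H(0)$ in the strong resolvent sense and $E_{m_j}(0)\to E_0(0)=\inf\sigma(H(0))$; passing to the limit in $\langle\phi,H_{m_j}(0)\psi_{m_j}\rangle=E_{m_j}(0)\langle\phi,\psi_{m_j}\rangle$ for $\phi$ in a common core gives $H(0)\psi=E_0(0)\psi$, so $\psi/\|\psi\|$ is a ground state of $H(0)$.
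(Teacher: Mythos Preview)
Your overall architecture---massive ground states via the gap, the second-order perturbation bound $\langle\Pi_j\psi_m,(H_m(0)-E_m(0))^{-1}\Pi_j\psi_m\rangle\le\tfrac12$, the pull-through formula, and the recovery of the missing half-power of $|k|$ to obtain $\|a_\lambda(k)\psi_m\|\le C|\rho(k)|/|k|$---matches the paper's and is correct in spirit. (The paper's pull-through uses the shifted resolvent $(H_m(-k)-E_m(0)+\omega_m(k))^{-1}$, which gives an explicit formula rather than an implicit one with $a_\lambda(k)$ still on the right; your bootstrap would have to be made precise, but this is not the main issue.)

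The genuine gap is in your compactness step. You claim that on the region $\delta\le|k|\le R$ the pull-through furnishes a uniform Lipschitz modulus for $k\mapsto a_\lambda(k)\psi_m$, citing only $\|R_m(k)-R_m(k')\|\le\delta^{-2}|k-k'|$. But the full expression also contains the polarization vectors $\varepsilon_\lambda(k)$ and the form factor $\rho(k)=(2\pi)^{-3/2}1_{[0,\Lambda]}(|k|)$. The polarization vectors are unavoidably singular along a ray through the origin (for the explicit choice \eqref{eq:cheps} one has $|\nabla_k\varepsilon_\lambda(k)|\lesssim(k_1^2+k_2^2)^{-1/2}$), and the sharp cutoff makes $\rho$ discontinuous across $|k|=\Lambda$. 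So $a_\lambda(k)\psi_m$ is \emph{not} uniformly Lipschitz on any annulus that meets the $k_3$-axis or the sphere $|k|=\Lambda$, and your Arzel\`a--Ascoli-type argument for the remainder fails as stated.

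This is exactly where the paper's ``subtler estimate'' lives. Instead of Lipschitz control, the paper proves the derivative bound
\[
\|\nabla_k(a_\lambda(k)\psi_m)\|\le \frac{C|e\rho(k)|}{|k|\sqrt{k_1^2+k_2^2}}\qquad(|k|<\Lambda),
\]
which is singular both at $k=0$ and along the $k_3$-axis, and then interpolates it against the pointwise bound $\|a_\lambda(k)\psi_m\|\le C|\rho(k)|/|k|$ with a small exponent $\theta$ (Young's inequality) to obtain
\[
\int |y|^\delta\|\widehat{\psi}_{(n)}(y)\|^2\,dy\le C
\]
for some $\delta>0$, uniformly in $m$ and $n$; the jump at $|k|=\Lambda$ is handled separately by an $O(|a|)$ volume estimate. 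The compactness then comes from the operator $T$ associated with the form $q(\phi)=\langle\phi,N\phi\rangle+\sum_n n^{-3}\langle\widehat\phi_{(n)},\sum_i|y_i|^\delta\widehat\phi_{(n)}\rangle+\langle\phi,H_f\phi\rangle$, which has compact resolvent by Rellich's criterion. This yields \emph{strong} convergence of a subsequence, after which lower semicontinuity of the nonnegative form $H(0)-E(0)$ finishes the proof. Your identification of the ``main obstacle'' as the infrared exponent in the first pull-through bound is one step too early; that bound goes through as you wrote, and the real difficulty is squeezing compactness out of a derivative estimate that is itself non-integrable.
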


By the results in the literature mentioned in the previous paragraph,  we obtain
immediately the following  theorem as  corollary.

\begin{theorem} \label{cor:main1} Suppose   Hypothesis A holds. In the spinless case, i.e.,  $s=0$, the operator $H(0)$ has a ground state for all values of the coupling constant $e$.
\end{theorem}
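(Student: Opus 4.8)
The plan is to follow the compactness strategy of \cite{GriLieLos:01}: regularize by the photon mass $m$, show that each massive fiber Hamiltonian $H_m(0)$ has a normalized ground state $\psi_m$, prove bounds on the family $(\psi_m)_{0<m<m_0}$ that are uniform in $m$, and then extract a weak limit as $m\downarrow0$ which is identified as a nonzero ground state of $H(0)$. First, for $m\in(0,m_0)$ I would produce a spectral gap for $H_m(0)$: by an HVZ-type argument for fiber Hamiltonians, $\inf\sigma_{\mathrm{ess}}(H_m(0))$ equals the infimum over $k\in\R^3$ of $E_m(-k)+\omega_m(k)$, and combining the energy inequality \eqref{eq:eineq} with $\omega_m(k)\geq m$ gives $\inf\sigma_{\mathrm{ess}}(H_m(0))\geq E_m(0)+m>E_m(0)$, so $E_m(0)=\inf\sigma(H_m(0))$ is an isolated eigenvalue. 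Let $\psi_m$ with $\|\psi_m\|=1$ be a corresponding eigenvector, chosen --- using the reflection covariance of $H_m(0)$ at total momentum zero, and a suitable choice \eqref{eq:cheps} of the polarization vectors --- so that $\langle\psi_m,(P_f-eA)\psi_m\rangle=0$.

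Next I would record some uniform bounds and the pull-through formula. Writing $\Pi:=P_f-eA$, one has $H_m(0)=\frac{1}{2}\Pi^2+eS\cdot B+H_{f,m}$ and $\frac{1}{2}(\Pi+k)^2+eS\cdot B+H_{f,m}=H_m(-k)$. Since $E_m(0)=\langle\psi_m,H_m(0)\psi_m\rangle$ is bounded above by a fixed trial state (e.g.\ the bare vacuum) and below by $E(0)>-\infty$, the standard relative bounds for $A$ and $B$ in terms of $H_f^{1/2}$ give $\sup_m\big(\|\Pi\psi_m\|+\|H_f^{1/2}\psi_m\|\big)<\infty$. The canonical commutation relations together with the transversality $k\cdot\varepsilon_\lambda(k)=0$ yield the pull-through identity on $\psi_m$,
\begin{align*}
\big(H_m(-k)-E_m(0)+\omega_m(k)\big)\,a_\lambda(k)\psi_m=e\,v_\lambda(k)\cdot\Pi\,\psi_m-e\,S\cdot w_\lambda(k)\,\psi_m,
\end{align*}
with $v_\lambda(k)=\frac{\rho(k)}{\sqrt{2|k|}}\varepsilon_\lambda(k)$ and $w_\lambda(k)=-\frac{i\rho(k)}{\sqrt{2|k|}}\,k\wedge\varepsilon_\lambda(k)$, so that $|v_\lambda(k)|\lesssim|\rho(k)|/|k|^{1/2}$ and $|w_\lambda(k)|\lesssim|\rho(k)|\,|k|^{1/2}$. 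Here the energy inequality \eqref{eq:eineq} enters a second time: $H_m(-k)-E_m(0)\geq E_m(-k)-E_m(0)\geq0$, so $R_m(k):=\big(H_m(-k)-E_m(0)+\omega_m(k)\big)^{-1}$ is bounded with $\|R_m(k)\|\leq\omega_m(k)^{-1}$ and $a_\lambda(k)\psi_m=R_m(k)\big(e\,v_\lambda(k)\cdot\Pi\psi_m-e\,S\cdot w_\lambda(k)\psi_m\big)$; in particular $a_\lambda(k)\psi_m=0$ for $|k|>\Lambda$ by Hypothesis A.

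The heart of the proof --- and, I expect, the main obstacle --- is the infrared estimate. Bounding $R_m(k)$ crudely by $\omega_m(k)^{-1}\leq|k|^{-1}$ and using $\sup_m\|\Pi\psi_m\|<\infty$ gives only $\|a_\lambda(k)\psi_m\|\lesssim|\rho(k)|\,|k|^{-3/2}$, whose square is \emph{not} integrable near $k=0$ under Hypothesis A; this is the critical infrared singularity, and it is more severe than the one handled in \cite{GriLieLos:01,LMS06}. One must improve this, uniformly in $m$, to a bound whose square is integrable at $k=0$, so that $\sup_m\langle\psi_m,N\psi_m\rangle<\infty$ (with $N=d\Gamma(\one)$ the photon number) and the soft-photon tail $\sum_\lambda\int_{|k|<\epsilon}\|a_\lambda(k)\psi_m\|^2\,dk$ is uniformly small. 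The spin contribution $e\,S\cdot w_\lambda(k)\psi_m$ is already harmless, since $|w_\lambda(k)|\,\omega_m(k)^{-1}\lesssim|\rho(k)|\,|k|^{-1/2}$; the difficulty is the term $e\,v_\lambda(k)\cdot\Pi\psi_m$. To improve it one would exploit (i) the transversal sum $\sum_\lambda\varepsilon_\lambda(k)_i\varepsilon_\lambda(k)_j=\delta_{ij}-\widehat{k}_i\widehat{k}_j$, (ii) the vanishing of $\langle\psi_m,\Pi\psi_m\rangle$ from the first step, which removes the leading singular contribution, and --- the genuinely delicate point --- (iii) resolvent estimates comparing $H_m(-k)$ with $H_m(0)$ which control $\langle\Pi_j\psi_m,R_m(k)\Pi_j\psi_m\rangle$, or its integral against $|\rho(k)|^2|k|^{-2}\,dk$, uniformly in $m$ and $k$. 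Step (iii) is presumably what the introduction calls the ``subtler estimates'', and I would expect it to be the technical core of the argument.

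Finally, with uniform bounds on $\|\psi_m\|$, $\|H_f^{1/2}\psi_m\|$, $\|\Pi\psi_m\|$ and $\langle\psi_m,N\psi_m\rangle$ in hand, pass to a subsequence with $\psi_m\rightharpoonup\psi$. To see $\psi\neq0$: if $\psi=0$, then the vacuum component of $\psi_m$ tends to $0$ in norm (the vacuum sector being finite-dimensional), while the pull-through representation, the square-integrable majorant from the previous step, and a dominated-convergence argument force $\langle\psi_m,N\psi_m\rangle\to0$; since the photon number controls the distance of $\psi_m$ from its vacuum component, this gives $\|\psi_m\|\to0$, contradicting $\|\psi_m\|=1$. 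Hence $\psi\neq0$, and a compactness argument of the same type (using the uniform photon-number bound, the soft-photon tail estimate, and $a_\lambda(k)\psi_m=0$ for $|k|>\Lambda$) upgrades this to $\psi_m\to\psi$ strongly along the subsequence, so $\|\psi\|=1$. Since $E_m(0)\downarrow E(0)=\inf\sigma(H(0))$ as $m\downarrow0$ and $H(0)\leq H_m(0)$, weak lower semicontinuity of the closed quadratic form of $H(0)$ gives $\langle\psi,H(0)\psi\rangle\leq\liminf_{m\downarrow0}\langle\psi_m,H_m(0)\psi_m\rangle=E(0)=E(0)\|\psi\|^2$, and therefore $\psi$ is a ground state of $H(0)$.
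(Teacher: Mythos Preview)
Your overall strategy matches the paper's approach to Theorem~\ref{thm:main1} (of which the present theorem is an immediate corollary, via the energy inequality \eqref{eq:eineq} established for $s=0$ in \cite{G72,SPOHN04,H06,LMS06}). But there is a genuine gap in your compactness step. The bounds you assemble --- on $N$, $H_f$, $\Pi$, the soft-photon tail, and the sharp cutoff $|k|\leq\Lambda$ --- do \emph{not} place the $\psi_m$ in a compact subset of $\HH$, nor do they upgrade weak to strong convergence: in each $n$-photon sector the momenta are confined to a compact set, but compactness in $L^2$ also requires decay in the conjugate (``position'') variable $y$. A one-photon counterexample: the states $a^*(g_n)\Omega$ with $g_n(k)=f(k)e^{ink_1}$ and $f$ supported in $\{1<|k|<2\}$ satisfy all your hypotheses yet converge only weakly to zero as $n\to\infty$. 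The paper supplies this missing ingredient through a \emph{second} infrared bound, on $\nabla_k a_\lambda(k)\psi_m$ (Lemma~\ref{lem:second}), from which a uniform $|y|^\delta$-moment for $(\widehat{\psi}_m)_{(n)}$ is extracted (Lemma~\ref{thm:compact}); the compact embedding is then built from $N$, $H_f$, and this $y$-bound together. Your argument for $\psi\neq 0$ breaks for the same reason: from $\psi_m\rightharpoonup 0$ and the pull-through formula you cannot conclude $\langle\psi_m,N\psi_m\rangle\to 0$, since $R_m(k)\Pi_j$ is not compact.

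On the first infrared bound, your item (iii) points in the right direction but misses the precise mechanism. It is not the vanishing of $\langle\psi_m,\Pi\psi_m\rangle$ that removes the critical singularity; rather, since $\xi=0$ is a minimum of $E_m$, second-order perturbation theory yields $\langle\Pi_j\psi_m,(H_m(0)-E_m(0))^{-1}\Pi_j\psi_m\rangle\leq\tfrac12$ (Lemma~\ref{lem:1}). Combined with the operator bound $\|R_m(k)(H_m(0)-E_m(0))^{1/2}\|\lesssim\omega_m(k)^{-1/2}$ (Lemma~\ref{lem:estimate}), this gives $\|a_\lambda(k)\psi_m\|\lesssim|\rho(k)|/|k|$, whose square is integrable in three dimensions.
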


We note that for small values of the coupling constant the statement  in   Therorem  \ref{cor:main1}
  has  been shown  previously in   \cite{BCFS05}, see also  \cite{ChenFroehlichPizzo2} for related work.
In the present paper  we extend that result    to  all  values of the coupling constant
and provide a rather short proof.
We want to point out,  that  the question  whether    Inequality \eqref{eq:eineq} holds
for all values of the coupling constant
in the case of spin $s=1/2$  seems to be an   open question.
We would like to  mention   work  \cite{HL08} in that direction.

\section{Proof of Results} \label{sec:proofelec}

We first state the following technical result about the domain of  self-adjointness. 

\begin{theorem} \label{hyp:op0} Let  $m \geq 0$ and $\rho \in L^2(\R^3;(|k| +\omega_m(k)^{-1} |k|^{-1}) dk)$. Then for all    $\xi \in \R^3$
and    $e \in \R$  the operator   $H_m(\xi)$ is
 self-adjoint on the natural domain of  $ \frac{1}{2} P_f^2 + H_{f,m}$.
\end{theorem}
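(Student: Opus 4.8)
The plan is to establish self-adjointness of $H_m(\xi)$ on $D(P_f^2)\cap D(H_{f,m})$ by the Kato–Rellich theorem, showing that the interaction terms are infinitesimally bounded with respect to $H_0 := \tfrac12 P_f^2 + H_{f,m}$. First I would observe that $H_0$ is self-adjoint on its natural domain $D(P_f^2)\cap D(H_{f,m})$: both $P_f^2 = \sum_j d\Gamma(\pi_j)^2$ and $H_{f,m} = d\Gamma(\omega_m)$ are self-adjoint, they commute (being functions of the field operators $d\Gamma(\cdot)$ on the common core of finite-particle vectors with Schwartz components), and one can invoke the standard fact that a sum of commuting nonnegative self-adjoint operators is self-adjoint on the intersection of their domains. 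Then I would expand
\[
H_m(\xi) = H_0 + \tfrac12\bigl(\xi - P_f\bigr)^2 - \tfrac12 P_f^2 + e\,(\xi - P_f)\cdot A + \tfrac{e}{2}\{A,(\xi-P_f)\} \text{-type terms} + \tfrac{e^2}{2}A^2 + e\,S\cdot B,
\]
more carefully: $\tfrac12(\xi-P_f+eA)^2 = \tfrac12 P_f^2 + \bigl[\tfrac12\xi^2 - \xi\cdot P_f + e\,\xi\cdot A - \tfrac e2(P_f\cdot A + A\cdot P_f) + \tfrac{e^2}{2}A^2\bigr]$, so the perturbation $V$ consists of: the field-polynomial-free term $\tfrac12\xi^2 - \xi\cdot P_f$; the linear field terms $e\,\xi\cdot A$ and $e\,S\cdot B$; the mixed term $-\tfrac e2(P_f\cdot A + A\cdot P_f)$; and the quadratic term $\tfrac{e^2}{2}A^2$. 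The term $\xi\cdot P_f$ is bounded relative to $P_f^2$ (hence to $H_0$) with arbitrarily small relative bound by the spectral theorem.

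The heart of the matter is the standard relative bounds for the field operators. The key input is that, by Hypothesis on $\rho$, the one-photon form factors $\rho\,\varepsilon_\lambda/\sqrt{2|k|}$ and $\rho\,(k\wedge\varepsilon_\lambda)/\sqrt{2|k|}$ — call them $h_\lambda$ and $g_\lambda$ — lie in $\hh$, and moreover $\omega_m^{-1/2} h_\lambda, \omega_m^{-1/2} g_\lambda \in \hh$; the assumption $\rho \in L^2(\R^3; (|k| + \omega_m(k)^{-1}|k|^{-1})\,dk)$ is tailored exactly so that $\int |\rho(k)|^2\,|k|\,dk < \infty$ (controls $A^2$, $B$) and $\int |\rho(k)|^2\,\omega_m(k)^{-1}|k|^{-1}\,dk < \infty$ (controls $\omega_m^{-1/2}$-weighted norms needed for $N^{1/2}$- or $H_{f,m}^{1/2}$-bounds). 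I would then invoke the familiar estimates: for $f$ with $\omega_m^{-1/2}f\in\hh$,
\[
\|a^\#(f)\psi\| \le \|\omega_m^{-1/2} f\|\,\|H_{f,m}^{1/2}\psi\| + \|f\|\,\|\psi\|,
\]
and the analogous bound for $a^\#(f)a^\#(g)$ relative to $H_{f,m} + 1$ (this is where $A^2$ enters, and one uses $\|f\|, \|g\| < \infty$ together with the $\omega_m^{-1/2}$-weighted norms). For the mixed term $(P_f\cdot A + A\cdot P_f)$ one uses that $P_f$ commutes through the creation/annihilation operators up to a shift in the photon momentum, or alternatively bounds $\|P_f^{1/2} a^\#(f)\psi\|$ in terms of $\|H_{f,m}^{1/2}\psi\|$ and $\|H_0^{1/2}\psi\|$ using $\omega_m(k)\ge m$ so that $|k|\le \omega_m(k)$ pointwise and hence $P_f^2 \le C H_{f,m}^2$ on the relevant domain (this is the only place $m>0$ might seem to matter, but since we only need $m\ge 0$ here we instead keep $P_f$ and $H_{f,m}$ as independent controlling operators — the hypothesis on $\rho$ already packages both weights). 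Combining, $V$ is $H_0$-bounded with relative bound that can be made $<1$ by a standard interpolation/Young's-inequality argument absorbing the field-quadratic term into $\epsilon H_0 + C_\epsilon$.

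The main obstacle — or rather the only non-routine bookkeeping — is handling the quadratic term $A^2$ and the mixed $P_f\cdot A$ term simultaneously so that the total relative bound is genuinely $<1$, since $A^2$ is only $H_{f,m}$-bounded with relative bound that is not small (it is $H_{f,m}$-form-bounded with a constant proportional to the weighted norms of $\rho$, times $e^2$). The resolution is that we do not need a small relative bound for $A^2$ on its own: $A^2$ is $H_{f,m}$-bounded (not infinitesimally), but one shows $\tfrac{e^2}{2}A^2 \le \tfrac12 H_{f,m}^2/(\text{large constant}) + C$ is false; instead one uses the sharp form bound $A^2 \le c(\rho)(H_{f,m}+1)$ and then notes that the genuinely $H_0$-small terms plus this give $H_m(\xi) \ge \tfrac12 H_0 - C$, which already yields semiboundedness; for self-adjointness one appeals to the KLMN theorem rather than Kato–Rellich if the $A^2$ bound is not small, or — the cleaner route, and the one I would actually write — one uses the well-known fact (essentially as in the cited literature on Pauli–Fierz operators) that after writing everything in terms of $H_{f,m}^{1/2}$ and $P_f$, all interaction terms are bounded by $\epsilon(P_f^2 + H_{f,m}) + C_\epsilon$ for every $\epsilon>0$ once $e$ is fixed, because the quadratic term $a^*a^*$, $aa$, $a^*a$ pieces of $A^2$ each carry two factors of $\omega_m^{-1/2}$-weighted (or unweighted) form factors and can be estimated by $N\cdot(\sup$ of weighted norms$)$ and $N \le H_{f,m}/m$ — wait, that reintroduces $m$; so for the $m=0$ borderline one keeps $A^2$ only $H_{f,m}$-form-bounded and invokes KLMN. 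I would present the argument via KLMN: $H_0$ is self-adjoint and nonnegative, $V$ is symmetric and $H_0$-form-bounded with relative bound $<1$ (all terms except $A^2$ infinitesimally small, $A^2$ with a small constant after pulling out the freedom $c(\rho)e^2$ against $\epsilon H_{f,m}$ — valid because $c(\rho)$ is finite and we may take the form bound as $\tfrac{e^2}{2}\langle\psi, A^2\psi\rangle \le \delta \langle\psi, H_{f,m}\psi\rangle + C_\delta\|\psi\|^2$ with $\delta$ at our disposal only if... ), concluding $H_m(\xi)$ is self-adjoint on the form domain; the upgrade to operator self-adjointness on $D(P_f^2)\cap D(H_{f,m})$ then follows from the fact that on this domain all terms are genuinely operator-bounded by $H_0$ with a possibly large but finite constant, so $D(H_m(\xi)) \supseteq D(H_0)$, and combined with self-adjointness this forces equality of domains. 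I expect the referee-level subtlety to lie precisely in making the $A^2$ bound quantitative enough; I would cite the standard Pauli–Fierz self-adjointness results (the operator-theoretic mechanism is classical) and give the weighted-$L^2$ computations explicitly since Hypothesis on $\rho$ was stated to make them transparent.
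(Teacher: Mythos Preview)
Your approach has a genuine gap that you yourself detect but do not resolve: neither Kato--Rellich nor KLMN can close the argument for \emph{arbitrary} $e \in \R$. The term $\tfrac{e^2}{2}A^2$ is $H_{f,m}$-bounded (and $H_{f,m}$-form-bounded), but the relative bound is a fixed constant times $e^2$ --- it is \emph{not} infinitesimally small. Your attempts to make it small (``$N \le H_{f,m}/m$'', ``with $\delta$ at our disposal only if\ldots'') either reintroduce a dependence on $m>0$ or simply trail off. So Kato--Rellich fails for large $|e|$, and KLMN fails for the same reason (the form relative bound also scales like $e^2$). Even if KLMN applied, your recovery of the operator domain from the form domain is circular: knowing $D(H_m(\xi)) \supseteq D(H_0)$ plus self-adjointness does not force equality without an additional argument.

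The paper's proof avoids relative-bound-less-than-one hypotheses altogether. It writes $T(e) = T + e T^{(1)} + e^2 T^{(2)}$ with $T = \tfrac12(\xi-P_f)^2 + H_{f,m}$ and shows (Lemma~\ref{eq:last1}) only that $T^{(1)}, T^{(2)}$ are $T$-bounded with \emph{some} finite bound. The crucial step is an a~priori closedness estimate (Lemma~\ref{eq:last2}): for every $e$ there exist $C_1, C_2$ with
\[
\| T\varphi\|^2 \le C_1 \| T(e)\varphi\|^2 + C_2 \|\varphi\|^2, \qquad \varphi \in D(T),
\]
proved by commutator manipulations (expand $Q^2$ via $(Q+F)^2$, use $[F_j,[F_j,H_{f,m}]]$ bounded to show $\|(Q+F)^2\varphi\|^2 + \|H_{f,m}\varphi\|^2 \le \|(\tfrac12(Q+F)^2 + H_{f,m})\varphi\|^2 + C\|\varphi\|^2$). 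This inequality shows $T(e)$ is closed on $D(T)$ for all $e$. Then a W\"ust-type continuation argument (Proposition~\ref{prop:wuest}): the family $\kappa \mapsto T(\kappa)$ is holomorphic of type (A) on the open set where it is closed, self-adjoint at $\kappa=0$, and symmetric for real $\kappa$, so self-adjointness propagates along $[0,e]$. This mechanism needs no smallness of the perturbation and is what you are missing.
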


Versions of this theorem have been shown in   \cite{H06},  \cite{HL08} and  \cite{LMS06}.
Since we could not find  the precise version, which we need, in the literature,   we shall provide  a short proof of Theorem \ref{hyp:op0} in Appendix \ref{appessself}. The proof  follows closely a proof given in \cite{HH08}.
Moreover, we shallv use  the following result to prove the main theorem. 
\begin{theorem} \label{hyp:op1}
Let  $\rho \in  L^2(\R^3;(|k| + |k|^{-1}) dk)  $ with $ \rho = \rho(- \cdot)$. Let $e \in \R$ and $m > 0$ and
suppose  \eqref{eq:eineq} holds.
If   $|\xi| \leq 1$, then   $E_m(\xi)$ is an eigenvalue of   $H_m(\xi)$  isolated from the
essential  spectrum.
\end{theorem}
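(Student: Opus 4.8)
The goal is to show that for a massive photon field ($m>0$) and $|\xi|\le 1$, the bottom of the spectrum $E_m(\xi)$ of $H_m(\xi)$ is an isolated eigenvalue. I would organize the proof into two parts: first an HVZ-type statement identifying the bottom of the essential spectrum, then an existence argument for the ground state.

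\textbf{Step 1: Localization of the essential spectrum.} The plan is to establish that $\inf\sigma_{\mathrm{ess}}(H_m(\xi)) \ge E_m(\xi) + m$ (or at least $E_m(\xi)+\text{(something positive)}$), so that $E_m(\xi)$, if it is in the spectrum at all, is isolated. Since the photon dispersion $\omega_m$ is bounded below by $m>0$, adding one soft photon costs at least energy $m$. The standard way to make this rigorous is to use partition-of-unity (Ruelle--Simon / Dereziński--Gérard style) arguments on Fock space: write $\one = j_0^2 + j_\infty^2$ with $j_0$ localizing photons in a bounded region of momentum space and $j_\infty$ at infinity, use the energy inequality \eqref{eq:eineq} to control the interaction term when photons escape, and conclude that states in the continuous spectrum are, up to small errors, tensor products of a state near $E_m(\xi-k)$ with a free photon of momentum $k$. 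Then \eqref{eq:eineq}, $E_m(\xi - k)\ge E_m(0)$, is precisely what prevents the essential spectrum from dipping down; more carefully one gets $\inf\sigma_{\mathrm{ess}}(H_m(\xi)) \ge \inf_{k}\bigl(E_m(\xi-k) + \omega_m(k)\bigr) \ge E_m(\xi) + m$ after using \eqref{eq:eineq} once more with the pair $(\xi-k, \xi)$ — this is where the hypothesis $|\xi|\le 1$ and the assumed inequality enter. The self-adjointness on the fixed domain $D(\tfrac12 P_f^2 + H_{f,m})$ from Theorem \ref{hyp:op0} is needed so that all these commutator computations are justified on a common core.

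\textbf{Step 2: Existence of a ground state below the essential spectrum.} Once a gap is known to separate $E_m(\xi)$ from $\sigma_{\mathrm{ess}}$, I would produce an actual eigenvector by a standard compactness/minimizing-sequence argument. Take $\psi_n$ with $\|\psi_n\|=1$ and $H_m(\xi)\psi_n \to E_m(\xi)\psi_n$ (a Weyl sequence for the infimum). The quadratic-form bounds coming from self-adjointness on $D(\tfrac12 P_f^2 + H_{f,m})$ give uniform bounds on $\|H_{f,m}\psi_n\|$ and $\|P_f^2\psi_n\|$, hence on the photon number $\|N^{1/2}\psi_n\|$ since $H_{f,m}\ge mN$. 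A pull-through / commutator formula for $a_\lambda(k)\psi_n$ in terms of $(H_m(\xi) - E_m(\xi) + \omega_m(k))^{-1}$ applied to the commutator $[H_m(\xi), a_\lambda(k)]\psi_n$ yields a pointwise bound of the form $\|a_\lambda(k)\psi_n\| \le C|k|^{-1}\|\rho(k)\|\cdot(\dots)$ with an $L^2$-integrable right-hand side, using that $\omega_m(k)\ge m$ keeps the resolvent bounded. These bounds provide the compactness (tightness in photon number plus equicontinuity in the $k$-variables) needed to extract a strongly convergent subsequence $\psi_n \to \psi$ with $\|\psi\|=1$; then $\psi$ is the sought eigenvector and $E_m(\xi)\in\sigma_p(H_m(\xi))$. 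The assumption $\rho=\rho(-\cdot)$ in the hypothesis is a mild symmetry used to make the estimates on the interaction symmetric in $k\mapsto -k$; note that for $m>0$ the infrared factor $|k|^{-1}$ in the form factor is harmless, which is exactly why the massive case is the easy one — the delicate infrared analysis is deferred to the $m\to 0$ limit elsewhere in the paper.

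\textbf{Main obstacle.} The technical heart is Step 1, the HVZ/geometric localization estimate: one must carefully handle the non-locality of $A$ and $B$ in momentum space and show that the cross terms produced by the partition of unity are genuinely small (in norm relative to $H_m(\xi)+1$), using the energy inequality \eqref{eq:eineq} in an essential way rather than as a black box. The existence argument of Step 2 is by now routine for massive models, but the pull-through bounds must be stated with explicit dependence so that they survive; getting a clean $k$-pointwise estimate on $\|a_\lambda(k)\psi_n\|$ that is uniform in $n$ is the part that requires care.
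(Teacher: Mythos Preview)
Your Step~1 correctly identifies the HVZ-type geometric localization that the paper uses (the $\Gamma(j)$ construction with a smooth partition of unity and the extended Hamiltonian $\widetilde{H}_m(\xi)$), leading to
\[
\inf\sigma_{\mathrm{ess}}(H_m(\xi))\ \ge\ E_m(\xi)+\Delta_m(\xi),\qquad
\Delta_m(\xi):=\inf_{k}\bigl(E_m(\xi-k)-E_m(\xi)+\omega_m(k)\bigr).
\]
However, your argument that $\Delta_m(\xi)\ge m$ ``after using \eqref{eq:eineq} once more with the pair $(\xi-k,\xi)$'' is not valid: inequality \eqref{eq:eineq} only asserts $E_m(\,\cdot\,)\ge E_m(0)$, not $E_m(\xi-k)\ge E_m(\xi)$ for $\xi\neq 0$. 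The paper obtains $\Delta_m(\xi)>0$ for $|\xi|\le 1$ by a different route: it uses that $\xi\mapsto \tfrac12\xi^2-E_m(\xi)$ is convex (as a supremum of affine functions), combines this with the parity $E_m(-\xi)=E_m(\xi)$ to deduce $E_m(\xi)\le E_m(0)+\tfrac12\xi^2$, and then invokes a convex-function lemma to get explicit two-sided bounds on $E_m(\xi-k)-E_m(\xi)$. The hypothesis $\rho=\rho(-\cdot)$ is not a ``mild symmetry'' for the interaction estimates as you suggest; it is exactly what yields the unitary equivalence $H_m(\xi)\cong H_m(-\xi)$ and hence the parity of $E_m$, without which the convexity argument fails.

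Your Step~2 is unnecessary and reflects a misunderstanding of what remains to be shown. Since $E_m(\xi)=\inf\sigma(H_m(\xi))$ lies in the (closed) spectrum, once you know $E_m(\xi)<\inf\sigma_{\mathrm{ess}}(H_m(\xi))$ it automatically belongs to the discrete spectrum, i.e., it is an isolated eigenvalue of finite multiplicity. No minimizing-sequence, compactness, or pull-through argument is needed here; those tools appear in the paper only for the genuinely delicate massless limit $m\downarrow 0$.
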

Versions of this theorem have been shown in the literature  \cite{F74,F73,SPOHN04,GriLieLos:01,LMS06}.
We could not find in the literature  the precise version, which we need,  so we  provide  a  proof of Theorem \ref{hyp:op1} in Appendix \ref{appexreg}.

Let us now outline the strategy of the proof of Theorem  \ref{thm:main1}, which  follows closely ideas  given  in \cite{GriLieLos:01}.
For $m > 0$   it follows from  Theorem \ref{hyp:op1} that $E_m(0)$ is an eigenvalue  of $H_m(0)$.
Henceforth  let  $\psi_m$ denote  a normalized  eigenvector of $H_m(0)$
with eigenvalue $E_m(0)$.  We will show in  Proposition \ref{eq:propenegyconv2}, below,  that  $(\psi_m)_{m > 0}$ is a minimizing family for $H(0)$ as $m$ tends to zero, i.e.,
\begin{equation} \label{eq:quadform0}
0 \leq   \langle{ \psi_m ,  ( H(0) - E(0) ) \psi_m }\rangle \to 0  \quad ( m \downarrow 0 ) .
\end{equation}
Finally we shall use a compactness argument, based on two infrared bounds, stated in  Lemmas \ref{lem:first} and \ref{lem:second}, to show that there exists a strongly convergent subsequence $(\psi_{m_j})_{j \in \N}$  which
converges to a nonzero vector, say $\psi_0$. Using lower semicontinuity of nonnegative quadratic forms \cite{Sim:77}  (or alternatively the spectral theorem and  Fatou's Lemma),
it will then follow from \eqref{eq:quadform0}
that
\begin{eqnarray*}
 0 \leq \inn{  \psi_0 , ( H(0)  - E(0) ) \psi_0 }  \leq \liminf_{i \to \infty} \inn{  \psi_{m_i} , ( H(0) - E(0) ) \psi_{m_i} } = 0 ,
\end{eqnarray*}
i.e., that $\psi_0$ is a ground state of $H(0)$.

\begin{remark} {\rm We note that in contrast to \cite{GriLieLos:01, LMS06} the infrared bounds which we obtain
have   stronger  infrared singularities. 
 Therefore it is harder to prove compactness. This difficulty will be  addressed in  Lemma~\ref{thm:compact} below.}
\end{remark}

\subsection{Ground State Properties for  massive Photons}

Throughout  this section we assume that  $\rho \in L^2(\R^3;(|k| + |k|^{-2}) dk)    $.
 We will use the notation $N = d \Gamma(1)$.

\begin{proposition} \label{eq:propenegyconv}
We have  $E_m(0)  \geq E(0)$ and
\begin{eqnarray} \label{eq:enegineq1}  E(0) = \lim_{m \downarrow  0} E_m(0) \; . \end{eqnarray}
\end{proposition}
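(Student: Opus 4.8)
The plan is to prove the two inequalities $E_m(0) \geq E(0)$ and $\limsup_{m\downarrow 0} E_m(0) \leq E(0)$ separately, together with monotonicity-type control that forces the limit to exist and equal $E(0)$.

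\textbf{Step 1: The easy inequality $E_m(0) \geq E(0)$.} I would observe that $H_m(0) - H(0) = H_{f,m} - H_f = d\Gamma(\omega_m - \omega)$, and since $\omega_m(\lambda,k) = \sqrt{m^2 + k^2} \geq |k| = \omega(\lambda,k)$ pointwise, the operator $d\Gamma(\omega_m - \omega)$ is nonnegative. Hence $H_m(0) \geq H(0)$ as quadratic forms on the common core $D(P_f^2)\cap D(H_f)$ (which is also a form core), and taking infima of spectra gives $E_m(0) \geq E(0)$. Moreover $m \mapsto \omega_m$ is pointwise monotone increasing, so $m\mapsto E_m(0)$ is monotone nondecreasing; in particular $\lim_{m\downarrow 0} E_m(0)$ exists and is $\geq E(0)$.

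\textbf{Step 2: The upper bound $\limsup_{m\downarrow 0} E_m(0) \leq E(0)$.} Here I would use a trial-vector argument. Fix $\varepsilon > 0$ and pick a normalized $\varphi \in D(P_f^2)\cap D(H_f)$ with $\langle \varphi, H(0)\varphi\rangle \leq E(0) + \varepsilon$; by density one may take $\varphi$ with finitely many particles and smooth, compactly supported $n$-photon components, so that $\langle \varphi, N^2 \varphi\rangle < \infty$ and $\langle\varphi, H_f\varphi\rangle<\infty$. Then
\[
E_m(0) \leq \langle \varphi, H_m(0)\varphi\rangle = \langle\varphi, H(0)\varphi\rangle + \langle\varphi, d\Gamma(\omega_m - \omega)\varphi\rangle.
\]
The point is that $0 \leq \omega_m(\lambda,k) - \omega(\lambda,k) = \sqrt{m^2+k^2} - |k| \leq m$ for all $k$, so $d\Gamma(\omega_m-\omega) \leq m\, N$ as quadratic forms. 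Thus $\langle\varphi, d\Gamma(\omega_m-\omega)\varphi\rangle \leq m\langle\varphi, N\varphi\rangle \to 0$ as $m\downarrow 0$. Hence $\limsup_{m\downarrow 0} E_m(0) \leq E(0) + \varepsilon$, and since $\varepsilon>0$ is arbitrary, $\limsup_{m\downarrow 0} E_m(0) \leq E(0)$. Combined with Step 1 this yields \eqref{eq:enegineq1}.

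\textbf{Main obstacle.} There is no deep obstacle here — this is a soft perturbation-of-the-dispersion-relation argument. The one point requiring a little care is that $H(0)$ and $H_m(0)$ are self-adjoint on the \emph{same} domain $D(\tfrac12 P_f^2 + H_{f,m}) = D(\tfrac12 P_f^2 + H_f)$ (the last equality because $H_f$ and $H_{f,m}$ are relatively bounded with respect to each other with relative bound zero, in fact $|H_{f,m} - H_f| \leq mN \leq m(H_f + 1)$ is not quite right since $N$ is not $H_f$-bounded — so instead I would restrict the trial-vector argument in Step 2 to the dense set $D_{\mathcal{S}}$ of finite-particle Schwartz vectors, on which all quantities are manifestly finite and the form manipulations are unambiguous, and use the variational characterization $E_m(0) = \inf_{\|\varphi\|=1,\ \varphi\in\text{form core}} \langle\varphi, H_m(0)\varphi\rangle$). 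Ensuring $D_{\mathcal S}$ is a form core for both operators — which follows from Theorem \ref{hyp:op0} since it is a core for $\tfrac12 P_f^2 + H_{f,m}$ and the interaction terms are infinitesimally form-bounded — is the only technical item to nail down.
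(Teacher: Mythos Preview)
Your proposal is correct and follows essentially the same route as the paper's proof: both use the pointwise inequality $\omega \leq \omega_m$ to get $H(0) \leq H_m(0)$ and monotonicity in $m$ for the lower bound, and then a trial-vector argument with $H_m(0) \leq H(0) + mN$ on a vector with finite number expectation (using that $D(N)\cap D(P_f^2+H_f)$, or equivalently $D_{\mathcal S}$, is a core via Theorem~\ref{hyp:op0}) for the upper bound. The domain discussion in your ``Main obstacle'' paragraph is exactly the care the paper takes implicitly by invoking Theorem~\ref{hyp:op0}.
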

\begin{proof}
For $0 \leq m' \leq m$ we have  $\omega \leq \omega_{m'} \leq \omega_{m}$
and hence   $H(0) \leq H_{m'}(0) \leq H_{m}(0)$.
It follows that   $E_m(0)$ is monotonically decreasing as $m$ tends to zero and $E(0) \leq E_m(0)$.
This implies the existence of the limit and 
\begin{equation} \label{eq:emconv}
 \lim_{m \downarrow 0} E_m(0)   \geq E(0)   .
\end{equation}
To show the opposite inequality we argue as follows. From Theorem  \ref{hyp:op0}
it follows that any core for $P_f^2 + H_f$ is a core for $H(0)$.
Thus for any $\epsilon
>0$, there exists a normalized vector $\phi \in D(N) \cap D(P_f^2 + H_f)$ such that
$$
\inn {\phi, H(0) \phi } \leq E(0) + \epsilon \; .
$$
On the other hand, since  $H_m(0) \leq H(0) + m N$, it follows that for any
$m \geq 0$,
$$
E_m(0) \leq \inn{\phi, H_m(0) \phi} \leq \inn{\phi, H(0) \phi } + m \inn{\phi , N
\phi} \leq E(0) + \epsilon + m \inn{\phi, N \phi} \; .
$$
Hence
\begin{equation} \label{eq:emconv2}
 \lim_{m \downarrow  0} E_m(0) \leq E(0) + \epsilon .
\end{equation}
 Since
$\epsilon$ is arbitrary,   \eqref{eq:enegineq1}  follows from \eqref{eq:emconv} and  \eqref{eq:emconv2}.
\end{proof}

Let us collect a basic inequality in the following lemma.

\begin{lemma} \label{cor:1}  Let $e \in \R$ and  $m \geq 0$, and  suppose   \eqref{eq:eineq} holds.
Then for all  $\xi \in \R^3$ we have
$$H_m(\xi)  - E_m(0) \geq  0 . $$
\end{lemma}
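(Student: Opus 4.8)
The plan is to reduce the asserted operator inequality to the scalar energy inequality \eqref{eq:eineq} by the spectral theorem. First I would recall that, by Theorem \ref{hyp:op0}, for each $\xi$ the operator $H_m(\xi)$ is self-adjoint on the natural domain of $\frac{1}{2} P_f^2 + H_{f,m}$, which in particular is independent of $\xi$; moreover it is semibounded, so that $E_m(\xi) := \inf \sigma(H_m(\xi))$ is a finite real number and $\sigma(H_m(\xi)) \subset [E_m(\xi), \infty)$. By the spectral theorem this gives the quadratic-form bound $\langle \psi, H_m(\xi) \psi \rangle \geq E_m(\xi) \|\psi\|^2$ for every $\psi$ in the domain of $H_m(\xi)$, i.e. $H_m(\xi) \geq E_m(\xi)$.

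Next I would simply invoke the hypothesis. For $m \in (0,m_0)$ the energy inequality \eqref{eq:eineq} states $E_m(\xi) \geq E_m(0)$ for all $\xi \in \R^3$, and combining this with the previous step yields $\langle \psi, H_m(\xi)\psi\rangle \geq E_m(0)\|\psi\|^2$ for all $\psi$ in the domain, which is exactly the claim $H_m(\xi) - E_m(0) \geq 0$.

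It remains to treat the borderline value $m = 0$, which is the only point requiring a small additional argument. Here I would first establish the scalar inequality $E(\xi) \geq E(0)$ by passing to the limit $m \downarrow 0$, exactly as in the proof of Proposition \ref{eq:propenegyconv}: from $\omega \leq \omega_m \leq \omega + m$ one gets $H(\xi) \leq H_m(\xi) \leq H(\xi) + mN$, hence $E(\xi) \leq E_m(\xi)$, while testing against a normalized vector in the core $D(N) \cap D(\frac{1}{2}P_f^2 + H_f)$ of $H(\xi)$ gives $\limsup_{m\downarrow 0} E_m(\xi) \leq E(\xi) + \epsilon$ for arbitrary $\epsilon > 0$; thus $E_m(\xi) \downarrow E(\xi)$ as $m \downarrow 0$. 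Applying this for $\xi$ and for $0$ and using \eqref{eq:eineq} on $(0,m_0)$ gives $E(\xi) = \lim_{m \downarrow 0} E_m(\xi) \geq \lim_{m\downarrow 0} E_m(0) = E(0)$, after which the first paragraph (now with $m = 0$) finishes the proof.

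I do not expect a genuine obstacle: the essential content is the tautology that the bottom of the spectrum of a self-adjoint operator bounds it from below in the form sense, together with the hypothesis \eqref{eq:eineq}. The only mild care needed is the $m=0$ limiting argument above and the observation, furnished by Theorem \ref{hyp:op0}, that all the $H_m(\xi)$ are self-adjoint on one common $\xi$-independent domain, so that the quadratic-form inequalities are unambiguous.
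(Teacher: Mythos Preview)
Your argument is correct and coincides with the paper's proof, which is the one-liner $H_m(\xi) \geq E_m(\xi) \geq E_m(0)$. The extra limiting argument you give for $m=0$ is unnecessary: in the lemma the phrase ``suppose \eqref{eq:eineq} holds'' is to be read as an assumption on the fixed $m \geq 0$ at hand, not as the existence of some interval $(0,m_0)$, so the case $m=0$ is already covered by the same two inequalities.
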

\begin{proof} Follows from $H_m(\xi) \geq E_m(\xi) \geq E_m(0)$.
\end{proof}

For later use we state the following Proposition.

\begin{proposition} \label{eq:propenegyconv2}
Assume   $\rho(-\cdot) = \rho$. Let $e \in \R$ and suppose there exists an $m_0 > 0$ such that
 \eqref{eq:eineq} holds for  all $m \in (0,m_0)$.
Then
\begin{equation} \label{eq:quadform}
0 \leq   \langle{ \psi_m ,  ( H(0) - E(0) ) \psi_m }\rangle \to 0  ,
\end{equation}
in the limit $m \downarrow 0$.
\end{proposition}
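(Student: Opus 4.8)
The plan is to decompose the quadratic form $\langle \psi_m, (H(0)-E(0))\psi_m\rangle$ by comparing $H(0)$ with $H_m(0)$ and using that $\psi_m$ is an eigenvector of the latter. Writing $H(0) = H_m(0) - (H_{f,m} - H_f) = H_m(0) - d\Gamma(\omega_m - \omega)$, and using $H_m(0)\psi_m = E_m(0)\psi_m$ together with $\|\psi_m\| = 1$, we get
\begin{align*}
0 \le \langle \psi_m, (H(0) - E(0))\psi_m\rangle = \big(E_m(0) - E(0)\big) - \langle \psi_m, d\Gamma(\omega_m - \omega)\psi_m\rangle .
\end{align*}
The lower bound $\ge 0$ is immediate from $H(0) \ge E(0)$. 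By Proposition~\ref{eq:propenegyconv}, $E_m(0) - E(0) \to 0$ as $m \downarrow 0$, so the first term vanishes in the limit. Since $0 \le \omega_m - \omega = \sqrt{m^2+k^2} - |k| \le m$ pointwise, we have $0 \le d\Gamma(\omega_m-\omega) \le mN$ as quadratic forms, so
\begin{align*}
0 \le \langle \psi_m, d\Gamma(\omega_m-\omega)\psi_m\rangle \le m\,\langle \psi_m, N\psi_m\rangle .
\end{align*}
Thus everything reduces to establishing a bound on $\langle \psi_m, N\psi_m\rangle$ that does not blow up faster than $o(m^{-1})$ as $m \downarrow 0$; in fact a uniform bound $\langle \psi_m, N\psi_m\rangle \le C$ with $C$ independent of $m \in (0,m_0)$ would suffice, and is what I would aim for.

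The main work, therefore, is the uniform photon-number bound, and this is where the hypothesis $\rho(-\cdot)=\rho$ and $\rho \in L^2(\R^3;(|k|+|k|^{-2})dk)$ enters. I would derive a pull-through formula for $a_\lambda(k)\psi_m$: applying $a_\lambda(k)$ to the eigenvalue equation $(H_m(0) - E_m(0))\psi_m = 0$ and commuting $a_\lambda(k)$ through $H_m(0)$ yields an identity of the schematic form
\begin{align*}
\big(H_m(0) - E_m(0) + \omega_m(k)\big)\, a_\lambda(k)\psi_m = -\, [a_\lambda(k), H_m(0)]_{\mathrm{rest}}\, \psi_m ,
\end{align*}
where the right-hand side is a first-order expression in the (shifted) field operators times $\psi_m$, with coefficients proportional to $\rho(k)/\sqrt{|k|}$ (and $k\wedge\varepsilon_\lambda(k)\rho(k)/\sqrt{|k|}$ from the $S\cdot B$ term). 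Using Lemma~\ref{cor:1} — specifically that $H_m(\xi) - E_m(0) \ge 0$ for the relevant shifted fiber, i.e.\ that $H_m(0) - E_m(0) + \omega_m(k) \ge \omega_m(k)$ after accounting for the momentum shift by $k$ — I can invert the operator on the left with norm at most $\omega_m(k)^{-1}$ (or better), obtaining a pointwise bound
\begin{align*}
\| a_\lambda(k)\psi_m \| \le \frac{C |\rho(k)|}{|k|^{1/2}\,\omega_m(k)} \Big( \|(H_{f}+1)^{1/2}\psi_m\| + \|P_f\psi_m\| + 1 \Big) .
\end{align*}
Then $\langle \psi_m, N\psi_m\rangle = \sum_\lambda \int \|a_\lambda(k)\psi_m\|^2\,dk$ is controlled by $\int |\rho(k)|^2 |k|^{-1}\omega_m(k)^{-2}\,dk \le \int |\rho(k)|^2 |k|^{-3}\,dk < \infty$ by the assumed integrability of $\rho$, times a bound on $\|(H_f+1)^{1/2}\psi_m\|$ and $\|P_f\psi_m\|$. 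The latter norms are controlled uniformly in $m$ because $E_m(0)$ is bounded (it decreases to $E(0)$), and $H_m(0) \ge c(P_f^2 + H_f) - C$ by the relative-boundedness estimates underlying Theorem~\ref{hyp:op0}, so $\langle\psi_m, (P_f^2 + H_f)\psi_m\rangle$ is uniformly bounded; combined with $H_{f,m} \ge H_f$ this gives uniform control of $\|H_f^{1/2}\psi_m\|$ and $\|P_f\psi_m\|$ as well.

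The hard part is getting the pull-through estimate clean enough that the $k$-integral converges: the naive bound produces an integrand $\sim |\rho(k)|^2|k|^{-1}\omega_m(k)^{-2}$, which near $k=0$ behaves like $|k|^{-3}$ (since $\omega_m(k)\ge|k|$ is the worst case, but also $\omega_m(k) \ge m$), and this is exactly integrable against $d^3k$ at the threshold of divergence — hence the precise weight $|k|^{-2}$ in the hypothesis on $\rho$ rather than the milder $|k|^{-1}$ used elsewhere. I would be careful that the symmetry assumption $\rho(-\cdot) = \rho$ is what makes the shifted operator $H_m(0)$ (after the $a_\lambda(k)$ pull-through shifts $P_f \to P_f + k$) comparable to a genuine fiber Hamiltonian $H_m$ at some momentum, so that Lemma~\ref{cor:1}/the energy inequality~\eqref{eq:eineq} applies and yields the resolvent bound uniformly. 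Assembling these pieces gives $\langle\psi_m, N\psi_m\rangle \le C$ uniformly in $m \in (0,m_0)$, and then the displayed estimate $\langle\psi_m,(H(0)-E(0))\psi_m\rangle \le (E_m(0)-E(0)) + mC \to 0$ completes the proof.
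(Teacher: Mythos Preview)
Your decomposition is correct, and you even write down the key fact $\langle\psi_m, d\Gamma(\omega_m-\omega)\psi_m\rangle \ge 0$ --- but then you fail to use it. From your own identity
\[
\langle\psi_m, (H(0)-E(0))\psi_m\rangle \;=\; \bigl(E_m(0)-E(0)\bigr) \;-\; \langle\psi_m, d\Gamma(\omega_m-\omega)\psi_m\rangle
\]
and the nonnegativity of the last term, you get immediately
\[
0 \;\le\; \langle\psi_m, (H(0)-E(0))\psi_m\rangle \;\le\; E_m(0)-E(0) \;\longrightarrow\; 0
\]
by Proposition~\ref{eq:propenegyconv}. This is precisely the paper's proof, in three lines: $H(0) \le H_m(0)$, hence the quadratic form is bounded above by $E_m(0)-E(0)$. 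No number bound, no pull-through, no integrability hypothesis on $\rho$ beyond what is needed to define the operators. Everything after your first displayed inequality is superfluous.

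Your longer route also contains a genuine gap. You claim
\[
\int |\rho(k)|^2\,|k|^{-1}\,\omega_m(k)^{-2}\,dk \;\le\; \int |\rho(k)|^2\,|k|^{-3}\,dk \;<\; \infty
\]
``by the assumed integrability of $\rho$'', invoking the weight $|k|^{-2}$. But $\rho \in L^2(\R^3;(|k|+|k|^{-2})dk)$ means only $\int|\rho|^2|k|^{-2}\,dk < \infty$; it does \emph{not} give $\int|\rho|^2|k|^{-3}\,dk < \infty$, and under Hypothesis~\ref{h:formfac1} that last integral is $4\pi\int_0^\Lambda r^{-1}\,dr = \infty$. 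Your phrase ``exactly integrable at the threshold of divergence'' is on the wrong side of the threshold. A more honest computation of $\int_{|k|\le\Lambda}|k|^{-1}\omega_m(k)^{-2}\,dk$ yields $O(|\log m|)$, so your scheme could be salvaged to give $\langle\psi_m, N\psi_m\rangle = O(|\log m|)$ and hence $m\langle\psi_m, N\psi_m\rangle \to 0$ --- but again, none of this is needed here. (The paper does later prove a \emph{uniform} number bound, Lemma~\ref{lem:first}, but via the sharper estimate Lemma~\ref{lem:estimate}(a), which relies on the second-order perturbation argument of Lemma~\ref{lem:1}; the naive resolvent bound $\|R_m(k)\|\le\omega_m(k)^{-1}$ that you use is not enough for uniformity.)
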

\begin{proof} Using that $H_m(0) \geq H(0)$ we find from Proposition \ref{eq:propenegyconv} that
\begin{align*}
0 & \leq   \langle{ \psi_m ,  ( H(0) - E(0) ) \psi_m \rangle} \\
&  \leq   \langle{ \psi_m ,  ( H_m(0) - E(0) ) \psi_m \rangle} \\
&  = E_m(0) - E(0) \to 0  ,   \quad ( m \downarrow 0 ) .
\end{align*}
\end{proof}

\subsection{Infrared Bounds}

Throghout this section we assume that   $\rho \in L^2(\R^3;(|k|+|k|^{-1}) dk)$ with $\rho = \rho(- \cdot)$.
To simplify the notation we  write
\begin{align*}
v   &:=  -  P_f + e A  \; , \\
 h_m  & := H_m(0) , \\
e_m  &:= E_m(0)  .
\end{align*}

\begin{lemma} \label{lem:1}  Let $e \in \R$ und $m > 0$, and suppose
  \eqref{eq:eineq} holds. Then
for  $i=1,2,3$, the vector
$v_{i} \psi_m \in \FF$ is orthogonal to
$\psi_m$ and
$$
0 \leq \inn{ v_{i} \psi_m , ( H_m(0) - E_m(0) )^{-1}   v_{i} \psi_m } \leq
\frac{1}{2}
$$
\end{lemma}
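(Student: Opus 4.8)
The plan is to exploit the variational characterization of $e_m = E_m(0)$ together with the fact that $\psi_m$ is an eigenvector. For the orthogonality statement, first I would observe that $H_m(0) = \tfrac12 v^2 + eS\cdot B + H_{f,m}$, and that $v_i = (-P_f + eA)_i$ is a symmetric operator with a definite parity under the reflection $k \mapsto -k$ on Fock space combined with the sign flip built into $A$ and $B$ (this is exactly where $\rho(-\cdot)=\rho$ is used). Concretely, let $\Theta$ be the antiunitary (or unitary) involution implementing $k \mapsto -k$; one checks $\Theta H_m(0)\Theta^{-1} = H_m(0)$ while $\Theta v_i \Theta^{-1} = -v_i$. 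Since $E_m(0)$ is simple (or at least, one may choose $\psi_m$ to be an eigenvector of $\Theta$), $\Theta\psi_m = \pm\psi_m$, so $\inn{v_i\psi_m,\psi_m} = \inn{\Theta v_i\psi_m,\Theta\psi_m} = -\inn{v_i\psi_m,\psi_m}$, forcing $\inn{v_i\psi_m,\psi_m}=0$. Alternatively, and more robustly if one does not want to invoke simplicity, one can get orthogonality directly from the second-order variational inequality below by noting that the first-order term must vanish.

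For the quantitative bound, the key idea is a second-order perturbation / IMS-type argument. Consider the shifted fiber Hamiltonian $H_m(\xi)$ for $\xi$ near $0$. Expanding, $H_m(\xi) = H_m(0) + \xi\cdot(\xi/2 - v)$... more precisely $H_m(\xi) - H_m(0) = -\xi\cdot v + \tfrac12|\xi|^2$ (using $v = -P_f + eA$, so $\tfrac12(\xi + v)^2 - \tfrac12 v^2 = \xi\cdot v + \tfrac12|\xi|^2$; adjust signs to match the paper's convention). By the energy inequality \eqref{eq:eineq}, $E_m(\xi) \geq E_m(0) = e_m$ for all $\xi$. Now apply second-order perturbation theory around the isolated simple eigenvalue $e_m$ of $h_m = H_m(0)$: writing $P = |\psi_m\rangle\langle\psi_m|$ and $R = (h_m - e_m)^{-1}\upharpoonright \ran(1-P)$, the eigenvalue $E_m(\xi)$ has the expansion $E_m(\xi) = e_m + \tfrac12|\xi|^2 - \sum_{i,j}\xi_i\xi_j\inn{v_i\psi_m, R\, v_j\psi_m} + o(|\xi|^2)$ (the first-order term vanishes by the orthogonality just established). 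Since $E_m(\xi) - e_m \geq 0$ for all small $\xi$, dividing by $|\xi|^2$ and taking $\xi \to 0$ along the $i$-th coordinate axis yields $\tfrac12 - \inn{v_i\psi_m, R\, v_i\psi_m} \geq 0$, which is exactly the claimed bound (note $R v_i\psi_m = (h_m - e_m)^{-1}v_i\psi_m$ makes sense on all of $\ran(1-P)$, and $v_i\psi_m \in \ran(1-P)$ by orthogonality). The lower bound $0 \leq \inn{v_i\psi_m,(h_m-e_m)^{-1}v_i\psi_m}$ is immediate since $h_m - e_m \geq 0$ and hence its inverse on the orthogonal complement of the kernel is a nonnegative operator.

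The main obstacle I anticipate is making the second-order perturbation expansion rigorous: one must know that $E_m(\xi)$ stays an isolated simple eigenvalue for $\xi$ in a neighborhood of $0$ (which follows from Theorem \ref{hyp:op1} applied at each $\xi$, giving that $E_m(\xi)$ is an isolated eigenvalue, plus a continuity/analyticity argument in $\xi$), and that the perturbation $-\xi\cdot v + \tfrac12|\xi|^2$ is controlled relative to $h_m$ so that analytic (or at least $C^2$) perturbation theory applies — here $v_i = (-P_f + eA)_i$ is $H_m(0)$-bounded with relative bound $0$ after taking a sufficiently small power, by the self-adjointness result Theorem \ref{hyp:op0} and standard $N_\tau$ estimates, so $\{H_m(\xi)\}$ is a holomorphic family of type (A) in $\xi$. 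Given that, Kato–Rellich perturbation theory provides the analytic expansion of the simple eigenvalue $E_m(\xi)$ with the stated second-order coefficient, and the inequality $E_m(\xi) \geq e_m$ does the rest. A cleaner alternative that sidesteps analyticity entirely: use the variational upper bound $E_m(\xi) \leq \inn{\phi, H_m(\xi)\phi}$ with the trial vector $\phi = \phi(\xi) := \big(\psi_m + \sum_i \xi_i R v_i\psi_m\big)/\|\cdot\|$, compute the quadratic form to second order in $\xi$, and combine with $E_m(\xi) \geq e_m$; this is elementary and avoids any regularity discussion of $\xi \mapsto E_m(\xi)$.
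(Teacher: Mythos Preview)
Your proposal is essentially the paper's own argument: one verifies that $\xi\mapsto H_m(\xi)=H_m(0)+\xi\cdot v+\tfrac12\xi^2$ is an analytic family of type (A), uses the energy inequality \eqref{eq:eineq} at first order to kill the linear term, and at second order to obtain $\tfrac12-\inn{v_i\psi_m,(h_m-e_m)^{-1}v_i\psi_m}\geq 0$. The one point the paper handles more carefully is possible degeneracy of $E_m(0)$: rather than assuming simplicity, it derives $P_m(0)\,v\,P_m(0)=0$ and the second-order bound as an \emph{operator inequality} on $\ran P_m(0)$, so that both conclusions hold for \emph{every} normalized ground state $\psi_m$ (in particular $v_i\psi_m$ is orthogonal to the whole eigenspace, which is what is needed for $(h_m-e_m)^{-1}v_i\psi_m$ to be well defined). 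Your parity argument with $\Theta$ only yields $\inn{\psi_m,v_i\psi_m}=0$ for $\Theta$-eigenvectors; your first-order alternative does give $P_m(0)\,v\,P_m(0)=0$ once you polarize, and your trial-vector variational argument is a legitimate substitute for the analytic expansion (with the sign choice $\phi=\psi_m-\xi_i R\,v_i\psi_m$).
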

\begin{proof}
For the proof we use analytic perturbation theory. For details we refer the reader to  \cite{kato,ressim:ana}.
On $\C^3$ the operator valued function $\zeta \mapsto
H_m(\zeta) = H_m(0)  + \zeta \cdot v +
\frac{1}{2}\zeta^2$ is an analytic family of type (A) in each component.
By  Theorem  \ref{hyp:op1}   we
know that  $E_m(0)$ is an eigenvalue isolated from the essential spectrum.
Let $P_m(0)$ be the
orthogonal projection onto the finite dimensional eigenspace of $E_m(0)$. By first order
perturbation theory 
 and the energy inequality \eqref{eq:eineq} we
conclude that $P_m(0) v P_m(0) = 0$. By second order perturbation theory
and an application of the energy inequality \eqref{eq:eineq}
we conclude that  for 
 $i=1,2,3$,
$$
0 \leq  \partial_{\xi_i} \partial_{\xi_i}  E_m(\xi)
\Big\vert_{\xi=0} \leq \left( 1  - 2 P_m(0) v_{i} ( H_m(0) - E_m(0))^{-1}
v_{i} P_m(0) \right) \; ,
$$
where the second inequality is understood  as  an operator inequality on $\ran P_m(0)$.  The second inequality   in fact   holds, since  $E_m(\xi)$ is defined as an  infimum.
This shows the claim.
\end{proof}

For notational convenience we set
\begin{eqnarray*}
R_m(k) := ( H_m(-k) + \omega_m(k) - e_m   )^{-1} \; , \quad k \in \R^3 ,
\end{eqnarray*}
which by Lemma   \ref{cor:1} is well defined and satisfies
\begin{equation} \label{eq:boundonres}
\| R_m(k) \| \leq \omega_m(k)^{-1} .
\end{equation}
The formula of the next Lemma is known as
a so called  Pull-through relation.
\begin{lemma}  \label{lem:pull}  Let $e \in \R$ and $m > 0$, and suppose
 \eqref{eq:eineq} holds.
Then for  a.e. $k$,
$a_\lambda(k) \psi_m \in \FF$ and
\begin{eqnarray} \label{eq:pull}
a_{\lambda}(k) \psi_m =\frac{e \rho(k)}{\sqrt{2|k|}} R_m(k) ( -
\varepsilon_{\lambda}(k) \cdot v + S  \cdot ( i k \wedge \varepsilon_\lambda(k)) )\psi_m  .
\end{eqnarray}
\end{lemma}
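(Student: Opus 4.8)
The plan is to apply the annihilation operator $a_\lambda(k)$ to the eigenvalue equation $H_m(0)\psi_m = e_m\psi_m$ and exploit the commutation relations between $a_\lambda(k)$ and the building blocks of $H_m(0)$. First I would record the basic commutators on a suitable dense domain (say $D_{\mathcal S}$, then extended by closure): one has $[a_\lambda(k), H_{f,m}] = \omega_m(k)\, a_\lambda(k)$, and $[a_\lambda(k), P_f] = k\, a_\lambda(k)$ (componentwise), while $a_\lambda(k)$ commutes with $S$ and with the position-independent multiplication structure. For the interaction, since $A = A(0)$ and $B = B(0)$ are linear in creation/annihilation operators with the kernels displayed in \eqref{eq:44}, one computes $[a_\lambda(k), A] = \frac{\rho(k)}{\sqrt{2|k|}}\,\varepsilon_\lambda(k)$ and $[a_\lambda(k), B] = \frac{\rho(k)}{\sqrt{2|k|}}\, i k\wedge\varepsilon_\lambda(k)$ (using $\rho = \rho(-\cdot)$ so that the exponentials $e^{\pm ik\cdot x}$ evaluated at $x=0$ play no role and the kernel symmetry is clean); note $a_\lambda(k)$ annihilates the creation part only after picking up the one-particle kernel, and kills the annihilation part.

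Next, combining these, I would commute $a_\lambda(k)$ through $v = -P_f + eA$ to get $[a_\lambda(k), v] = -k\,a_\lambda(k) + \frac{e\rho(k)}{\sqrt{2|k|}}\varepsilon_\lambda(k)$, and hence through $\tfrac12 v^2 = \tfrac12\sum_j v_j v_j$ via the Leibniz rule, $[a_\lambda(k), \tfrac12 v^2] = \tfrac12\{[a_\lambda(k),v]\cdot v + v\cdot[a_\lambda(k),v]\}$. Adding the field-energy contribution $\omega_m(k) a_\lambda(k)$ and the spin term $e\,[a_\lambda(k), S\cdot B] = \frac{e\rho(k)}{\sqrt{2|k|}} S\cdot(ik\wedge\varepsilon_\lambda(k))$, one finds after collecting terms that
\begin{align*}
a_\lambda(k) H_m(0) = \Bigl( H_m(0) + \omega_m(k) - k\cdot v + \tfrac12 k^2 \Bigr) a_\lambda(k) + \frac{e\rho(k)}{\sqrt{2|k|}}\bigl( -\varepsilon_\lambda(k)\cdot v + S\cdot(ik\wedge\varepsilon_\lambda(k)) \bigr),
\end{align*}
and I would identify the operator in parentheses as $H_m(-k) + \omega_m(k) - e_m$ shifted by $e_m$, using $H_m(-k) = \tfrac12(-k - P_f + eA)^2 + eS\cdot B + H_{f,m} = H_m(0) - k\cdot v + \tfrac12 k^2$ (this is exactly the same algebraic identity used in Lemma~\ref{lem:1} with $\zeta = -k$). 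Applying both sides to $\psi_m$ and using $H_m(0)\psi_m = e_m\psi_m$ yields $\bigl(H_m(-k) + \omega_m(k) - e_m\bigr) a_\lambda(k)\psi_m = -\frac{e\rho(k)}{\sqrt{2|k|}}\bigl(-\varepsilon_\lambda(k)\cdot v + S\cdot(ik\wedge\varepsilon_\lambda(k))\bigr)\psi_m$; inverting with $R_m(k)$, which is bounded by \eqref{eq:boundonres} and well defined by Lemma~\ref{cor:1}, gives \eqref{eq:pull} up to the sign, which I would fix by being careful that the left-hand commutator contributes $a_\lambda(k)H_m(0) - H_m(0)a_\lambda(k)$ with the stated ordering.

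The main obstacle is the rigorous justification rather than the formal computation: $a_\lambda(k)\psi_m$ is only defined for a.e. $k$ and a priori need not lie in $\FF$, and $\psi_m$ lies in $D(P_f^2)\cap D(H_{f,m})$ (by Theorem~\ref{hyp:op0}) but not in $D_{\mathcal S}$, so the commutator manipulations must be done in a weak/quadratic-form sense and then upgraded. The standard route I would follow: test the eigenvalue equation against $a_\lambda^*(k)\phi$ for $\phi \in D_{\mathcal S}$, move $a_\lambda^*(k)$ back onto $\psi_m$ using the form identities relating $a^*(f)$ and the pointwise $a_\lambda^*(k)$, obtain the identity \eqref{eq:pull} integrated against test functions, and conclude it holds for a.e.\ $k$ with $a_\lambda(k)\psi_m\in\FF$ because the right-hand side is manifestly an $\FF$-valued function of $k$ that is square-integrable in $k$ (here one uses $\rho\in L^2((|k|+|k|^{-1})dk)$ to control $\rho(k)/\sqrt{2|k|}$ against $\|R_m(k)\|\le\omega_m(k)^{-1}$ and the relative boundedness of $v_i$ and $S\cdot(k\wedge\varepsilon_\lambda)$ with respect to $H_m(0)$). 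Domain issues around the unbounded $v\cdot$ terms can be handled by first working with $H_{f,m}$-bounded cutoffs (e.g.\ spectral projections $1_{H_{f,m}\le n}$) and passing to the limit, exactly as in the references \cite{GriLieLos:01,LMS06} cited for earlier versions of this lemma.
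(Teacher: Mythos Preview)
Your approach is essentially the one the paper takes: commute $a_\lambda(k)$ through $H_m(0)$ using the CCR, recognize the resulting operator as $H_m(-k)+\omega_m(k)$, apply both sides to $\psi_m$, and invert with $R_m(k)$. (Note a sign slip: $[a_\lambda(k),B]=-\tfrac{\rho(k)}{\sqrt{2|k|}}\,ik\wedge\varepsilon_\lambda(k)$ because the $a^*$ part of $B$ carries a minus sign; with this correction your intermediate identity agrees with the paper's and the final formula comes out with the right sign, no ad hoc fix needed.)

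Where the paper differs from your plan is in the rigorous justification, and its route is shorter than the weak-form/cutoff scheme you sketch. First, the paper obtains $a_\lambda(k)\psi_m\in\FF$ for a.e.\ $k$ directly from $\psi_m\in D(H_f)$ (Theorem~\ref{hyp:op0}) via the standard identity $\inn{\psi_m,H_f\psi_m}=\sum_\lambda\int |k|\,\|a_\lambda(k)\psi_m\|^2\,dk<\infty$; there is no need to read this off from the right-hand side of \eqref{eq:pull}. Second, to justify applying $R_m(k)$, the paper shows $a_\lambda(k)\psi_m\in D(H_m(-k))$ by a duality argument: the linear functional $\eta\mapsto\inn{a_\lambda(k)\psi_m,H_m(-k)\eta}$ on $D_{\mathcal S}$ is bounded (using the already-established identity \eqref{eq:pullident2} to rewrite it), and essential self-adjointness of $H_m(-k)$ on $D_{\mathcal S}$ then gives membership in the domain. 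This bypasses spectral cutoffs entirely.
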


\begin{proof}
The proof is similar to \cite[Lemma 6.1]{chenfroehlich}, see also  \cite[Lemma 7]{haslerherbst1}.
By Theorem  \ref{hyp:op0}, we know that  $\psi_m \in D(H_f)$. Hence using the standard expression
of the free field energy in terms of annihilation operators
$$
\sum_{n=0}^\infty \sum_{\lambda} \int |k|  \left\| ( a_{\lambda}(k)
\psi_m )_{(n)} \right\|^2 dk  = \inn{ \psi_m , H_f \psi_m } < \infty \; ,
$$
which implies $a_\lambda(k) \psi_m \in \FF$ for  a.e. $k$. We write
\begin{equation} \label{eq:defoffieldfunc}
f_A(k,\lambda) := \frac{\rho(k)}{\sqrt{2| k|}} \varepsilon_\lambda(k), \qquad f_B(k,\lambda) := - i k \wedge f_A(k,\lambda) . 
\end{equation}
By the canonical  commutation relations of creation and annihlation operators  we find
\muu
{
a_\lambda(k) H_m &= \left( H_m(-k) + \omega_m(k) \right) a_\lambda(k)  + e f_A(k) \cdot  v + e S \cdot f_B(k) ,
}
which holds for a.e. $k$ as  an identity  of  
  measurable functions.
Applying this to $\psi_m$ and using that $H_m \psi_m =  e_m  \psi_m $ we find  for  a.e. $k$
 \begin{align}\label{eq:pullident2}
((H_m( - k) - e_m + \omega_m(k)) a_\lambda(k) \psi_m = - (e f_A(k)  \cdot v + e S \cdot  f_B(k) )\psi_m.
\end{align}
This implies that $a_\lambda(k) \psi_m$ is in the domain of $H_m(-k)$ for a.e. $k$. Indeed,  the map
$$l: D_\mathcal{S} \to \C, \quad \eta \mapsto \inn{ a_\lambda(k) \psi_m ,   H_m(-k)   \eta  }$$ is bounded, since
in view of   \eqref{eq:pullident2} we can write
 $$l(\eta) =  \inn{   - (e f_A(k) \cdot  v + e S \cdot  f_B(k) )\psi_m + ( e_m - \omega_m(k)) a_\lambda(k) \psi_m , \eta } . $$  Now it follows that $a_\lambda(k) \psi_m \in D(H_m(-k))$, because
   $H_m(-k)$ is essentially self-adjoint
on  $D_\mathcal{S}$,   in view of  Theorem   \ref{hyp:op0}.
Hence the lemma follows  by applying $R_m(k)$ to \eqref{eq:pullident2}.
\end{proof}

\begin{lemma} \label{lem:estimate}
  Suppose there exists an $m_0 > 0$ such that   \eqref{eq:eineq} holds for all $m \in (0,m_0)$.
Then there exists a constant $C$ such 
that for all  $e \in \R$,  $m \in (0,m_0)$,  $k \in \R^3$, and $j=1,2,3$,
\begin{itemize}
\item[(a)] $ \| R_m(k) v_{j} \psi_m \| \leq C \omega_m^{-1/2}(k) ( 1 +
|k|^{1/2} ) , $
\item[(b)]
$ \|R_m(k) v_j|_{D(|P_{f}| ) \cap D(H_f^{1/2})}\| \leq C \omega_m(k)^{-1}  (1 + |k|)  . $
\end{itemize}
\end{lemma}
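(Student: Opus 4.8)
The plan is to control $R_m(k) v_j \psi_m$ by writing $v_j \psi_m = v_j P_m(0)\psi_m$ and exploiting the two resolvent estimates we have at our disposal. First, the crude bound \eqref{eq:boundonres}, $\|R_m(k)\| \le \omega_m(k)^{-1}$, gives a factor $\omega_m(k)^{-1}$ but needs $\|v_j\psi_m\|$ to be finite; this follows from $\psi_m \in D(P_f^2)\cap D(H_f)$ (Theorem \ref{hyp:op0}) together with the relative boundedness of $A$, and crucially from Lemma \ref{lem:1}, which says $\langle v_j\psi_m, (h_m - e_m)^{-1} v_j\psi_m\rangle \le \tfrac12$. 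The point is that this last inequality is a bound on $v_j\psi_m$ measured in the $(h_m-e_m)^{-1/2}$ norm, i.e. on $(h_m - e_m)^{-1/2} v_j \psi_m$. So I would aim to interpolate between (i) the plain operator-norm bound $\|R_m(k)\| \le \omega_m(k)^{-1}$ applied to $v_j\psi_m$, which costs $\omega_m(k)^{-1}\|v_j\psi_m\|$ but requires $\|v_j\psi_m\|$ — manageable since $v_j\psi_m \in \FF$ — and (ii) a bound that trades one power of $\omega_m(k)^{-1}$ for the square-root quantity $\|(h_m-e_m)^{-1/2}v_j\psi_m\| \le 1/\sqrt2$.

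For (a), the key observation is the operator comparison $H_m(-k) + \omega_m(k) - e_m \ge c\,(H_m(0) - e_m) + c\,\omega_m(k)$ for a suitable constant $c>0$, which should follow from Lemma \ref{cor:1} (so $H_m(-k)-e_m \ge 0$), the fact that $H_m(-k) = H_m(0) - k\cdot v + \tfrac12 k^2$, and absorbing the linear term $-k\cdot v$ into $\epsilon H_m(0) + \epsilon^{-1} v^2$-type estimates using that $v^2 \le C(H_f + 1) \le C(H_m(0) - e_m + 1)$; one then checks the $\omega_m(k)$ piece survives. Granting such a two-sided comparison, one gets
\begin{equation*}
\| R_m(k) v_j\psi_m \|^2 \le C\,\langle v_j\psi_m, (H_m(-k)+\omega_m(k)-e_m)^{-1} v_j\psi_m\rangle \le \frac{C}{c}\,\langle v_j\psi_m, (h_m - e_m + \omega_m(k))^{-1} v_j\psi_m\rangle.
\end{equation*}
Split $(h_m - e_m + \omega_m(k))^{-1}$ spectrally: on the range where $h_m - e_m \le \omega_m(k)$ bound the resolvent by $\omega_m(k)^{-1}$ and pick up the finite norm of the $(h_m - e_m)$-bounded vector $v_j\psi_m$; on the complementary range bound it by $(h_m-e_m)^{-1}$ and use Lemma \ref{lem:1}. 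The first piece gives $\omega_m(k)^{-1}\cdot(\text{const})$, the second gives $1/2$; taken together with $\omega_m(k) \ge |k|$ one extracts $C\omega_m(k)^{-1/2}(1 + |k|^{1/2})$ after noting $\omega_m(k)^{-1} \le \omega_m(k)^{-1/2}\cdot(\text{something like }|k|^{-1/2}$ is false, so the honest bookkeeping is $\omega_m(k)^{-1/2}(1 + |k|^{1/2})$ dominating $\min(\omega_m(k)^{-1},\,1)$ up to constants), which is exactly (a). For (b), I would instead use $\|R_m(k)\| \le \omega_m(k)^{-1}$ directly from \eqref{eq:boundonres} and bound $\|v_j\eta\|$ for $\eta \in D(|P_f|)\cap D(H_f^{1/2})$ by $C(1+|k|)\|\eta\|_{D(|P_f|)\cap D(H_f^{1/2})}$ — here the $(1+|k|)$ is spurious in (b) as stated since $v_j$ does not depend on $k$; most likely the intended object is $R_m(k)v_j$ with $v_j$ replaced by the $k$-shifted field operators appearing after a pull-through, or the bound simply reads $C\omega_m(k)^{-1}(1+|k|)$ with the $(1+|k|)$ absorbing norms of $A$ in the graph norm. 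In any case (b) is the soft part: combine \eqref{eq:boundonres} with the standard $N_\tau$-type estimates $\|A\eta\| \le C\|(H_f+1)^{1/2}\eta\|$ to get a uniform bound of the operator $R_m(k)v_j$ from the form domain of $\tfrac12 P_f^2 + H_f$ into $\FF$.

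The main obstacle is the two-sided comparison $H_m(-k) + \omega_m(k) - e_m \gtrsim (H_m(0) - e_m) + \omega_m(k)$ uniformly in $m \in (0,m_0)$, in $e \in \R$ and in $k$. The uniformity in the coupling constant $e$ is the delicate point: the cross term $-k\cdot v = -k\cdot(-P_f + eA)$ carries an explicit $e$, and naive Cauchy–Schwarz would produce an $e$-dependent constant in front of $\omega_m(k)$. The resolution should use that $v^2 = (\xi - P_f + eA)^2\big|_{\xi = 0}$ appears in $2H_m(0)$ itself, so $-k\cdot v + \tfrac12 k^2 = \tfrac12(k - v)^2 - \tfrac12 v^2$ and hence $H_m(-k) = H_m(0) + \tfrac12(k-v)^2 - \tfrac12 v^2 = \tfrac12(k-v)^2 + eS\cdot B + H_{f,m}$, a form that makes positivity and the extraction of $\omega_m(k)$ transparent and $e$-uniform once $eS\cdot B + H_{f,m} \ge -\text{const}$ is used (with the constant controlled because $E_m(0) = e_m$ enters). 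Nailing down this algebra and the associated form-domain manipulations is where the real work lies; the rest is the interpolation bookkeeping sketched above.
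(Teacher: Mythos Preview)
Your instincts are right --- use Lemma~\ref{lem:1} to control $(h_m-e_m)^{-1/2}v_j\psi_m$ and compare $H_m(-k)$ with $H_m(0)$ --- but two steps in your sketch do not go through as written.

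First, the displayed inequality $\|R_m(k)v_j\psi_m\|^2 \le C\langle v_j\psi_m, R_m(k)v_j\psi_m\rangle$ is equivalent to $R_m(k)^2 \le C R_m(k)$, i.e.\ $\|R_m(k)\|\le C$, which fails uniformly since $\|R_m(k)\|\sim\omega_m(k)^{-1}$. The paper sidesteps this with the product estimate
\[
\|R_m(k)v_j\psi_m\| \le \|R_m(k)(h_m-e_m)^{1/2}\|\,\|(h_m-e_m)^{-1/2}v_j\psi_m\|,
\]
bounding the second factor by $1/\sqrt{2}$ via Lemma~\ref{lem:1} and the first via $\|R_m(k)(h_m-e_m)^{1/2}\|^2 = \sup_{\|\psi\|=1}\langle\psi, R_m(k)(h_m-e_m)R_m(k)\psi\rangle$. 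Your spectral splitting of $(h_m-e_m+\omega_m(k))^{-1}$ is then unnecessary (and its ``first piece'' would in any case require a uniform bound on $\|v_j\psi_m\|$, which you do not have).

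Second, the two-sided comparison $H_m(-k)+\omega_m(k)-e_m \gtrsim (h_m-e_m)+\omega_m(k)$ with a constant independent of $k$ does not hold. What the paper actually proves, from the identity $h_m-e_m = \tfrac{1}{2}(v-k)^2 + H_{f,m}-e_m + \tfrac{1}{2}k^2 + (v-k)\cdot k$ together with $(v-k)\cdot k \le \tfrac{1}{2}|k| + \tfrac{1}{2}|k|(v-k)^2$ and $\tfrac{1}{2}(v-k)^2 \le H_m(-k)$, is
\[
h_m - e_m \;\le\; (1+|k|)\bigl(H_m(-k)+\omega_m(k)-e_m\bigr) + \tfrac{1}{2}(|k|+k^2) + |k|e_m .
\]
Sandwiching this between two copies of $R_m(k)$ and using $\|R_m(k)\|\le\omega_m(k)^{-1}$ is what produces the $(1+|k|^{1/2})$ in (a); the bound is not obtained by any $\min(\omega_m(k)^{-1},1)$ bookkeeping. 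For (b), the $(1+|k|)$ is likewise \emph{not} spurious: the statement is about the $L^2\!\to\! L^2$ operator norm of $R_m(k)v_j$ on the indicated dense domain, and one gets it by inserting $v_j^2 \le v^2 \le 2h_m$ into the same inequality and conjugating by $R_m(k)$. Your alternative $\|R_m(k)\|\,\|v_j\eta\|$ with the graph norm on the domain would prove a different (and for the application in Lemma~\ref{lem:second} insufficient) statement.
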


\begin{proof}
(a) We start with the product inequality
\begin{eqnarray} \label{eq:prod}
\| R_m(k) v_{j} \psi_m \| \leq  \| R_m(k)
( h_m - e_m )^{1/2} \| \| ( h_m - e_m )^{-1/2}  v_{j} \psi_m \| .
\end{eqnarray}
By Lemma \ref{lem:1} the second factor on the right hand side can be estimated using
$$
\| (h_m - e_m)^{-1/2}  v_{j} \psi_m \| \leq 1/\sqrt{2} \; .
$$
It remains to estimate the first factor in \eqref{eq:prod}.
First we use the trivial identity
\muu
{
h_m - e_m &= \frac{1}{2} (v-k)^2 + H_{f,m} - e_m + \frac{1}{2}k^2 + (v-k)k .
}
Estimating the last term using
$$
 (v-k)k \leq \frac{1}{2} |k| + \frac{1}{2} |k| (v-k)^2 ,
$$
we find
with
$
 \frac{1}{2} (v-k)^2 \leq H_m(-k)
$ that
\begin{align} \label{eq:hmemineq2}
h_m - e_m
&\leq (1 + | k |) (H_m(-k) + \omega_m(k) - e_m) + \frac{1}{2}(|k| +k^2) + | k | e_m .
\end{align}
Now  multiplying this inequality on both sides  with the self-adjoint operator $R_m(k)$ we obtain
\muu
{
R_m(k) (h_m - e_m) R_m(k) \leq  (1+ | k |) R_m(k) + \left( \frac{1}{2} ( | k |  + k^2) +  |  k | e_m \right) R_m(k)^2.
}
Using this, we estimate
\begin{align}
& \| R_m(k) (h_m - e_m)^{1/2}  \|^2 \\
&  \qquad \leq  \|  (h_m - e_m)^{1/2} R_m(k) \|^2 \nonumber \\
& \qquad = \sup_{\|\phi\| = 1} \inn{\phi,R_m(k)(h_m - e_m) R_m(k) \phi}  \nonumber   \\
& \qquad \leq (1 + | k |) \| R_m(k)\| +  \left( \frac{1}{2} ( |k | + k^2) + |k | e_m \right) \| R_m(k)\|^2.
\label{eq:last_term}
\end{align}
   Using \eqref{eq:boundonres}  and that  $e_m$ is bounded for  $0 \leq m \leq m_0$, we see that  \eqref{eq:last_term} inserted
    in \eqref{eq:prod}  implies the bound stated in (a). \\
(b)  Using that $v^2 \leq h_m$ we see from  \eqref{eq:hmemineq2} that
\begin{align*}
v^2
&\leq (1 + | k |) (H_m(-k) + \omega_m(k) - e_m) + \frac{1}{2}(|k| +k^2) + (1 + | k |) e_m .
\end{align*}
This implies
\begin{align}
\nonumber  & \left\|R_m(k) v_i \big|_{D(|P_{f}|) \cap D(H_f^{1/2})} \right\|^2 \\ \nonumber   &\qquad\qquad\leq (1 + | k |) \| R_m(k) \| +  \left( \frac{1}{2} (  |k | + k^2) + ( | k | +1) e_m \right) \| R_m(k)\|^2 \\
&\qquad\qquad\leq C (1+|k|^2) \omega_m(k)^{-2},
\end{align}
hence (b) follows.
\end{proof}

Estimating the expression in Lemma   \ref{lem:pull}  using Lemma  \ref{lem:estimate} (a)  we obtain  the
next lemma.

\begin{lemma} \label{lem:first}  Suppose Hypothesis A   holds.
Suppose there exists an $m_0 > 0$ such that   \eqref{eq:eineq} holds for all $m \in (0,m_0)$.
Then  there exists a finite
constant $C$
 such that for all $e \in \R$, $m\in (0, m_0)$ we have
$$
\| (a_\lambda(k) \psi_m ) \| \leq \frac{C | e \rho(k)|}{|k|} \quad   \ {\it for \
a.e.} \ \ k \; .
$$
\end{lemma}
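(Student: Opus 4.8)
The plan is to combine the Pull-through relation of Lemma~\ref{lem:pull} with the operator bound of Lemma~\ref{lem:estimate}(a), using Hypothesis~A to control the form factor. Recall from Lemma~\ref{lem:pull} that for a.e.\ $k$
\begin{align*}
a_\lambda(k) \psi_m = \frac{e \rho(k)}{\sqrt{2|k|}} R_m(k)\bigl( - \varepsilon_\lambda(k) \cdot v + S \cdot (i k \wedge \varepsilon_\lambda(k)) \bigr)\psi_m .
\end{align*}
So I would take norms and bound
\begin{align*}
\| a_\lambda(k) \psi_m \| \leq \frac{|e \rho(k)|}{\sqrt{2|k|}} \Bigl( \sum_{j=1}^3 \| R_m(k) v_j \psi_m \| + \| S\| \, |k| \sum_{j=1}^3 \| R_m(k) v_j \psi_m \| \Bigr),
\end{align*}
where I have expanded the vectors $\varepsilon_\lambda(k)$ and $ik \wedge \varepsilon_\lambda(k)$ into components; $\varepsilon_\lambda(k)$ is a unit vector and $|ik\wedge\varepsilon_\lambda(k)| \le |k|$, and $S$ is a bounded (in fact finite-dimensional) matrix, so these prefactors are harmless. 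Here I am being slightly loose: the term $S \cdot (ik\wedge\varepsilon_\lambda(k))\psi_m$ does not literally contain $v_j$, but $\|R_m(k)\| \le \omega_m(k)^{-1} \le |k|^{-1}$ by \eqref{eq:boundonres}, so $\|R_m(k) S\cdot(ik\wedge\varepsilon_\lambda(k))\psi_m\| \le \|S\|\,|k|\,\omega_m(k)^{-1} \le C$, which is even better behaved than the $v$-term; I would handle it separately rather than folding it in.

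Next I insert Lemma~\ref{lem:estimate}(a), namely $\| R_m(k) v_j \psi_m \| \leq C \omega_m(k)^{-1/2}(1 + |k|^{1/2})$. This gives, for a.e.\ $k$ and all $m \in (0,m_0)$, $e\in\R$,
\begin{align*}
\| a_\lambda(k)\psi_m \| \leq |e\rho(k)| \, C' \, \frac{(1+|k|)\,\omega_m(k)^{-1/2}(1+|k|^{1/2})}{|k|^{1/2}},
\end{align*}
with $C'$ independent of $m$ and $e$ (absorbing $\|S\|$ and numerical factors). Now I use that under Hypothesis~A the form factor $\rho$ is supported in $|k| \le \Lambda$: on that support $1 + |k| \le 1 + \Lambda$ and $1 + |k|^{1/2} \le 1 + \Lambda^{1/2}$ are bounded, and $\omega_m(k)^{-1/2} \le |k|^{-1/2}$. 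Hence on $\supp\rho$ the whole fraction is bounded by a constant times $|k|^{-1}$, and off $\supp\rho$ the bound is trivially $0 = |e\rho(k)|\cdot(\cdots)$. Choosing $C$ to absorb $C'(1+\Lambda)(1+\Lambda^{1/2})$ yields exactly
\begin{align*}
\| a_\lambda(k)\psi_m \| \leq \frac{C\,|e\rho(k)|}{|k|} \quad \text{for a.e.\ } k,
\end{align*}
as claimed.

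I do not anticipate a genuine obstacle here; the lemma is essentially a bookkeeping consequence of the two preceding lemmas plus the compact-support hypothesis. The one point requiring a little care is making sure the constant $C$ is genuinely uniform in both $m \in (0,m_0)$ and $e \in \R$: uniformity in $m$ comes from Lemma~\ref{lem:estimate}(a) (whose constant is already uniform in $m$, via boundedness of $e_m$ on $[0,m_0]$ established in the proof of Lemma~\ref{lem:estimate}) and from $\omega_m(k)^{-1/2} \le |k|^{-1/2}$; uniformity in $e$ is automatic since the single explicit factor of $e$ has been pulled out in front as $|e\rho(k)|$ and Lemma~\ref{lem:estimate}(a) is itself stated uniformly in $e$. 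A secondary minor point is the a.e.\ qualifier: the Pull-through identity holds only for a.e.\ $k$, so the final bound inherits the same qualification, which is exactly what is stated.
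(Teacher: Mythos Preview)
Your proposal is correct and follows exactly the approach indicated by the paper, which proves this lemma in a single sentence (``Estimating the expression in Lemma~\ref{lem:pull} using Lemma~\ref{lem:estimate}~(a) we obtain the next lemma''); you have simply supplied the routine details. The one cosmetic wrinkle is that your displayed intermediate bound with the factor $(1+|k|)$ comes from the version before you corrected the handling of the spin term, but since you immediately fix that and the spin contribution is indeed bounded by $\|S\|\,|k|\,\omega_m(k)^{-1}\le \mathrm{const.}$, the final estimate is unaffected.
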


We still need an estimate  involving   derivatives. To this end,  we shall henceforth make  an explicit choice
of the polarization vectors.
After a possible unitary
transformation on Fock space we can always achieve that the
polarization vectors are given by
\begin{eqnarray} \label{eq:cheps}
\varepsilon_{1}(k) = \frac{(k_2,-k_1, 0 ) }{\sqrt{k_1^2 + k_2^2}}
\quad \mathrm{and} \quad \varepsilon_{2}(k) = \frac{k}{|k|} \wedge
\varepsilon_1(k) \; .
\end{eqnarray}

\begin{lemma} \label{lem:second} Suppose Hypothesis A  holds.   Suppose there exists an $m_0 > 0$ such that   \eqref{eq:eineq} holds for all $m \in (0,m_0)$.
Then  there exists a finite constant $C$ such that for all $e \in \R$,  $m \in (0,m_0)$, and a.e. $k$ with $|k| < \Lambda$
$$
\| \nabla_k ( a_\lambda(k) \psi_m ) \| \leq \frac{C
| e \rho(k)|}{|k|\sqrt{k_1^2 + k_2^2}} \;  .
$$
\end{lemma}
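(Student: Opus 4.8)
The starting point is the pull-through formula of Lemma~\ref{lem:pull},
$$
a_\lambda(k)\psi_m = \frac{e\rho(k)}{\sqrt{2|k|}} R_m(k)\bigl(-\varepsilon_\lambda(k)\cdot v + S\cdot(ik\wedge\varepsilon_\lambda(k))\bigr)\psi_m,
$$
which we differentiate in $k$ by the Leibniz rule. With Hypothesis~A the form factor $\rho(k) = (2\pi)^{-3/2}\chi_\Lambda(|k|)$ is constant on $\{|k|<\Lambda\}$, so $\nabla_k\rho(k)=0$ there and only three groups of terms survive: (i) $\nabla_k$ hitting the explicit prefactor $|k|^{-1/2}$, producing a factor $O(|k|^{-1})$ times the original expression, which is already controlled by Lemma~\ref{lem:first}; (ii) $\nabla_k$ hitting the polarization vectors $\varepsilon_\lambda(k)$ and the factor $k\wedge\varepsilon_\lambda(k)$; (iii) $\nabla_k$ hitting the resolvent $R_m(k) = (H_m(-k)+\omega_m(k)-e_m)^{-1}$.

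For (ii) I would use the explicit choice \eqref{eq:cheps}: a direct computation gives $\|\nabla_k\varepsilon_\lambda(k)\| \le C/\sqrt{k_1^2+k_2^2}$ and $\|\nabla_k(k\wedge\varepsilon_\lambda(k))\| \le C(1 + |k|/\sqrt{k_1^2+k_2^2})$, the singularity on the $k_3$-axis being exactly the source of the $\sqrt{k_1^2+k_2^2}$ in the denominator of the claimed bound. These terms then carry a factor $R_m(k)\,v_j\psi_m$ or $R_m(k)(S\cdot(\cdot))\psi_m$; the first is estimated by Lemma~\ref{lem:estimate}(a) by $C\omega_m(k)^{-1/2}(1+|k|^{1/2})$, and the second, since $S$ is bounded, by $C\|R_m(k)\|\,\|\psi_m\| \le C\omega_m(k)^{-1}$. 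Combined with the prefactor $|e\rho(k)|\,|k|^{-1/2}$ and using $\omega_m(k)^{-1}\le|k|^{-1}$, every such term is bounded by $C|e\rho(k)|/(|k|\sqrt{k_1^2+k_2^2})$ for $|k|<\Lambda$, as required.

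For (iii) I would compute $\partial_{k_\ell} R_m(k) = -R_m(k)\bigl(\partial_{k_\ell}H_m(-k) + \partial_{k_\ell}\omega_m(k)\bigr)R_m(k)$. Here $\partial_{k_\ell}\omega_m(k) = k_\ell/\omega_m(k)$ is bounded by $1$, and $\partial_{k_\ell}H_m(-k) = -(\xi - P_f + eA)_\ell\big|_{\xi=-k} = -(-k - P_f + eA)_\ell = v_\ell + k_\ell$ in the fiber variable, i.e. a component of $v$ shifted by $k$. Thus the derivative of the resolvent produces, schematically, $R_m(k)(v_\ell + k_\ell + O(1))R_m(k)\bigl(-\varepsilon_\lambda(k)\cdot v + S\cdot(ik\wedge\varepsilon_\lambda(k))\bigr)\psi_m$. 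I bound the outer $\|R_m(k)(v_\ell+\cdots)$ using Lemma~\ref{lem:estimate}(b), which gives $C\omega_m(k)^{-1}(1+|k|)$ on $D(|P_f|)\cap D(H_f^{1/2})$; the inner operator maps $\psi_m$ into that domain (this is part of the content of Lemma~\ref{lem:estimate}(a), since $R_m(k)v_j\psi_m$ and $R_m(k)(S\cdot(\cdot))\psi_m$ lie in $D(h_m^{1/2})\subset D(|P_f|)\cap D(H_f^{1/2})$). Multiplying these together with the prefactor again yields a bound of the form $C|e\rho(k)|/(|k|\sqrt{k_1^2+k_2^2})$ after using $\omega_m(k)^{-1}\le|k|^{-1}$ and collecting powers of $|k|$ — the worst power being exactly $|k|^{-1}$, matching the claim.

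The main obstacle is the bookkeeping in step~(iii): one must verify carefully that the intermediate vector $R_m(k)(-\varepsilon_\lambda(k)\cdot v + S\cdot(ik\wedge\varepsilon_\lambda(k)))\psi_m$ really lies in $D(|P_f|)\cap D(H_f^{1/2})$ so that Lemma~\ref{lem:estimate}(b) applies, and that the commutator identity $\partial_{k_\ell}H_m(-k) = v_\ell + k_\ell$ (together with boundedness of $S$ and $\partial\omega_m$) does not introduce additional unbounded pieces. Once this regularity is in place, all remaining estimates are uniform in $m\in(0,m_0)$ because the only $m$-dependence is through $\omega_m(k)^{-1}\le|k|^{-1}$ and through $e_m$, which is bounded on $(0,m_0)$; and uniform in $e\in\R$ because every occurrence of $e$ is exhibited explicitly in the prefactor.
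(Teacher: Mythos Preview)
Your proposal is correct and follows essentially the same route as the paper: differentiate the pull-through formula, handle the three groups of terms with Lemma~\ref{lem:estimate} and \eqref{eq:boundonres}, and use the explicit polarization vectors \eqref{eq:cheps} to isolate the $\sqrt{k_1^2+k_2^2}$ singularity. One small slip: $\partial_{k_\ell}H_m(-k)=k_\ell-v_\ell$ (not $v_\ell+k_\ell$), matching the paper's $\nabla_k R_m(k)=-R_m(k)\bigl((k-v)+\nabla_k\omega_m(k)\bigr)R_m(k)$, but this is harmless for the norm estimates; also note that the uniformity in $e$ is not because ``every occurrence of $e$ is in the prefactor'' (the objects $v$, $R_m(k)$, $e_m$, $\psi_m$ all depend on $e$) but rather because Lemma~\ref{lem:estimate} already delivers $e$-independent constants.
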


\begin{proof}
We want to calculate  the derivative of the expression in Equation
\eqref{eq:pull}.  Calculating the derivative with respect  to  the operator norm topology,
we find by means of  the resolvent identity,  that
$$
\nabla_k R_m(k)  = -R_m(k)( (k-v) + \nabla_k \omega_m(k))R_m(k)   .
$$
Using this we can calculate the  derivative for   $0 < |k| < \Lambda$
\begin{align*}
& \nabla_k ( a_{\lambda}(k) \psi_m ) \\
 & =  -\frac{1}{2}\frac{e\rho(k)}{\sqrt{ 2 | k |} k^2} k   R_m(k)\left( -\varepsilon_\lambda(k) \cdot v  +  S \cdot (k \wedge \varepsilon_\lambda(k) \right) \psi_m   \\ 
& \quad -\frac{e\rho(k)}{\sqrt{2 \abs k}}  R_m(k) ( k - v + \nabla_k
\omega_m(k) ) R_m(k) \left( -\varepsilon_\lambda(k) \cdot  v  +  S \cdot (k \wedge \varepsilon_\lambda(k) \right)  ) \psi_m   \\ 
& \quad    + \frac{e\rho(k)}{\sqrt {2  | k|}}   R_m(k)\left( -\nabla_k \varepsilon_\lambda(k) \cdot v  +   \nabla_k (S \cdot(k \wedge \varepsilon_\lambda(k)) \right) \psi_m  .
\end{align*}
We now use that by Lemma \ref{lem:estimate},  there exists a
constant $C$ such that $\|\omega_m(k)R_m(k) v \psi_m \| \leq C
\omega_m(k)^{1/2} ( 1 + |k| )$. Using this together with   \eqref{eq:boundonres}  the first  and second
term are estimated from above by a finite constant times $|k|^{-2}$.
To estimate the third term we note that by the choice
\eqref{eq:cheps}, we have for $\lambda=1,2$,
$$
\left| \frac{\partial}{\partial k_l }  \varepsilon_\lambda(k) \right| \leq \frac{{\rm const.}}{\sqrt{
k_1^2 + k_2^2  }} \; , \quad l=1,2,3. 
$$
\end{proof}

\begin{lemma}[$y$-Bound] \label{thm:compact} Suppose Hypothesis A holds.   
 Suppose there exists an $m_0 > 0$ such that   \eqref{eq:eineq} holds for all $m \in (0,m_0)$, and let $e \in \R$. 
Then there exists  a constant $C$, and a $\delta
>0$ such that  for all $m \in (0,m_0)$ and all  $n \in \N$,
$$
\sum_{\lambda_1, \ldots, \lambda_n} \int   \sum_{i=1}^n n^{-1}
|y_i|^\delta \| (\widehat{\psi}_m)_{(n)}(\lambda_1,{y}_1, \ldots ,
\lambda_n , y_n) \|^2 \, d y_1  \ldots  dy_n  \leq C \; ,
$$
where $(\widehat{\psi}_m)_{(n)}$ denotes the Fourier transform of
the $n$-photon component of $\psi_m$.
\end{lemma}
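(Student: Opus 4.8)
The plan is to reduce the assertion to the statement that, for some $\delta\in(0,1)$, the $\HH$-valued functions $k\mapsto g_{m,\lambda}(k):=a_\lambda(k)\psi_m$ lie in the fractional Sobolev space $H^{\delta/2}(\R^3;\HH)$ uniformly in $m$. As a first step I would use the total symmetry of $(\widehat\psi_m)_{(n)}$ in its $n$ arguments to replace the average $n^{-1}\sum_{i=1}^n|y_i|^\delta$ by the single weight $|y_1|^\delta$, then apply Plancherel in the variables $y_2,\ldots,y_n$ and invoke the definition \eqref{eq:defofannihdir} in the form $(a_{\lambda_1}(k_1)\psi_m)_{(n-1)}=\sqrt{n}\,(\psi_m)_{(n)}(\lambda_1,k_1,\cdot)$ together with $n^{-1}\le1$. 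This yields, for every $n\in\N$,
\[
\sum_{\lambda_1,\ldots,\lambda_n}\int\Big(\sum_{i=1}^n n^{-1}|y_i|^\delta\Big)\big\|(\widehat\psi_m)_{(n)}(\lambda_1,y_1,\ldots,\lambda_n,y_n)\big\|^2\,dy_1\cdots dy_n\;\le\;\sum_{\lambda=1,2}\int|y|^\delta\,\big\|\widehat{g_{m,\lambda}}(y)\big\|_\HH^2\,dy,
\]
where $\widehat{g_{m,\lambda}}$ is the Fourier transform of $k\mapsto g_{m,\lambda}(k)$. Since the right-hand side no longer depends on $n$, it remains only to bound it by a constant uniform in $m\in(0,m_0)$.

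Next I would use the elementary identity — a consequence of Plancherel together with $\int_{\R^3}|z|^{-3-\delta}(1-\cos(y\cdot z))\,dz=c_\delta|y|^\delta$ — that for $0<\delta<2$, any Hilbert space $\mathcal K$ and $g\in L^2(\R^3;\mathcal K)$,
\[
\int|y|^\delta\,\|\widehat g(y)\|_{\mathcal K}^2\,dy\;=\;c_\delta\int\!\!\int\frac{\|g(k)-g(k')\|_{\mathcal K}^2}{|k-k'|^{3+\delta}}\,dk\,dk',
\]
both sides being finite or infinite simultaneously. Applied with $g=g_{m,\lambda}$ and $\mathcal K=\HH$, the task becomes to bound the double integral uniformly in $m$. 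Here I would feed in the two pointwise estimates already established: by Hypothesis~A and Lemma~\ref{lem:first}, $\|g_{m,\lambda}(k)\|_\HH\le C|k|^{-1}$ with $g_{m,\lambda}(k)=0$ for $|k|>\Lambda$; and by Lemma~\ref{lem:second}, $k\mapsto g_{m,\lambda}(k)$ is differentiable on $\{0<|k|<\Lambda\}$ with $\|\nabla_k g_{m,\lambda}(k)\|_\HH\le C\,|k|^{-1}(k_1^2+k_2^2)^{-1/2}$, the constant $C$ depending only on $e$ and $\Lambda$.

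I would then partition $\R^3\times\R^3$ into three regions according to the sizes of $|k-k'|$, $\min(|k|,|k'|)$ and $\rho_k:=(k_1^2+k_2^2)^{1/2}$: (A) $|k-k'|>\tfrac12\min(|k|,|k'|)$; (B) $|k-k'|\le\tfrac12\min(|k|,|k'|)$ and $|k-k'|>\tfrac12\rho_k$; (C) $|k-k'|\le\tfrac12\rho_k$ and $|k-k'|\le\tfrac12\min(|k|,|k'|)$. On (A) and (B) one uses $\|g(k)-g(k')\|^2\le2\|g(k)\|^2+2\|g(k')\|^2$ and the size bound, so that after the $k'$-integration the respective contributions are controlled by constant multiples of $\int_{|k|\le\Lambda}|k|^{-2-\delta}\,dk$ and $\int_{|k|\le\Lambda}|k|^{-2}\rho_k^{-\delta}\,dk$. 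On (C) the whole segment joining $k$ to $k'$ lies in $\{|x|\ge\tfrac13|k|,\ \rho_x\ge\tfrac12\rho_k\}\subset\{0<|x|<\Lambda\}$, so the gradient bound gives $\|g(k)-g(k')\|\le 6C|k-k'|\,|k|^{-1}\rho_k^{-1}$, and integrating out $|k-k'|$ again leaves a multiple of $\int_{|k|\le\Lambda}|k|^{-2}\rho_k^{-\delta}\,dk$. A short computation in cylindrical coordinates shows that both of these integrals are finite exactly when $0<\delta<1$, so any such $\delta$ (say $\delta=\tfrac12$) gives the desired uniform bound; the discontinuity of $\chi_\Lambda$ across $|k|=\Lambda$ and the sub-regions in which only one of $k,k'$ lies in $\{|x|\le\Lambda\}$ enter only through the size estimate and cause no extra difficulty.

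The main obstacle — and the place where our infrared singularities are more severe than in \cite{GriLieLos:01,LMS06} — is the behaviour near the $k_3$-axis: the gradient bound of Lemma~\ref{lem:second} diverges like $\rho_k^{-1}$ there and fails to be locally square-integrable, so a naive mean-value estimate along arbitrary segments is unavailable. The remedy is exactly the three-way splitting above, in which near the axis one falls back on the bounded size estimate and the gradient bound is invoked only on segments keeping a definite distance from the axis, together with the elementary but decisive fact that $\int_{|k|\le\Lambda}|k|^{-2}\rho_k^{-\delta}\,dk<\infty$ for $\delta<1$.
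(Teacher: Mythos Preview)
Your proposal is correct and follows the same overall strategy as the paper: reduce the $y$-bound to a fractional Sobolev/Gagliardo seminorm estimate for $k\mapsto a_\lambda(k)\psi_m$, and control the latter using the size bound of Lemma~\ref{lem:first} together with the gradient bound of Lemma~\ref{lem:second}. The reduction step and the identity $\int|y|^\delta\|\widehat g(y)\|^2\,dy=c_\delta\iint\|g(k)-g(k')\|^2|k-k'|^{-3-\delta}\,dk\,dk'$ are exactly what the paper uses (there phrased as $\int|1-e^{-iay}|^2\|\widehat\psi_{(n)}(y)\|^2\,dy=\int\|\psi_{(n)}(k+a)-\psi_{(n)}(k)\|^2\,dk$ followed by integration in $a$ against $|a|^{-3-\delta/2}$).

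The genuine difference lies in how the increment $\|g(k)-g(k')\|$ is estimated. The paper fixes the shift $a=k'-k$ and, along the segment $t\mapsto k+ta$, bounds $|k+ta|\ge|\pi_a(k)|$ and $|\pi_3(k+ta)|\ge|\pi_{3,a}(k)|$ via projections orthogonal to $a$ (respectively to $\pi_3a$); this yields $\|g(k+a)-g(k)\|\le C\,|a|\,|\pi_a(k)|^{-1}|\pi_{3,a}(k)|^{-1}$, which is then \emph{interpolated} pointwise with the size bound through Young's inequality with a small parameter $\theta$. Your three-region split (A)/(B)/(C) replaces this interpolation by a hard partition of $(k,k')$-space: the gradient bound is used only when the whole segment keeps a definite distance from both the origin and the $k_3$-axis, and the size bound is used elsewhere. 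This is more elementary---no projections $\pi_a,\pi_{3,a}$, no Young interpolation---and in fact delivers any $\delta<1$, whereas the paper's argument gives a smaller (though perfectly adequate) exponent. Conversely, the paper's projection device is coordinate-free in the shift direction and packages the segment estimate into a single clean inequality. Both approaches isolate the same obstruction, namely the $\rho_k^{-1}$ blow-up of the gradient near the polarization axis, and both resolve it by falling back on the bounded size estimate there, exploiting that $\int_{|k|\le\Lambda}|k|^{-2}\rho_k^{-\delta}\,dk<\infty$ for $\delta<1$.
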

\begin{proof} 
 We drop the subscript $m$. Thus by  $\widehat{\psi}_{(n)}$ we denote the
Fourier transform of $\psi_{(n)}$ in all its $n$-components. We
define the functions
\begin{eqnarray*}
\psi_{(n)}(k) &:& (\lambda,  k_1, \lambda_1, \ldots , k_{n-1}, \lambda_{n-1} ) \mapsto \psi_{(n)}( k, \lambda ,
k_1, \lambda_1 ,  \ldots
, k_{n-1}, \lambda_{n-1}  )  \\
\widehat{\psi}_{(n)}(y) &:& (\lambda, y_1, \lambda_1,  \ldots , y_{n-1} , \lambda_{n-1}  ) \mapsto
\widehat{\psi}_{(n)}(y , \lambda ,y_1, \lambda_1 , \ldots , y_{n-1} , \lambda_{n-1} ) \; .
\end{eqnarray*}

\vspace{0.5cm}

\noindent \underline{Step 1:} There exists a  $\delta >0$ and a
constant $C$ such that for all $a \in \R^3$,
$$
\int | 1 - e^{-iay} |^2 \| \widehat{\psi}_{(n)}(y) \|^2 dy \leq
\left\{
\begin{array}{ll} C |a|^{\delta} \quad & {\rm if} \  |a| < \frac{1}{2} \Lambda , \\
C \quad & {\rm if} \  |a| \geq \frac{1}{2}  \Lambda  .  \end{array}
 \right.
$$
The claim follows easily for $|a|\geq \frac{1}{2} \Lambda$, since $\psi$ is a
normalized state in Fock space and $|1-e^{-iay}| \leq 2$. Now lets
consider the case $|a| < \frac{1}{2}  \Lambda$.  By the Fourier transform, we have the
identity
\begin{align}
&  \int | 1 - e^{-iay} |^2 \| \widehat{\psi}_{(n)}(y) \|^2 dy  \nonumber \\
& \quad = \int \|
\psi_{(n)}(k + a ) - \psi_n(k) \|^2 dk \nonumber \\
& \quad  =  \int_{|k| < \Lambda - |a|}   \|
\psi_{(n)}(k + a ) - \psi_n(k) \|^2 dk   +   \int_{\Lambda - |a| \leq |k| }   \|
\psi_{(n)}(k + a ) - \psi_n(k) \|^2 dk .  \label{eq:yboundfourier}
\end{align}
To estimate the second integral  we use  Lemma \ref{lem:first}  and observe that the integrand vanishes for $\abs k > \Lambda + \abs a$,
\begin{align}
\int_{\Lambda - \abs a \leq \abs k }  \|
\psi_{(n)}(k + a ) - \psi_n(k) \|^2 dk  &\leq   {\rm const. }  \int_{\Lambda - \abs a \leq \abs k \leq \Lambda + \abs a}  \left( \frac{1}{\abs{k+a}^2} + \frac{1}{\abs{k}^2} \right)dk \nonumber \\
 &\leq   {\rm const.}   \int_{\Lambda - 2 \abs a \leq \abs k \leq \Lambda + 2 \abs a} \frac{1}{\abs{k}^2} dk \nonumber \\
 &\leq   {\rm const.}   \abs a .  \label{eq:intofpsi1}
\end{align}
Next we estimate the first integral  and
assume  $\abs k < \Lambda - \abs a$.  Using Lemma \ref{lem:second}, we find
\begin{align}
\| \psi_{(n)}( k + a ) - \psi_{(n)}(k) \|  & =  \left\| \int_0^1
\left(
\frac{d}{dt} \psi_{(n)}(k + t a) \right) dt  \right\|  \nonumber \\
& \leq |a| \int_0^1 \| \nabla_k \psi_{(n)}(k + t a) \| dt  \nonumber \\
& \leq  {\rm const.}   |a| \int_0^1 \frac{ \rho(k + t a )}{ |
k + t a | | \pi_3( k + t a ) | } dt \; , \label{eq:intofpsi}
\end{align}
where $\pi_3$ denotes the projection in $\R^3$ along the 3-axis and
const. denotes a finite constant independent of $n$. Let $\pi_a$
denote the projection in $\R^3$ along the vector $a$ and let
$\pi_{3,a}$ denote the projection in the $(1,2)$-plane along $\pi_3
a$  (with convention that $\pi_{3,a} = \pi_3$,  if $\pi_3 a = 0$). We find  from \eqref{eq:intofpsi}
\begin{eqnarray}  \label{eq:ybound1}
\| \psi_{(n)}(k + a ) - \psi_{(n)}(k) \| \leq {\rm const.}  \frac{
|a| }{|\pi_a(k) | | \pi_{3,a}(k) |} \; .
\end{eqnarray}
On the other hand using Lemma \ref{lem:first} we obtain
\begin{eqnarray}  \label{eq:ybound2}
\| \psi_{(n)}(k + a) - \psi_{(n)}(k) \| \leq {\rm const.}   \left(
\frac{\rho(k+ a)}{|k+a| }  + \frac{\rho(k)}{|k|}
\right) \; .
\end{eqnarray}
Introducing Inequalities \eqref{eq:ybound1} and \eqref{eq:ybound2}
into the first  integral of  \eqref{eq:yboundfourier}, we find for any $\theta$
with $0 \leq \theta \leq 1$,
\begin{eqnarray*}
 \lefteqn{ \int_{
\abs k < \Lambda - \abs a} \| \psi_{(n)}(k + a ) - \psi_n(k) \|^2 dk  } \\
&&= {\rm const.}  |a|^{2 \theta} \int_{|k| < \Lambda - |a|}  \frac{1}{|\pi_a(k) |^{2\theta}
|\pi_{3,a}(k) |^{2 \theta} } \left( \frac{\rho(k+a)}{|k+a|}
+ \frac{\rho(k)}{|k|} \right)^{2(1-\theta)} \, dk
\end{eqnarray*}
Now we use Young's inequality: $b c \leq b^p/p + c^q/q$, whenever
$p,q
> 1$ and $p^{-1} + q^{-1} = 1$; and the convexity of $x \mapsto x^{2(1-\theta)q}$ on $\R_+$, for
$0 < \theta < 1/2$. Thus for   $0 < \theta < 1/2$,
\begin{align}
&  \int_{  \abs k < \Lambda - \abs a} \| \psi_{(n)}(k + a ) - \psi_n(k) \|^2 dk
\nonumber \\
& \quad \leq |a|^{2 \theta} {\rm const.}   \int_{|k| \leq \Lambda  }
\Bigg( \frac{1}{|\pi_a(k)|^{4 \theta p}} +
\frac{1}{|\pi_{3,a}(k)|^{4 \theta p}} \nonumber \\
& \quad \hspace{3cm} + \left[ \frac{1}{|k+a|}
\right]^{2 (1-\theta) q}+ \left[ \frac{1}{|k|}
\right]^{2 (1-\theta) q} \Bigg) dk \; . \label{eq:intofpsi2}
\end{align}
For any $q$ with $1<q \leq 3/2$, we can choose $\theta > 0$ sufficiently
small such that the right hand side is finite.
Inserting  \eqref{eq:intofpsi1}  und  \eqref{eq:intofpsi2}  into \eqref{eq:yboundfourier}
we obtain the desired estimate.

 \vspace{0.5cm}

\noindent \underline{Step 2:} Step 1 implies the statement of the
Lemma.

From Step 1 we know that there exists a finite constant $C$ such
that
$$
\int \frac{|1 - e^{-iay} |^2 \| \widehat{\psi}_{(n)}(y)
\|^2}{|a|^{\delta/2} } dy \frac{da}{|a|^3} \leq C   \; .
$$
After interchanging the order of integration  and a change of
integration variables  $b = |y| a $, we find
\begin{eqnarray*}
C  \geq \int \| \widehat{\psi}_{(n)}(y) \|^2 \int \frac{ | 1 -
e^{-iay} |^2}{|a|^{\delta/2} } \frac{da}{|a|^3} dy = \int \|
\widehat{\psi}_{(n)}(y) \|^2 |y|^{\delta/2} \underbrace{ \int \frac{
| 1 - e^{-iby/|y|} |^2}{|b|^{\delta/2} } \frac{db}{|b|^3}  }_{=: \ c}  dy
\; ,
\end{eqnarray*}
where $c$ is nonzero and does not depend on $y$.

\end{proof}

\subsection{Existence of the  Ground State}

\vspace{0.5cm}

\noindent {\it Proof of Theorem \ref{thm:main1}}. Fix a positive
$m_0$ such that
for all  $m \in (0,m_0)$ the  energy inequality  \eqref{eq:eineq} holds.

\vspace{0.5cm}  \noindent \underline{Step 1:} All $\psi_m$, with
$m_0 \geq m
> 0$, lie in a compact subspace  of the reduce Hilbert space $\HH$.

\vspace{0.5cm}

Let $T$ be  the self-adjoint operator  associated to
the nonnegative and closed quadratic form $q$   in $\HH$
defined by
$$
q(\phi)  :=  \inn{ \phi , N \phi }  + \sum_{n=1}^\infty n^{-3} \inn{
\widehat{\phi}_{(n)} , \sum_{i=1}^n |y_i|^{\delta}
\widehat{\phi}_{(n)}  } + \inn{ \phi , H_f   \phi }  \; ,
$$
for all $\phi \in D(q)$, the natural  form domain of $q$. We choose $\delta
> 0$ such that Lemma \ref{thm:compact} holds. By this and Lemma
\ref{lem:first}  and Proposition  \ref{eq:propenegyconv2}, there exists a finite $C$ such that for all $m$
with $0 < m <m_0$,
$$
\psi_m \in K := \{ \phi \in D( q) : \| \phi \|  \leq 1,  q(\phi)
\leq C \} \; .
$$
 The set $K$ is a compact subset of $\HH$, provided $T$ has compact resolvent
\cite[Theorem XIII.64]{ressim:ana}. Hence it remains to show that $T$ has compact
resolvent. The operator $T$ preserves the $n$-photon sectors. Let
$T_n$ denote the restriction of $T$ to the $n$-photon sector. From
Rellich's criterion  \cite[Theorem XIII.65]{ressim:ana}   it follows that $T_n$ has  compact resolvent.
Therefore $\mu_l(T_n) \to \infty$ as $l$ tends to infinity, where
$\mu_l$ denotes the $l$-th eigenvalue obtained by the min-max
principle. Moreover since $\mu_l(T_n) \geq n$ for all $l, n$, it
follows that $\mu_l(T) \to \infty$ as $l \to \infty$. Hence $T$ has
a compact resolvent.

\vspace{0.5cm}

\noindent \underline{Step 2:} There exists a nonzero vector $\psi_0$
such that $\inn{ \psi_0 , H(0) \psi_0 }  = \inf \sigma(H(0))$.

\vspace{0.5cm}

Here we use the argument outlined at the beginning of this section.
By Step 1, we know that  all $\psi_m$, with $m_0 \geq m
> 0$, lie in a compact subspace  of $\HH$. It follows that
there exists a subsequence $(\psi_{m_i})_{i \in \N}$, with $m_i \to
0$ as $i\to \infty$, which converges strongly to a normalized vector
$\psi_0$. By lower semicontinuity of non-negative quadratic forms we
see from Proposition  \ref{eq:propenegyconv2}  that
\begin{eqnarray*}
\lefteqn{ \inn{  \psi_0 , ( H(0)  - E(0) ) \psi_0 }  } \\
&& \leq \liminf_{i \to \infty} \inn{  \psi_{m_i} , ( H(0) - E(0)
\psi_{m_i} } = 0 \; .
\end{eqnarray*}
This shows Step 2.
 \qed

\section*{Acknowledgements}

D. H. wants to thank Ira Herbst for valuable discussions. In particular  the proof of a  key
idea, Lemma  \ref{thm:compact},  is from   Ira Herbst.

\appendix

\section{Self-adjointness }

 \label{appessself}

In this section we prove Theorem  \ref{hyp:op0}. The proof is based on an inequality similar to   \cite{HH08}.
In contrast to the proof given in  that paper, where the domain of self-adjointness  is determined by means  of  quadratic  forms, we use the following abstract
proposition, which can be derived     from a theorem of W\"ust, similar to \cite{omatte}.

\begin{proposition}\label{prop:wuest}
Let $T$ be a self-adjoint operator on a Hilbert space and  let $T^{(n)}$, $n=1,2$ be  symmetric and $T$-bounded
operators.
 For $\kappa \in \C$ let    $T(\kappa) = T+ \kappa T^{(1)} + \kappa^2 T^{(2)}$  be the  operator with  domain $D(T)$.
If $T(\kappa)$ is    closed for all $\kappa  \in [0,t]$, then $T(t)$ is self-adjoint.
\end{proposition}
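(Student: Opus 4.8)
The plan is to deduce the statement from Wüst's theorem, which says the following: if $T$ is self-adjoint, $T'(\kappa)$ is a symmetric operator-valued function that is $T$-bounded, and the operators $T(\kappa) := T + \kappa T'(\kappa)$ (with domain $D(T)$) form a continuous family of closed operators for $\kappa$ in a real interval, then self-adjointness propagates along that interval — more precisely, if $T(\kappa)$ is self-adjoint for some $\kappa_0$ and closed for all $\kappa$ between $\kappa_0$ and $t$, then $T(t)$ is self-adjoint. In our situation we take $T'(\kappa) := T^{(1)} + \kappa T^{(2)}$, so that $T(\kappa) = T + \kappa T'(\kappa)$ as in the hypothesis. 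Since $T^{(1)}$ and $T^{(2)}$ are symmetric and $T$-bounded, $T'(\kappa)$ is symmetric and $T$-bounded for every real $\kappa$, with relative bound depending continuously (indeed affinely in $|\kappa|$) on $\kappa$; this gives the required continuity of the family.

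First I would record that $T(0) = T$ is self-adjoint by assumption. Next, I would observe that each $T(\kappa)$ is symmetric on $D(T)$ for real $\kappa$: indeed $\langle T(\kappa)\phi,\psi\rangle = \langle \phi, T(\kappa)\psi\rangle$ for $\phi,\psi \in D(T)$ because $T$, $T^{(1)}$, $T^{(2)}$ are all symmetric there. Then I would invoke the hypothesis that $T(\kappa)$ is closed for every $\kappa \in [0,t]$. Now apply Wüst's theorem on the interval $[0,t]$: starting from self-adjointness at $\kappa = 0$ and closedness throughout $[0,t]$, we conclude $T(t)$ is self-adjoint.

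The one point that needs a little care is the precise citation and formulation of Wüst's theorem, since the classical statements (e.g.\ in Reed--Simon or Kato) are usually phrased for a one-parameter family $T + \beta A$ with $A$ a fixed symmetric $T$-bounded operator, whereas here the perturbation $T^{(1)} + \kappa T^{(2)}$ itself depends on $\kappa$. This is harmless: one checks that the graph norms of $T(\kappa)$ and $T$ are uniformly equivalent for $\kappa$ in a compact interval (because the relative bounds of $T^{(1)}$ and $T^{(2)}$ are finite), so the family $\kappa \mapsto T(\kappa)$ is continuous in the generalized sense and Wüst's argument — closing up deficiency indices as $\kappa$ varies continuously, using that a symmetric closed operator with a self-adjoint neighbor in the family cannot jump deficiency indices — goes through verbatim. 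I expect this bookkeeping, rather than any substantive analytic estimate, to be the only obstacle, and it is routine; the analytic content (the $T$-boundedness of $T^{(1)}, T^{(2)}$ and closedness of $T(\kappa)$) is supplied by hypothesis.
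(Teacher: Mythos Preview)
Your approach is essentially the same as the paper's---reduce to W\"ust's theorem---but your execution of the ``bookkeeping'' is where the paper's actual work lies, and you gloss over it in a way that is slightly misleading.

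The paper does \emph{not} adapt W\"ust's argument to a real-parameter family with $\kappa$-dependent perturbation. Instead it shows that the set $Z = \{\kappa \in \C : T(\kappa) \text{ closed}\}$ is open in $\C$: if $T(\kappa_0)$ is closed with domain $D(T)$, then by the closed graph theorem each $T^{(j)}$ is $T(\kappa_0)$-bounded, and hence $T(\kappa)$ is closed for $\kappa$ near $\kappa_0$. This makes $\kappa \mapsto T(\kappa)$ a holomorphic family of type (A) on the open set $Z$, and since $T(\overline{\kappa}) \subset T(\kappa)^*$ (by symmetry of $T^{(j)}$), W\"ust's theorem for holomorphic families applies \emph{as stated} to give $T(\overline{\kappa}) = T(\kappa)^*$ on the connected component of $Z \cap \overline{Z}$ containing $0$, which contains $[0,t]$ by hypothesis.

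Two specific points in your write-up should be corrected. First, your claim that the graph norms of $T(\kappa)$ and $T$ are equivalent ``because the relative bounds of $T^{(1)}$ and $T^{(2)}$ are finite'' only gives one direction; the reverse ($T$ is $T(\kappa)$-bounded) comes from the closedness of $T(\kappa)$ on $D(T)$ via the closed graph theorem---this is the key step. Second, W\"ust's theorem is not a deficiency-index continuity argument but a statement about holomorphic families; by establishing openness of $Z$ the paper avoids having to reprove anything and cites the theorem directly.
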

\begin{proof}
Let $Z = \{ \kappa \in \C : T(\kappa) \text{  is closed  }  \}$. We claim that  $Z$ is open.
If $\kappa_0 \in Z$, then $T(\kappa_0)$ is closed and $D(T(\kappa_0)) = D(T)$.
The operators $T^{(j)}$ are closable operators such that $D(T(\kappa_0))  = D(T) \subset  D(T^{(j)})$.
Then by the closed graph theorem $T^{(j)}$ are also $T(\kappa_0)$ bounded \cite[Theorem 5.9]{weid80}.
It follows that $T(\kappa)$ is closed for $\kappa$ close to $\kappa_0$
\cite[Theorem 5.5]{weid80}. Thus we have
shown that  $Z$ is open.  It follows that $\kappa \mapsto T(\kappa)$ is on $Z$
a holomorphic family of type (A) \cite{kato}.
Since $T^{(j)}$ are symmetric  and $T$-bounded, we have   $T(\overline{\kappa}) \subset T(\kappa)^* $ for all $\kappa \in \C$. Let $Z_0$ denote the connected component of $Z \cap \overline{Z}$ containing $0$. Since $T(0) =T$ is self-adjoint, and hence $0 \in Z_0$, it follows  from a
Theorem of  W\"ust,
\cite[Theorem 1]{wuest}, that $T(\overline{\kappa}) = T(\kappa)^*$ for all $\kappa \in Z_0$.
Since   $[0,t] \subset Z_0$, it follows that $T(\kappa)$ is self-adjoint for all $\kappa \in [0,t]$.
\end{proof}

We apply the above Proposition  to $T(e) = T +  e  T^{(1)} + e^2 T^{(2)}$, where  $T = \frac{1}{2}(\xi  - P_f)^2 + H_{f,m}$, with natural domain, $T^{(1)} = \frac{1}{2} ( P_f -\xi ) \cdot  A  +\frac{1}{2}  A \cdot ( P_f -\xi )  +   S \cdot B$, and $T^{(2)} = \frac{1}{2}   A^2 $.
First we note the following.  Since $A$ is divergence free  we have $A \cdot  P_f = P_f  \cdot A$, and so    $T^{(1)}  =   A  \cdot P_f - \xi \cdot A + S \cdot B$.
Let  	$T_0 = \frac{1}{2} P_f^2 + H_{f,m}$, with natural domain. Since $T$ and $T_0$ are non-negative
multiplication operators, it is easy to see that the domains of  $T_0$ and  $T$ agree, and that the operators  are mutually bounded.
Thus,  as $T(e) = H_m(\xi)$,  Theorem  \ref{hyp:op0} will follow as a consequence of  Proposition \ref{prop:wuest}   and the following two lemmas.

\begin{lemma}  \label{eq:last1} Let $m \geq 0$, $\xi \in \R^3$,  and $\rho \in L^2(\R^3;(|k|+\omega_m(k)^{-1} |k|^{-1} ) dk)$. Then $T^{(1)}$ and $T^{(2)}$ are
 $T$-bounded.
\end{lemma}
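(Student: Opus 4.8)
The plan is to reduce the $T$-boundedness of $T^{(1)}=A\cdot P_f-\xi\cdot A+S\cdot B$ and $T^{(2)}=\tfrac12 A^2$ to the standard Fock-space bounds on annihilation and creation operators, combined with the two elementary operator inequalities
\[
H_{f,m}\le T,\qquad (\xi-P_f)^2=2\bigl(T-H_{f,m}\bigr)\le 2T,
\]
which hold because $T=\tfrac12(\xi-P_f)^2+H_{f,m}$ is a sum of commuting non-negative operators. All estimates below will first be established for $\psi$ in the core $D_\mathcal{S}$ of finite-particle Schwartz vectors, on which commutators and adjoints are unambiguous, and then extended to $D(T)=D(T_0)$ by continuity; I will not dwell on this routine point.

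The single building block is the following. At $x=0$ one has, componentwise, $A_j=a(f_{A,j})+a^*(f_{A,j})$ and $B_j=a(f_{B,j})+a^*(f_{B,j})$ with $f_{A,j}=\rho\,\varepsilon_{\lambda,j}/\sqrt{2|k|}$ and $f_{B,j}=(-ik\wedge f_A)_j$ as in \eqref{eq:defoffieldfunc}. The standard inequalities $\|a(h)\psi\|\le\|h/\sqrt{\omega_m}\|_\hh\,\|H_{f,m}^{1/2}\psi\|$ and $\|a^*(h)\psi\|^2=\|a(h)\psi\|^2+\|h\|_\hh^2\|\psi\|^2$ then give, for each $j$,
\[
\|A_j\psi\|+\|B_j\psi\|\le c_j\,\bigl\|(H_{f,m}+1)^{1/2}\psi\bigr\|,
\]
with $c_j$ finite as soon as $f_{A,j}/\sqrt{\omega_m},\,f_{B,j}/\sqrt{\omega_m},\,f_{A,j},\,f_{B,j}\in\hh$. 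All four membership statements follow from $\rho\in L^2(\R^3;(|k|+\omega_m(k)^{-1}|k|^{-1})\,dk)$: one has $|f_{A,j}/\sqrt{\omega_m}|^2\lesssim|\rho|^2/(|k|\omega_m)$, which is exactly the singular part of the weight; $|f_{B,j}/\sqrt{\omega_m}|^2\lesssim|\rho|^2|k|/\omega_m\le|\rho|^2$, and the weight is bounded below by a positive constant, so that $\rho\in L^2$; $|f_{A,j}|^2\lesssim|\rho|^2/|k|$, with $|k|+\omega_m(k)^{-1}|k|^{-1}\gtrsim|k|^{-1}$; and $|f_{B,j}|^2\lesssim|\rho|^2|k|$ is controlled by the $|k|$-part of the weight. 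Since $(H_{f,m}+1)^{1/2}\le(T+1)^{1/2}$, this already shows that $\xi\cdot A$ and $S\cdot B$ (with $S$ a bounded matrix) are $T$-bounded.

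It remains to treat the two terms in which the photon momentum enters. For $A\cdot P_f$, note that $H_{f,m}$, $P_f$ and $T$ mutually commute (the first two are $d\Gamma$'s of functions on the one-photon space, and $T$ differs from $H_{f,m}$ by the commuting multiplication operator $\tfrac12(\xi-P_f)^2$), so scalar inequalities between them may be squared and multiplied: from $P_f^2\le 2|\xi|^2+2(\xi-P_f)^2\le 2|\xi|^2+4T$ one gets $(H_{f,m}+1)P_f^2\le C_\xi(T+1)^2$, hence $\|(H_{f,m}+1)^{1/2}(P_f)_j\psi\|\le C_\xi^{1/2}\|(T+1)\psi\|$. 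Applying the building block with $(P_f)_j\psi$ in place of $\psi$ then yields $\|A\cdot P_f\,\psi\|\le\sum_j\|A_j(P_f)_j\psi\|\le(\max_j c_j)\sum_j\|(H_{f,m}+1)^{1/2}(P_f)_j\psi\|\le C\,\|(T+1)\psi\|$. For $\tfrac12 A^2=\tfrac12\sum_j A_j^2$, expanding $A_j^2$ into normal order produces the operator products $a^\#(f_{A,j})a^\#(f_{A,j})$ plus the scalar $\|f_{A,j}\|_\hh^2$, and the standard relative bound for such degree-two field polynomials gives $\|A_j^2\psi\|\le C(\|f_{A,j}/\sqrt{\omega_m}\|_\hh,\|f_{A,j}\|_\hh)\,\|(H_{f,m}+1)\psi\|\le C\,\|(T+1)\psi\|$. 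Thus $T^{(2)}$ is $T$-bounded, and with the contributions above so is $T^{(1)}$.

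The main obstacle is the $A\cdot P_f$ term: one must bound $H_{f,m}^{1/2}(P_f)_j$ in terms of $T$, and the clean device for this is the joint diagonalizability of $T$, $H_{f,m}$ and $P_f$, which turns the needed estimate into the elementary scalar inequality $(H_{f,m}+1)P_f^2\le C_\xi(T+1)^2$. The only other thing to watch is the bookkeeping in the preceding paragraph: one must check that the single weight $|k|+\omega_m(k)^{-1}|k|^{-1}$ simultaneously controls $f_A/\sqrt{\omega_m}$ (the constraint at small $|k|$, which enters both $T^{(1)}$ and $T^{(2)}$) and $f_B$ (the constraint at large $|k|$), down to and including $m=0$, where $\omega_m$ degenerates at the origin.
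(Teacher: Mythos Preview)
Your argument is correct and follows essentially the same route as the paper's proof: both use the standard Fock-space bounds to see that $A^2$ is $H_{f,m}$-bounded and that the components of $A$, $B$ are $H_{f,m}^{1/2}$-bounded, and then handle the cross term $A\cdot P_f$ by bounding $\|H_{f,m}^{1/2}(P_f)_j\psi\|$ in terms of $\|(T+1)\psi\|$ via the joint commutativity of $H_{f,m}$, $P_f$ and $T$ (the paper phrases this as $\|H_{f,m}^{1/2}P_f\psi\|^2\le\|(P_f^2+H_{f,m})\psi\|^2$ together with the mutual boundedness of $T$ and $T_0=\tfrac12 P_f^2+H_{f,m}$, which is your inequality in a different dress). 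Your write-up is more explicit than the paper's in verifying that the single weight $|k|+\omega_m(k)^{-1}|k|^{-1}$ really controls all four integrability conditions on $f_A$ and $f_B$, but the underlying strategy is the same.
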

\begin{proof} Standard estimates, see for example \cite{ressim:fou},  show  that $A^2$ is $H_{f,m}$-bounded and that the components of  $A$ and $B$ are $H_{f,m}^{1/2}$-bounded.
It follows that  for some $C$  we have  $\| A_l P_f  \psi \| \leq C ( \| H_{f,m}^{1/2}  P_f \psi \| + \| P_f \psi \| ) $. Using   $\| H_{f,m}^{1/2}  P_f \psi \|^2 \leq   \| ( P_f^2 + H_{f,m} ) \psi \|^2  $ and
collecting  estimates shows the claim.
\end{proof}

\begin{lemma}  \label{eq:last2} Let  $m \geq 0$ and $\rho \in L^2(\R^3;(|k| + \omega_m(k)^{-1}  |k|^{-1}) dk)$ . Then for any $e \in \R$ and $\xi \in \R^3$
there exist  constants $C_1, C_2$ such that for all $\varphi  \in D(T)$,
\begin{eqnarray} \label{eq:2nd}
 \|  T  \varphi  \|^2 \leq  C_1
 \|  T(e )\varphi \|^2  + C_2 \| \varphi \|^2 \; .
\end{eqnarray}
\end{lemma}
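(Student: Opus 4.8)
The plan is to prove the equivalent statement that $T$ is $T(e)$-bounded: if $\|T\varphi\|\le a\|T(e)\varphi\|+b\|\varphi\|$ for all $\varphi$ in the Schwartz-type core $D_{\mathcal S}$ of $T$, then squaring and using $(s+t)^2\le 2s^2+2t^2$ gives the claim, and the inequality extends from the core to all of $D(T)$ since $T$ is closed and, by Lemma~\ref{eq:last1}, $T^{(1)},T^{(2)}$ are graph-continuous with respect to $T$. Since $A$ is divergence free one has $T^{(1)}=-A\cdot(\xi-P_f)+S\cdot B$ and $T^{(2)}=\tfrac12A^2$, so up to the (unitarily implemented) field redefinition $A\mapsto -A$ we may write $T(e)=K_e+eS\cdot B$ with $K_e:=\tfrac12(\xi-P_f-eA)^2+H_{f,m}\ge 0$. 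The first, easy reduction removes $S\cdot B$: each $B_j$ is $H_{f,m}^{1/2}$-bounded (the standard estimate used in Lemma~\ref{eq:last1}) and $H_{f,m}\le K_e$, so with $\|K_e^{1/2}\varphi\|^2\le\delta\|K_e\varphi\|^2+\tfrac1{4\delta}\|\varphi\|^2$ the term $eS\cdot B$ is infinitesimally $K_e$-bounded; writing $\|K_e\varphi\|\le\|T(e)\varphi\|+|e|\,\|S\cdot B\varphi\|$ and absorbing yields $\|K_e\varphi\|^2\le C\|T(e)\varphi\|^2+C\|\varphi\|^2$. It remains to show $\|T\varphi\|^2\le C\|K_e\varphi\|^2+C\|\varphi\|^2$.

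The heart of the matter is this last comparison between the free operator $T$ and the ``covariant'' operator $K_e$. Write $T=K_e+R_e$ with $R_e=-\tfrac e2\{(\xi-P_f),A\}+\tfrac{e^2}2A^2$ symmetric; expanding $\|K_e\varphi\|^2=\|T\varphi\|^2+\langle\varphi,\{T,R_e\}\varphi\rangle+\|R_e\varphi\|^2$ reduces the goal to an estimate of the form $-\langle\varphi,\{T,R_e\}\varphi\rangle\le\tfrac12\|T\varphi\|^2+C\|K_e\varphi\|^2+C\|\varphi\|^2$, after which $\tfrac12\|T\varphi\|^2$ is absorbed. One computes the anticommutator $\{T,R_e\}$ explicitly from the canonical commutation relations. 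The crucial structural input is the transversality relation $k\cdot\varepsilon_\lambda(k)=0$: writing $f_l$ for the $l$-th component of the vector form factor of $A$ one has $\sum_l k_l f_l=0$, and this forces the most ultraviolet-singular terms produced by the single and double commutators — in particular $\sum_l[P_{f,l},A_l]$ and $\sum_l[P_{f,l},[A_l,H_{f,m}]]$ — to vanish identically, leaving only $c$-numbers and field operators that are $H_{f,m}$- or $H_{f,m}^{1/2}$-bounded with exactly the integrability afforded by $\rho\in L^2(\mathbb R^3;(|k|+\omega_m(k)^{-1}|k|^{-1})dk)$. After this cancellation the surviving operators are estimated by commuting each $A_l$ next to $\xi-P_f$, using $\|A_l\psi\|\le C\|(H_{f,m}+1)^{1/2}\psi\|$, the identity $\|(H_{f,m}+1)^{1/2}(\xi-P_f)\varphi\|=\|(\xi-P_f)(H_{f,m}+1)^{1/2}\varphi\|$ (these commute), and the elementary bounds $\|\tfrac12(\xi-P_f)^2\varphi\|\le\|T\varphi\|$ and $\|H_{f,m}\varphi\|\le\|T\varphi\|$ (valid since $T$ is the sum of two \emph{commuting} nonnegative operators) together with $(H_{f,m}+1)\le\delta(H_{f,m}+1)^2+C_\delta$; the quantity $\|K_e\varphi\|$ is fed back through $\|R_e\varphi\|\le\tfrac{|e|}2\|\{(\xi-P_f),A\}\varphi\|+\tfrac{e^2}2\|A^2\varphi\|$. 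Combining with the first reduction then gives the lemma; the same scheme, with $V=eT^{(1)}+e^2T^{(2)}$ in place of $R_e$ and with a quadratic-form formulation as in \cite{HH08}, is an alternative route.

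The main obstacle is precisely Step~2, i.e.\ showing that the free kinetic operator is bounded relative to the ``covariant'' one — which is exactly the non-perturbative statement that $H_m(\xi)$ has the same operator domain as $\tfrac12P_f^2+H_{f,m}$, and is not a soft consequence of form comparisons. The naive bound $|\langle T\varphi,(A\cdot P_f)\varphi\rangle|\le\|T\varphi\|\,\|(A\cdot P_f)\varphi\|$ produces only a finite constant (and, for large $|e|$, one exceeding $1$) in front of $\|T\varphi\|^2$, so there is no way around expanding the anticommutator and exploiting $\nabla\cdot A=0$ to kill the dangerous ultraviolet contributions; the careful bookkeeping of the commutator remainders against the several weighted $L^2$-norms of $\rho$ that occur, and the interpolation needed to collect everything into $\epsilon\|T\varphi\|^2+C_\epsilon(\|K_e\varphi\|^2+\|\varphi\|^2)$, is where essentially all the work lies.
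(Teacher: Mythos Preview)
Your reduction to $K_e=\tfrac12(\xi-P_f+eA)^2+H_{f,m}$ by stripping off $eS\cdot B$ is correct and matches the paper's Step~3. The gap is in your plan for the remaining inequality $\|T\varphi\|^2\le C\|K_e\varphi\|^2+C\|\varphi\|^2$.

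Write $Q=\xi-P_f$ and $F=eA$. In your expansion $\|K_e\varphi\|^2=\|T\varphi\|^2+\langle\varphi,\{T,R_e\}\varphi\rangle+\|R_e\varphi\|^2$, the leading piece of $\{T,R_e\}$ is $-e\{T,A\cdot Q\}$. Since $[T,Q_l]=0$, one has $\{T,A\cdot Q\}=2(A\cdot Q)T+\sum_l[T,A_l]Q_l$; the commutator piece is indeed a lower-order field term, but the non-commutator piece contributes exactly $2e\,\mathrm{Re}\,\langle (A\cdot Q)\varphi,T\varphi\rangle$. This term has \emph{no} commutator left to expand, and the transversality identities you invoke ($\sum_l[P_{f,l},A_l]=0$, $\sum_l[P_{f,l},[A_l,H_{f,m}]]=0$) touch only the remainder $\sum_l[T,A_l]Q_l$, not this principal part. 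Estimating it by Cauchy--Schwarz and $\|A_l\psi\|\le C\|(H_{f,m}+1)^{1/2}\psi\|$ gives a bound of size $c|e|\,\|T\varphi\|^2$ with a constant you cannot shrink, exactly the obstruction you flag in your last paragraph. Your ``elementary bounds'' $\|\tfrac12 Q^2\varphi\|,\|H_{f,m}\varphi\|\le\|T\varphi\|$ go the wrong way here: they feed quantities back into $\|T\varphi\|$, which is what you are trying to control.

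The paper's route avoids this by reorganising around $Q+F$ rather than $Q$. Writing $Q=(Q+F)-F$ gives $Q^2=(Q+F)^2-2F\cdot(Q+F)+F^2$, so the dangerous cross term is $F\cdot(Q+F)$, not $F\cdot Q$. A single commutation of $(Q_j+F_j)$ through $H_{f,m}$ then yields
\[
\|F\cdot(Q+F)\varphi\|^2\le C\bigl(\|(Q+F)^2\varphi\|^2+\|(H_{f,m}+1)\varphi\|^2\bigr),
\]
and a separate double-commutator positivity,
\[
\mathrm{Re}\,\langle H_{f,m}\varphi,(Q+F)^2\varphi\rangle\ge -\tfrac12\sum_j\langle\varphi,[F_j,[F_j,H_{f,m}]]\varphi\rangle\ge -b\|\varphi\|^2,
\]
gives $\|\tfrac12(Q+F)^2\varphi\|^2+\|H_{f,m}\varphi\|^2\le\|K_e\varphi\|^2+C\|\varphi\|^2$. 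These two steps together bound everything on the right by $\|K_e\varphi\|^2+\|\varphi\|^2$, with no $|e|$-dependent constant to absorb. Transversality plays only a cosmetic role (it lets one write $F\cdot(Q+F)$ without ordering corrections); the real mechanism is the $(Q+F)$-reorganisation and the double-commutator lower bound, neither of which appears in your sketch.
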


\begin{proof}  Let $e \in \R$ and $m \geq 0$ be fixed. For notational compactness we set  $Q:= P_f - \xi$ and $F := e A$.

\vspace{0.4cm}

 \noindent \underline{Step 1:} There exist constants $c_1, c_2, c_3 $ such
that
$$
\| Q^2 \varphi \|^2 \leq c_1 \|   (Q + F )^2  \varphi \|^2   +
c_2 \| H_{f,m}  \varphi \|^2   + c_3 \| \varphi \|^2 \ ,
\quad \forall \varphi \in D(T)  \; .
$$

In the following we denote by $C$ a constant which may change from line to line.
We have
\begin{align}
\|  Q^2 \varphi \|^2
&=  \|  (  ( Q + F  )^2  - 2 F   \cdot ( Q + F  )  +
F ^2 ) \varphi \|^2 \nonumber \\
&\leq 3 \| ( Q + F  )^2 \varphi \|^2  + 12 \|  F   \cdot ( Q + F
) \varphi  \|^2 + 3 \| F ^2 \varphi \|^2 \; . \label{eq:today33}
\end{align}
The second  term in \eqref{eq:today33} is estimated as follows
\begin{eqnarray*}
  \| F  \cdot  ( Q + F  ) \varphi \|^2   \leq C \sum_j \| F _j ( Q_j + F_j) \varphi \|^2
\leq C \sum_j \| ( H_{f,m} + 1 )^{1/2} ( Q_j + F_j ) \varphi \|^2 \; .
  \end{eqnarray*}
Further, using a commutator
\begin{align*}
 \sum_j \| ( &H_{f,m} + 1 )^{1/2} ( Q_j + F_j ) \varphi \|^2
\\ &=
\sum_j \inn{  (Q_j + F_j) \varphi , ( H_{f,m} + 1 ) (Q_j + F_j) \varphi } \\
&=  \sum_j  \left\langle ( Q_j + F _j)^2 \varphi , ( H_{f,m} + 1 ) \varphi \right\rangle +
\left\langle ( Q + F )_j \varphi , [ H_{f,m} , F _j ] \varphi  \right\rangle  \\
& \leq  C (  \| (Q + F )^2 \varphi \|^2 +    \sum_j \| (Q_j + F_j) \varphi \|^2     +
  \| ( H_{f,m} + 1 ) \varphi \|^2 ) \\
& \leq   C (  \| (Q + F )^2 \varphi \|^2 +      \| ( H_{f,m} + 1 ) \varphi \|^2 ) \; .
  \end{align*}
Collecting the above estimates yields Step 1.

\vspace{0.4cm}

 \noindent \underline{Step 2:} There exists a constant $C$ such
that
\begin{eqnarray*}
\|\sfrac{1}{2} (Q + F )^2 \varphi \|^2  + \| H_{f,m} \varphi \|^2  \leq
 \| ( \sfrac{1}{2} ( Q + F )^2 + H_{f,m} )\varphi \|^2 + C \| \varphi \|^2 \ ,
\quad \forall \varphi \in D(T)  \; .
\end{eqnarray*}

\vspace{0.4cm}

Calculating a double commutator,
we see that
\begin{eqnarray}
 \lefteqn{ \sfrac{1}{2}   \inn{  H_{f,m} \varphi , ( Q + F )^2 \varphi }  +
\inn{ ( Q + F )^2 \varphi , H_{f,m} \varphi)   } }  \nonumber \\
&=&
\sum_j    \inn{  (Q_j +F _j) \varphi ,  H_{f,m} ( Q_j +F _j) \varphi}
   +  \sfrac{1}{2} \sum_j  \inn{ \varphi, [ F _j, [
F _j , H_{f,m} ]]  \varphi}   \nonumber  \\
&\geq&  - b \| \varphi \|^2 \; ,   \label{eq:last11}
\end{eqnarray}
 for some $b$. Step 2 follows by adding Inequality   \eqref{eq:last11} to the left hand side and completing the square.

\vspace{0.4cm}

 \noindent \underline{Step 3:} Inequality \eqref{eq:2nd} holds.

\vspace{0.4cm}
First observe that
\begin{eqnarray} \label{eq:2nd2}
\| ( \sfrac{1}{2} Q^2 + H_{f,m} ) \varphi \|^2   \leq \frac{1}{2}  \| Q^2 \varphi \|^2 +
2 \|  H_{f,m} \varphi \|^2    \ ,
\quad \forall \varphi \in D(T)  \; .
\end{eqnarray}
Inserting  on the right hand side of   \eqref{eq:2nd2} first the inequality of Step 1 and then the inequality of Step 2, we find for some constant $C$
\begin{eqnarray} \label{eq:2nd22}
\| ( \sfrac{1}{2} Q^2 + H_{f,m} ) \varphi \|   \leq  C \| ( \frac{1}{2} ( Q+ F)^2  + H_{f,m} ) \varphi \| + \| \varphi \| )    \ ,
\quad \forall \varphi \in D(T)  \; .
\end{eqnarray}
By standard  estimates  $S \cdot B$ is infinitesimally bounded with respect to $H_{f,m}$.
It follows by   \eqref{eq:2nd22} that it is also  infinitesimally bounded with respect to  $\frac{1}{2} ( Q+ F)^2  + H_{f,m} $. Thus   $\frac{1}{2} ( Q+ F)^2  + H_{f,m} $
is $T(e)$-bounded. Now   \eqref{eq:2nd} follows  from  \eqref{eq:2nd22}.
\end{proof}

\section{Ground State for positive Photon Mass} \label{appexreg}

 In this section we provide a  proof of Theorem \ref{hyp:op1}.  It follows closely the proofs  given in   \cite{GriLieLos:01,LMS06}. Theorem \ref{hyp:op1} will follow directly from Propositions   \ref{prop:essspec}  and  \ref{prop:deltpos}, below.
For $\xi \in \R^3$  and $m \geq 0$ we define
$$
\Delta_m(\xi) = \inf_{k \in \R^3} \left\{   E_m(\xi-k) - E_m(\xi) + \omega_m(k) \right\} .
$$

\begin{proposition} \label{prop:essspec}  Let $\rho \in L^2(\R^3;(|k| + |k|^{-1}) dk)$ and $m > 0$. Then 
for all $\xi \in \R^3$ we have
$$
{\rm inf} \sigma_{\rm ess}(H_m(\xi)) \geq E_m(\xi) + \Delta_m(\xi) .
$$
\end{proposition}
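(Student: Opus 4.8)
\textbf{Proof proposal for Proposition \ref{prop:essspec}.}
The plan is to prove this HVZ-type bound via a Weyl sequence argument combined with a photon-decoupling (or "pull-out-a-soft-photon") construction, following the strategy of \cite{GriLieLos:01,LMS06}. The key point is that since the photon mass is strictly positive, $\omega_m(k) \geq m > 0$, so a single photon carries at least energy $m$; the essential spectrum should arise precisely from configurations in which one photon escapes to spatial infinity in $k$-space (i.e.\ becomes a genuine asymptotic photon), leaving behind the fiber Hamiltonian at the shifted momentum $\xi - k$. First I would fix $\mu \in \sigma_{\rm ess}(H_m(\xi))$ with $\mu < E_m(\xi) + \Delta_m(\xi)$ aiming for a contradiction, and pick a Weyl sequence $(\phi_j)$: a normalized sequence with $\phi_j \rightharpoonup 0$ weakly and $(H_m(\xi) - \mu)\phi_j \to 0$. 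Because $H_m(\xi) - \mu$ is bounded below near $\mu$ up to the bottom of the spectrum, standard arguments give that $(\phi_j)$ is bounded in the form domain, hence in $D(P_f^2) \cap D(H_{f,m})$ by Theorem \ref{hyp:op0}, so in particular $\langle \phi_j, N \phi_j\rangle$ and $\langle \phi_j, H_{f,m}\phi_j\rangle$ stay bounded.

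The heart of the argument is a partition-of-unity/IMS-type localization in photon momentum space: introduce functions $j_0, j_\infty$ on $\R^3$ with $j_0^2 + j_\infty^2 = 1$, $j_0$ supported near the origin, $j_\infty$ supported away from it, and their rescaled versions $j_0^R(k) = j_0(k/R)$, $j_\infty^R(k) = j_\infty(k/R)$; lift these to Fock space in the usual way to get an (approximate) isometry $\FF \to \FF \otimes \FF$ separating photons "near $\xi$ in momentum" from photons "far". One then shows that along the Weyl sequence, after passing to a subsequence, the "far" component carries a nonvanishing mass (otherwise $\phi_j$ would essentially live in a fixed compact momentum region with a bounded number of photons, a relatively compact region of phase space, contradicting $\phi_j \rightharpoonup 0$). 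Decoupling one far photon of momentum $k$ and using that $H_m(\xi)$ acting on a state with one photon at momentum $k$ plus the rest behaves, up to errors that vanish as $R \to \infty$ and $j \to \infty$, like $H_m(\xi - k) + \omega_m(k)$ on the rest, one obtains
\begin{align*}
\mu = \lim_{j} \langle \phi_j, H_m(\xi)\phi_j\rangle \geq \inf_{k} \left( E_m(\xi - k) + \omega_m(k) \right) = E_m(\xi) + \Delta_m(\xi),
\end{align*}
contradicting the assumption $\mu < E_m(\xi) + \Delta_m(\xi)$. Here the commutator estimates needed to push the interaction terms $eA$, $eS\cdot B$ through the localization are controlled using the hypothesis $\rho \in L^2(\R^3;(|k|+|k|^{-1})dk)$, exactly as in the relative-boundedness arguments of Appendix \ref{appessself}.

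The main obstacle I expect is making the photon-momentum localization genuinely rigorous in the presence of the quadratic interaction term $\tfrac12(\xi - P_f + eA)^2$: unlike the linear Nelson-type coupling, here $P_f$ and $A$ are entangled inside a square, so one must carefully commute the (nonlocal in $k$) second-quantized cutoff operators through $(\xi - P_f + eA)^2$ and through $H_{f,m}$, and show all commutators are small in the appropriate operator-relative sense. A clean way to organize this is to work with the quadratic form and use that, on the form domain, $P_f$, $H_{f,m}^{1/2}$, $N^{1/2}$ are all controlled uniformly along $(\phi_j)$, so that the error terms are $o(1)$ uniformly; the $R \to \infty$ limit then kills the remaining localization errors. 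An alternative, slightly softer route that sidesteps some of the commutator bookkeeping is to directly test $H_m(\xi)$ against trial states of the form $a^*(g)\Psi$ with $\Psi$ a near-ground state of $H_m(\xi - k)$ and $g$ a narrow bump near $k$, which immediately yields $\inf\sigma_{\rm ess} \leq E_m(\xi) + \Delta_m(\xi)$ — but since the Proposition asserts the reverse inequality, the Weyl-sequence direction is unavoidable, and that is where the real work lies.
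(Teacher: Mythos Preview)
Your overall strategy---Weyl sequence, second-quantized localization $J:\FF\to\FF\otimes\FF$, lower bound on the decoupled Hamiltonian---is exactly the paper's, but you localize in the wrong variable and the argument breaks at the compactness step. You cut with multiplication operators $j_0^R(k)=j_0(k/R)$ in photon momentum $k$; the paper cuts with $j_l=\chi_l(-i\nabla_k/L)$, i.e.\ in the conjugate (photon-position) variable $y=-i\nabla_k$. With your choice, both $\Gamma((j_0^R)^2)$ and $(1+H_{f,m})^{-1}$ act as multiplication in $k$ on each $n$-photon sector, so their product is never compact, and the crucial step ``$\|\Gamma(j_0^R)\phi_j\|\to 0$ along the Weyl sequence'' fails. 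Correspondingly your assertion that ``a fixed compact momentum region with a bounded number of photons'' is ``a relatively compact region of phase space'' is false: bounded $|k|$ and bounded $N$ give no compactness in $L^2$ without control of the conjugate variable (Rellich). The essential spectrum here is not produced by photons with large $|k|$---the coupling functions already live in weighted $L^2$---but by photons escaping to infinity in $y$.

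Once you switch to $j_l=\chi_l(-i\nabla_k/L)$ the rest of your outline lines up with the paper's proof. Lemma~\ref{lem:exgsregmain}(a) packages the commutator estimates $\|[j_l,\omega_m]\|,\|[j_l,k]\|=O(L^{-1})$ and $\|(j_1-1)f_A\|,\|(j_1-1)f_B\|,\|j_2 f_A\|,\|j_2 f_B\|=o(L^0)$ to compare $\inn{\varphi,H_m(\xi)\varphi}$ with $\inn{J\varphi,\widetilde H_m(\xi)J\varphi}$; Lemma~\ref{lem:exgsregmain}(b) gives the fiberwise bound $\widetilde H_m(\xi)\geq E_m(\xi)+\Delta_m(\xi)(\one-P_{\Omega,2})$; and now $(1+H_{f,m})^{-1}\Gamma(j_1^2)$ is genuinely compact (compact support in $y$ from $j_1$, decay in $k$ from $H_{f,m}$, plus $m>0$ to sum over particle number), which kills $\|\Gamma(j_1)\phi_j\|^2$ on the Weyl sequence. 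Your concern about commuting through the square $(\xi-P_f+eA)^2$ is resolved in the paper by working with the quadratic form and estimating $Q:=J(\xi+v)-(\xi-P_f\otimes\one-\one\otimes P_f+eA\otimes\one)J$ directly via Lemmas~\ref{lem:jcommfield} and~\ref{lem:jcommgamma}.
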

To prove the proposition we need the following notation.
Let $\hh_1$ and $\hh_2$ be two Hilbert spaces. If  $A : \hh_1 \to \hh_2$ is a partial isometry,
we define $\Gamma(A)$ to be the linear operator $\FF(\hh_1) \to \FF(h_2)$ which equals
$\bigotimes_{k=1}^n A$ when restricted to $\hh_1^{(n)}$,  $ n \geq 1$, and which equals to the identity on $\hh_1^{(0)}$.  The following two lemmas are straightforward to verify.
\begin{lemma} \label{eq:lemmgmma}  Let $\hh_1$ and $\hh_2$ be two Hilbert spaces,
and  $A : \hh_1 \to \hh_2$ a partial  isometry. Then  $\Gamma(A)^* = \Gamma(A^*)$, and
$$
\Gamma(A) a^*(f) = a^*(A f) \Gamma(A) . 
$$
If $A$ is an isometry, then so is $\Gamma(A)$ and
$
\Gamma(A) a(f) = a(A f ) \Gamma(A) .
$
\end{lemma}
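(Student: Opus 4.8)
The plan is to verify all four assertions sector by sector, using that $\Gamma(A)$ is block-diagonal for the particle-number decomposition $\FF(\hh_1)=\bigoplus_{n\ge 0}\hh_1^{(n)}$ and that the $n$-fold tensor power $A^{\otimes n}:=\bigotimes_{k=1}^n A$ intertwines the symmetrization projections $\mathcal{P}_n$, since $A$ acts the same way on every tensor factor and hence $A^{\otimes n}$ commutes with all permutations of the factors. First I would note that a partial isometry has $\|A\|\le 1$, so $\|A^{\otimes n}\|\le 1$ for all $n$ and $\Gamma(A)$ extends to a contraction on $\FF(\hh_1)$; as $A^*$ is again a partial isometry, $\Gamma(A^*)$ is likewise well defined. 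Then $\Gamma(A)^*=\Gamma(A^*)$ follows by computing the adjoint in each sector: on the $n$-fold tensor powers $(A^{\otimes n})^*=(A^*)^{\otimes n}$, and since both sides intertwine the symmetrization projections this identity descends to the symmetric subspaces.

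Next I would establish the intertwining relation with the creation operator first on the dense subspace of finite-particle vectors, where $(a^*(f)\psi)_{(n)}=\sqrt{n}\,\mathcal{P}_n(f\otimes\psi_{(n-1)})$ holds with no domain subtlety. Applying $A^{\otimes n}$ and using that it commutes with $\mathcal{P}_n$ together with $A^{\otimes n}(f\otimes\psi_{(n-1)})=(Af)\otimes(A^{\otimes(n-1)}\psi_{(n-1)})=(Af)\otimes(\Gamma(A)\psi)_{(n-1)}$, one reads off $(\Gamma(A)a^*(f)\psi)_{(n)}=(a^*(Af)\Gamma(A)\psi)_{(n)}$ for every $n$. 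Since $a^*(f)$ and $a^*(Af)$ are closed, the finite-particle vectors are a core for both, and $\Gamma(A)$ is bounded, a routine limiting argument promotes this to $\Gamma(A)D(a^*(f))\subseteq D(a^*(Af))$ and $\Gamma(A)a^*(f)=a^*(Af)\Gamma(A)$ on $D(a^*(f))$.

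Finally, if $A$ is an isometry then $\inn{Ax,Ay}_{\hh_2}=\inn{x,y}_{\hh_1}$ for all $x,y$, so $A^{\otimes n}$ is inner-product-preserving on $\hh_1^{\otimes n}$, restricts to an isometry $\hh_1^{(n)}\to\hh_2^{(n)}$, and $\Gamma(A)$ is an isometry. For the annihilation relation I would again work sector by sector: on finite-particle vectors it reduces, since $a(f)$ contracts the first tensor slot of $\psi_{(n+1)}$ against $f$, to the identity $A^{\otimes n}c_f=c_{Af}\,A^{\otimes(n+1)}$ on $\hh_1^{\otimes(n+1)}$, where $c_g$ contracts the first factor against $g$; tested on product vectors this is precisely $\inn{Af,Ag}=\inn{f,g}$, which is where the isometry hypothesis (rather than merely a partial isometry) is used, and closing up as before gives $\Gamma(A)a(f)=a(Af)\Gamma(A)$. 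The only real point to watch — the ``main obstacle,'' though it is mild — is this passage from an algebraic identity on finite-particle vectors to an operator identity between the unbounded creation and annihilation operators, which is the standard core-plus-closedness argument; everything else is bookkeeping.
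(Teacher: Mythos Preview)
Your argument is correct and is exactly the sector-by-sector verification the paper has in mind; the paper itself gives no proof beyond declaring the lemma ``straightforward to verify.'' The only thing one might add is that the domain/closure passage you flag as the main obstacle is indeed routine here since $\Gamma(A)$ is a contraction.
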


\begin{lemma} \label{eq:lemmsumprod}  Let $\hh_1$ and $\hh_2$ be two Hilbert spaces. Then  there exists a unique bounded linear map  $U : \FF(\hh_1 \oplus \hh_2) \to \FF(\hh_1) \otimes \FF(\hh_2)$
 such that
\begin{align*}
U \Omega  & = \Omega \otimes \Omega , \\
U (a^* (h_1,h_2))   & =  ( a^*(h_1) \otimes \one + \one \otimes a^*(h_2) ) U  ,  \quad \forall (h_1,h_2) \in \hh_1 \oplus \hh_2 .
\end{align*}
It follows that  $U$ is unitary.
\end{lemma}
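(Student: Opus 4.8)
The statement to prove is Lemma~\ref{eq:lemmsumprod}: the existence and uniqueness of a unitary $U : \FF(\hh_1 \oplus \hh_2) \to \FF(\hh_1) \otimes \FF(\hh_2)$ intertwining the vacuum and the creation operators in the prescribed way. Here is my plan.

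\medskip

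\textbf{Existence via the exponential property of Fock space.} The cleanest route is to use that $\FF(\hh)$ is the ``second quantization'' functor, which turns direct sums into tensor products. Concretely, I would first identify an orthonormal basis: pick orthonormal bases $(e_\alpha)$ of $\hh_1$ and $(f_\beta)$ of $\hh_2$, so that $(e_\alpha) \cup (f_\beta)$ is an orthonormal basis of $\hh_1 \oplus \hh_2$ (identifying $\hh_i$ with its canonical image). The $n$-particle space $(\hh_1\oplus\hh_2)^{(n)}$ has an orthonormal basis of symmetrized monomials in these basis vectors, and symmetrized monomials in which $j$ of the factors come from $\hh_1$ and $n-j$ from $\hh_2$ span a subspace isometrically isomorphic to $\hh_1^{(j)} \otimes \hh_2^{(n-j)}$. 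Summing over $n$ and $j$ gives a natural unitary $U$ onto $\bigoplus_{j,l} \hh_1^{(j)}\otimes \hh_2^{(l)} = \FF(\hh_1)\otimes\FF(\hh_2)$. Alternatively, and perhaps more slickly, one can define $U$ first on the dense subspace of exponential (coherent) vectors $\varepsilon(h_1,h_2) := \sum_{n\geq 0} (n!)^{-1/2}\, a^*((h_1,h_2))^n \Omega / \sqrt{n!}$ (suitably normalized) by $U\varepsilon(h_1,h_2) := \varepsilon(h_1)\otimes\varepsilon(h_2)$, check that $\langle \varepsilon(h_1,h_2),\varepsilon(g_1,g_2)\rangle = e^{\langle h_1,g_1\rangle + \langle h_2,g_2\rangle} = \langle \varepsilon(h_1)\otimes\varepsilon(h_2), \varepsilon(g_1)\otimes\varepsilon(g_2)\rangle$, hence $U$ extends to an isometry, and that it is onto because exponential vectors are total on both sides. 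Either way one gets a unitary $U$ with $U\Omega = \Omega\otimes\Omega$.

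\medskip

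\textbf{The intertwining relation.} Once $U$ is constructed (say via the basis description), I would verify $U a^*((h_1,h_2)) = (a^*(h_1)\otimes\one + \one\otimes a^*(h_2)) U$ by a direct computation on $n$-particle vectors, using the formula $(a^*(f)\psi)_{(n)} = \sqrt{n}\,\mathcal{P}_n(f\otimes\psi_{(n-1)})$ from the text. It suffices by linearity and continuity to check the relation on the spanning symmetrized monomials, where both sides are computed combinatorially: applying $a^*$ to a monomial of type $(j, n-j)$ produces a sum over inserting the new factor, and this matches exactly the Leibniz-type action $a^*(h_1)\otimes\one + \one\otimes a^*(h_2)$ on the tensor side. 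If instead one uses the coherent-vector definition, the relation is immediate by differentiating $U\varepsilon(t(h_1,h_2) + (g_1,g_2))$ in $t$ at $t=0$, since $a^*$ of a coherent vector is a derivative of the coherent family.

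\medskip

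\textbf{Uniqueness.} For uniqueness I would argue that the set $\{\, a^*(v_1)\cdots a^*(v_n)\Omega : n\geq 0,\ v_i \in \hh_1\oplus\hh_2\,\}$ is total in $\FF(\hh_1\oplus\hh_2)$ (this is just the definition of symmetric Fock space — finite particle vectors are dense and each $n$-particle sector is spanned by symmetrized products). Any bounded $U$ satisfying the two stated properties is determined on this total set: by induction on $n$, $U\,a^*(v_1)\cdots a^*(v_n)\Omega$ is forced by repeatedly pushing the intertwining relation through, starting from $U\Omega = \Omega\otimes\Omega$. Two bounded operators agreeing on a total set and both bounded agree everywhere. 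Finally, ``It follows that $U$ is unitary'' is recorded as a consequence: $U$ is isometric on the dense span (the inner products match, as in the coherent-vector computation or directly from orthonormality of the monomial bases), hence extends to an isometry, and its range contains the total set $\{a^*(w_1)\otimes\one\cdots\,(\cdots)\,\Omega\otimes\Omega\}$ of $\FF(\hh_1)\otimes\FF(\hh_2)$, so it is onto.

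\medskip

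\textbf{Main obstacle.} The only real work is bookkeeping: matching the combinatorial normalization factors $\sqrt{n}$, $\mathcal{P}_n$, and the binomial counting when one sorts an $n$-fold symmetrized tensor into its $\hh_1$-part and $\hh_2$-part. Using coherent vectors sidesteps essentially all of this, at the cost of needing the (standard) facts that coherent vectors are well defined, total, and that $a^*$ acts on them by differentiation; since the paper only claims the lemma is ``straightforward to verify,'' I would present the coherent-vector argument in one or two lines and leave the monomial computation as the conceptual picture.
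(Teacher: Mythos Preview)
Your proposal is correct and either route (coherent vectors or symmetrized monomials) works cleanly; the paper itself gives no proof of this lemma, simply declaring it ``straightforward to verify,'' so there is nothing to compare against beyond noting that your argument supplies exactly the standard verification the authors omit.
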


Let $\chi_1,\chi_2 \in \C^\infty(\R^3;[0,1])$ with $\chi_1^2 + \chi_2^2 = 1$ and
$\chi_1(x) = 1$, if $|x| < 1$, and $\chi(x) = 0$, if $|x| > 2$.   We define  $j_l  = \chi_l(-i \nabla_k /L)$ for $l=1,2$ and   $L > 0$.
Define $$j : \hh \to  \hh \oplus \hh , \quad f \mapsto (j_1 f, j_2 f ), $$
with $\hh$ given by   \eqref{eq:defofhilconc}. Henceforth we denote  the identity map of $\hh$ by $1$.
One readily verifies that $j$ is an isometry. Hence  $j^*$ is a partial
isometry and $j^* j = 1$.  Explicitly, one finds     $j^*(f_1,f_2) = j_1 f_1 + j_2 f_2$ for $f_j \in \hh$, $j=1,2$.
Henceforth, let $U$ denote the  isometry as in Lemma  \ref{eq:lemmsumprod} with $\hh_1 = \hh_2 = \hh$, and
let  $J = U  \Gamma(j)$. Recall that $N = d \Gamma(1)$.
 In the following let $a^\#$ stand for $a^*$ or $a$.

\begin{lemma} \label{lem:jjest} The following holds.
\begin{itemize}
\item[(a)]  $J^* J = \one$.
\item[(b)]   $J a(f)^\# = \left\{ a(j_1 f)^\# \otimes \one + \one \otimes  a(j_2  f)^\# \right\}  J .  $
\item[(c)] We have
\begin{align*}
J^* (a^*(f_1) \otimes \one + \one \otimes a^*(f_2) )  & = a^*(j_1 f_1+j_2 f_2) J^* , \\
 (a(f_1) \otimes \one + \one \otimes a(f_2) )  J  & =  J a(j_1  f_1+j_2  f_2)  .
\end{align*}
\end{itemize}
\end{lemma}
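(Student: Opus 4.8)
\textbf{Proof plan for Lemma \ref{lem:jjest}.}

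The plan is to derive all three statements directly from the two previously established lemmas (Lemma \ref{eq:lemmgmma} on $\Gamma(A)$ and Lemma \ref{eq:lemmsumprod} on the exponential-law isometry $U$) together with the elementary facts $j^*j = 1$ and $j^*(f_1,f_2) = j_1 f_1 + j_2 f_2$ recorded just before the statement. Since $J = U\Gamma(j)$, part (a) follows by computing $J^*J = \Gamma(j)^* U^* U \Gamma(j) = \Gamma(j)^*\Gamma(j) = \Gamma(j^*)\Gamma(j) = \Gamma(j^*j) = \Gamma(1) = \one$, where the middle equalities use that $U$ is an isometry (Lemma \ref{eq:lemmsumprod}), that $\Gamma(j)^* = \Gamma(j^*)$ (Lemma \ref{eq:lemmgmma}, valid since $j$ is a partial isometry), and the obvious multiplicativity $\Gamma(A)\Gamma(B) = \Gamma(AB)$ for partial isometries, which I would state as an immediate consequence of the definition of $\Gamma$ on each $n$-particle sector.

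For (b), I would treat the creation case first. Using $J = U\Gamma(j)$ and the isometry intertwining relation $\Gamma(j) a^*(f) = a^*(jf)\Gamma(j)$ (Lemma \ref{eq:lemmgmma}), together with the identification $jf = (j_1 f, j_2 f) \in \hh \oplus \hh$ and the defining property $U a^*(h_1,h_2) = (a^*(h_1)\otimes\one + \one\otimes a^*(h_2))U$ from Lemma \ref{eq:lemmsumprod}, one gets
\[
J a^*(f) = U \Gamma(j) a^*(f) = U a^*(jf) \Gamma(j) = \bigl(a^*(j_1 f)\otimes\one + \one\otimes a^*(j_2 f)\bigr) U \Gamma(j) = \bigl(a^*(j_1 f)\otimes\one + \one\otimes a^*(j_2 f)\bigr) J .
\]
The annihilation case $J a(f) = (a(j_1 f)\otimes\one + \one\otimes a(j_2 f))J$ then follows by taking adjoints of the creation identity (both sides are densely defined, so one passes to adjoints and restricts, or one notes the relation holds on the finite-particle domain $D_{\mathcal S}$), using that $f \mapsto a(f)$ is antilinear and that $j_1, j_2$ act as multiplication operators in $k$-space so adjoints behave as expected; alternatively, apply Lemma \ref{eq:lemmgmma}'s annihilation intertwining $\Gamma(j) a(f) = a(jf)\Gamma(j)$ (valid as $j$ is an isometry) and the analogous annihilation form of Lemma \ref{eq:lemmsumprod}, which follows from its creation form by taking adjoints on the vacuum-dense domain.

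Part (c) is just the adjoint of (b): taking $J^*$ of both sides of the creation relation in (b), applied with $f = j_1 f_1 + j_2 f_2 = j^*(f_1,f_2)$, and using $j_l j^*(g_1,g_2)$-type identities, yields $J^*(a^*(f_1)\otimes\one + \one\otimes a^*(f_2)) = a^*(j_1 f_1 + j_2 f_2)J^*$ after simplification via $j_l^* j_m = \chi_l(-i\nabla_k/L)\chi_m(-i\nabla_k/L)$ and $j_1^2 + j_2^2 = 1$; the second identity in (c) is the adjoint of the first. The only mild obstacle is bookkeeping on domains — the creation/annihilation operators are unbounded, so each intertwining identity should be understood on $D_{\mathcal S}$ (or the finite-particle domain), where everything is manifestly well-defined, and then one invokes closability; none of this is deep, which is why the lemma is labelled ``straightforward to verify.'' I would present (a) in full, (b) in the displayed form above, and simply remark that (c) follows by taking adjoints.
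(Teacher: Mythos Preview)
Your treatment of (a) and (b) is correct and is exactly what the paper's one-line proof has in mind: everything comes straight from Lemmas~\ref{eq:lemmgmma} and~\ref{eq:lemmsumprod} applied to $J = U\Gamma(j)$.

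For (c), however, your route has a genuine gap. Taking adjoints of the creation relation in (b) gives
\[
a(f)\,J^* \;=\; J^*\bigl(a(j_1 f)\otimes\one + \one\otimes a(j_2 f)\bigr),
\]
which is the (annihilation form of the) first line of (c) \emph{only} for pairs $(f_1,f_2)=(j_1 f,j_2 f)$ lying in the range of the isometry $j$. Your substitution $f = j^*(f_1,f_2) = j_1 f_1 + j_2 f_2$ then produces $j_l f = j_l j_1 f_1 + j_l j_2 f_2$, not $f_l$; the identity $j_1^2+j_2^2=1$ does not help, since the cross terms $j_1 j_2 f_1$ and $j_1 j_2 f_2$ do not cancel for arbitrary $f_1,f_2$. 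In short, (c) is \emph{not} the adjoint of (b), because $j$ is not onto.

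The fix is immediate and is what the paper implicitly uses: apply Lemma~\ref{eq:lemmgmma} directly to the partial isometry $j^*$, giving $\Gamma(j^*)\,a^*(h) = a^*(j^*h)\,\Gamma(j^*)$ for $h=(f_1,f_2)\in\hh\oplus\hh$, and combine with $U^*(a^*(f_1)\otimes\one+\one\otimes a^*(f_2)) = a^*(f_1,f_2)\,U^*$ (conjugating the defining relation of $U$ by $U^*$). Since $J^* = \Gamma(j^*)U^*$ and $j^*(f_1,f_2)=j_1 f_1+j_2 f_2$, the first line of (c) follows for all $f_1,f_2$; the second line is then indeed its adjoint.
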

\begin{proof} The Lemma follows directly from the definition of $J$  and the properties of Lemmas \ref{eq:lemmgmma} and \ref{eq:lemmsumprod}.
\end{proof}

\begin{lemma}  \label{lem:jcommfield} Let $f \in \hh$.
\begin{itemize}
\item[(a)]   For $\psi \in D(N^{1/2}) $ we have
 \begin{align}
 \|(  J a(f) -  ( a(f)  \otimes \one )  J ) \psi \|   & \leq    \| ( 1-  j_1 )  f  \|  \| N^{1/2}  \psi \|    ,  \label{fieldloc1} \\
 \|(  J a^*(f) -  ( a^*(f)  \otimes \one )  J ) \psi \|  & \leq   ( \| (1 -  j_1   ) f  \| + \| j_2 f \| ) \| (N+1)^{1/2}  \psi \|\label{fieldloc2}    .
\end{align}
\item[(b)]  We have 
\begin{align*}
J a^\#(f) -  ( a^\#(f) \otimes \one ) J = ( a^\#(( j_1 - 1 ) f ) \otimes \one + \one \otimes a^\#(j_2 f) ) J   .
\end{align*}
\end{itemize}
\end{lemma}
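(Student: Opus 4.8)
The plan is to deduce both parts directly from Lemma~\ref{lem:jjest}, together with the standard number-operator bounds on creation and annihilation operators.

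\emph{Part (b).} This is just a rewriting of Lemma~\ref{lem:jjest}(b). Since $j_1 f = f + (j_1-1)f$ and $a^\#$ is additive in its argument (linear if $a^\# = a^*$, antilinear but still additive if $a^\# = a$), one has $a^\#(j_1 f) = a^\#(f) + a^\#((j_1-1)f)$. I would insert this into $J a^\#(f) = \bigl(a^\#(j_1 f)\otimes\one + \one\otimes a^\#(j_2 f)\bigr)J$ and move the term $(a^\#(f)\otimes\one)J$ to the left-hand side; this gives the claimed identity.

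\emph{Part (a), annihilation operator.} Here a cancellation occurs which I would exploit. Applying Lemma~\ref{lem:jjest}(c) (second line) with $f_2 = 0$ gives $(a(f)\otimes\one)J = J a(j_1 f)$, hence, using additivity of $a$ once more, $J a(f) - (a(f)\otimes\one)J = J a(f) - J a(j_1 f) = J a\bigl((1-j_1)f\bigr)$. (Equivalently, one may collapse the right-hand side of (b) via Lemma~\ref{lem:jjest}(c) with $f_1 = (j_1-1)f$, $f_2 = j_2 f$, using $j_1^2 + j_2^2 = 1$, which holds because $\chi_1^2 + \chi_2^2 = 1$ and $j_1, j_2$ are commuting self-adjoint Fourier multipliers.) Since $J^*J = \one$, $J$ is isometric, so for $\psi \in D(N^{1/2})$,
$$
\bigl\|\bigl(J a(f) - (a(f)\otimes\one)J\bigr)\psi\bigr\| = \bigl\|a\bigl((1-j_1)f\bigr)\psi\bigr\| \le \|(1-j_1)f\|\,\|N^{1/2}\psi\|,
$$
which is \eqref{fieldloc1}.

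\emph{Part (a), creation operator.} No such collapse is available, since $JJ^*\neq\one$, so I would estimate the two summands of the identity in (b) separately. For $g\in\hh$ one has $\|(a^*(g)\otimes\one)\Phi\| \le \|g\|\,\|((N\otimes\one)+1)^{1/2}\Phi\|$ and $\|(\one\otimes a^*(g))\Phi\| \le \|g\|\,\|((\one\otimes N)+1)^{1/2}\Phi\|$. It then remains to control these for $\Phi = J\psi$: arguing as for Lemma~\ref{lem:jjest}(a) but carrying an $N$ through the computation, Lemmas~\ref{eq:lemmgmma} and~\ref{eq:lemmsumprod} together with $j^*j = 1$ give $J^*\bigl(N\otimes\one + \one\otimes N\bigr)J = N$, whence $\|((N\otimes\one)+1)^{1/2}J\psi\| \le \|(N+1)^{1/2}\psi\|$ and the same with the two tensor factors interchanged. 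Substituting into (b) with $a^\# = a^*$ yields
$$
\bigl\|\bigl(J a^*(f) - (a^*(f)\otimes\one)J\bigr)\psi\bigr\| \le \bigl(\|(1-j_1)f\| + \|j_2 f\|\bigr)\|(N+1)^{1/2}\psi\|,
$$
which is \eqref{fieldloc2}.

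The only mildly delicate points are the cancellation $j_1^2 + j_2^2 = 1$ which produces the sharper single-term bound in \eqref{fieldloc1}, and the intertwining $J^*(N\otimes\one+\one\otimes N)J = N$ used to dominate $\|((N\otimes\one)+1)^{1/2}J\psi\|$ by $\|(N+1)^{1/2}\psi\|$; neither is a real obstacle.
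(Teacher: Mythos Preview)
Your proof is correct and follows essentially the same route as the paper: part (b) comes straight from Lemma~\ref{lem:jjest}(b), the annihilation estimate \eqref{fieldloc1} is obtained by collapsing via $j_1^2+j_2^2=1$ to the single term $J a((1-j_1)f)$, and the creation estimate \eqref{fieldloc2} is obtained by bounding the two summands in (b) separately. The only cosmetic difference is that for \eqref{fieldloc2} the paper estimates each summand by an explicit CCR computation together with Lemma~\ref{lem:jjest}(c), whereas you invoke the cleaner intertwining $J^*(N\otimes\one+\one\otimes N)J=N$; both arguments are equivalent.
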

\begin{proof}
Part  (b) follows directly from   Lemma \ref{lem:jjest} (b).
To show  \eqref{fieldloc1}, we insert  in (b) the identity from Lemma \ref{lem:jjest} (c) and find
\begin{align*}
J a(f) -  ( a(f) \otimes \one ) J =  J a(j_1 ( j_1 - 1 ) f + j_2^2 f  ) = J a((1-j_1) f )  .
\end{align*}
Now the inequality follows from standard estimates.  To show  \eqref{fieldloc2}
 we again use   (b),
\begin{align*}
J a^*(f) -  ( a^*(f) \otimes \one ) J =    ( a^*(( j_1 - 1 ) f ) \otimes \one + \one \otimes a^*(j_2 f) ) J  .
\end{align*}
To estimate the second term on the right hand side we first use the canonical commutation relations 
and then   Lemma  \ref{lem:jjest} (c)  and find
\begin{align*}
J^* ( \one \otimes a(j_2 f)   a^*(j_2 f) ) J  & =   \| j_2 f \|^2  J^* J +    J^* ( \one \otimes a^*(j_2 f )  a(j_2 f) ) J \\
& =  \| j_2 f \|^2   +     a^*(j_2^2 f )  J^* J  a(j_2^2 f) \\
& \leq \| j_2 f \|^2  +  \| j_2^2 f \|^2 N \\
&  \leq \| j_2 f \|^2 (1 + N)  .
\end{align*}
To estimate the first  term on the right hand side we find similarly
\begin{align*}
& J^* (  a((j_1-1) f) )  a^*((j_1-1) f) \otimes \one  ) J  \\
& \quad =   \| (j_1-1) f \|^2  J^* J +    J^*  a^*((j_1-1) f )  a((j_1-1) f) \otimes \one   J
\\
&   \quad =  \| j_2 f \|^2   +     a^*(j_1 (j_1-1) f )  J^* J  a(j_1(j_1 - 1)  f) \\
& \quad \leq \| ( j_1-1) f \|^2  +  \| j_1(j_1-1) f \|^2 N  \leq  \| ( j_1-1) f \|^2  (1  +   N ) .
\end{align*}
Collecting estimates shows \eqref{fieldloc2}.
\end{proof}

\begin{lemma} \label{lem:jcommgamma}
 Let $h$ be a selfadjoint operator in $\hh$. Suppose there exists a dense subspace $\mathcal{D} \subset D(h)$ such that $j_l(\mathcal{D}) \subset D(h)$  for  $l=1,2$.
\begin{itemize}
\item[(a)]   Suppose  that $[j_l,h]$ is bounded for  $l=1,2$. Then
for $\psi \in D(d \Gamma(h)) \cap D(N) $ we have
 $$ \|(  J d \Gamma(h) -  ( d \Gamma(h) \otimes \one + \one \otimes d \Gamma(h)  ) J ) \psi \|  \leq   ( \| [j_1,h ] \|^2 + \| [ j_2 , h ] \|^2 )^{1/2} \| N \psi \|   . $$
\item[(b)] Let $(e_l)_{l \in \N}$ be an orthonormal basis of $\hh$ which lies in  $\mathcal{D}$.    Then
\begin{align*}
& J d \Gamma(h)  - ( d \Gamma(h) \otimes \one + \one \otimes d \Gamma(h)  ) J \\
&  \quad = \sum_l  \left(  a^*([j_1,h]e_l) \otimes \one +  \one \otimes a^*([j_2,h]e_l)   \right)   J a(e_l)  ,
\end{align*}
where the expression on the right hand side is understood in the weak sense.
\end{itemize}
\end{lemma}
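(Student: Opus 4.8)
The plan is to pass to the doubled Fock space $\FF(\hh\oplus\hh)$. Throughout write $N = d\Gamma(1)$, on $\FF(\hh)$ or on $\FF(\hh\oplus\hh)$ as appropriate. Since $J = U\Gamma(j)$ and, by Lemma \ref{eq:lemmsumprod}, $U^*\bigl(d\Gamma(h)\otimes\one+\one\otimes d\Gamma(h)\bigr)U = d\Gamma(h\oplus h)$, one has
$$J\,d\Gamma(h) - \bigl(d\Gamma(h)\otimes\one + \one\otimes d\Gamma(h)\bigr)J \;=\; U\bigl(\Gamma(j)\,d\Gamma(h) - d\Gamma(h\oplus h)\,\Gamma(j)\bigr).$$
So it suffices to compute $C := \Gamma(j)\,d\Gamma(h) - d\Gamma(h\oplus h)\,\Gamma(j)$ on $D(d\Gamma(h))\cap D(N)$; note $\Gamma(j)$ maps this set into $D(d\Gamma(h\oplus h))\cap D(N)$ because $j_l(\mathcal D)\subset D(h)$ and $\Gamma(j)$ commutes with $N$.

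For the identity in (b) I would use the second quantized expression $d\Gamma(h) = \sum_l a^*(he_l)a(e_l)$ (in the weak sense, on finite-particle vectors with components in $\mathcal D$), the intertwining relations $\Gamma(j)a^\#(f) = a^\#(jf)\Gamma(j)$ from Lemma \ref{eq:lemmgmma} ($j$ being an isometry), and the fact that on $\ran\Gamma(j)$ one has $d\Gamma(h\oplus h)\Gamma(j) = \sum_l a^*\bigl((h\oplus h)je_l\bigr)a(je_l)\Gamma(j)$, since $\{je_l\}_l$ is an orthonormal basis of $\ran j$. Subtracting, the factors $a(je_l)\Gamma(j) = \Gamma(j)a(e_l)$ line up and one gets
$$C\psi = \sum_l a^*\!\bigl((jh - (h\oplus h)j)e_l\bigr)\,\Gamma(j)\,a(e_l)\,\psi ,$$
and a one-line computation gives $(jh - (h\oplus h)j)f = ([j_1,h]f,[j_2,h]f)$ for $f\in\mathcal D$. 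Applying $U$, using Lemma \ref{eq:lemmsumprod} in the form $Ua^*((g_1,g_2)) = (a^*(g_1)\otimes\one+\one\otimes a^*(g_2))U$ together with $U\Gamma(j)a(e_l) = Ja(e_l)$, turns this into precisely the stated formula. (Alternatively, one can bypass the doubled Fock space: apply Lemma \ref{lem:jjest}(b) to $Ja^*(he_l)$, split $j_l he_l = h j_l e_l + [j_l,h]e_l$, and verify that the $h j_l e_l$ terms recombine, via Lemma \ref{lem:jjest}(c) and completeness of $\{e_l\}$, into $(d\Gamma(h)\otimes\one+\one\otimes d\Gamma(h))J$.)

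For the bound in (a) I would start from the same expression $C\psi = \sum_l a^*(D_h e_l)\,a(je_l)\,\Gamma(j)\psi$, where $D_h:\hh\to\hh\oplus\hh$, $D_h f = ([j_1,h]f,[j_2,h]f)$, so $\|D_h\|\le (\|[j_1,h]\|^2+\|[j_2,h]\|^2)^{1/2}$, and $\|j\| = 1$. The estimate then follows from the standard bound for operators bilinear in creation and annihilation operators, $\|\sum_l a^*(Ae_l)a(Be_l)\Phi\| \le \|A\|\,\|B\|\,\|N\Phi\|$ for bounded $A,B$ and $\Phi\in D(N)$, applied with $\Phi = \Gamma(j)\psi$, together with $\|N\Gamma(j)\psi\| = \|\Gamma(j)N\psi\| = \|N\psi\|$ ($\Gamma(j)$ being an isometry commuting with $N$). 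Since $U$ is unitary, transporting back yields the claimed inequality with constant $(\|[j_1,h]\|^2+\|[j_2,h]\|^2)^{1/2}$. (One may also prove (a) directly sector by sector: on $\hh^{(n)}$ the operator $C$ acts as $\sum_{i=1}^n j^{\otimes(i-1)}\otimes D_h\otimes j^{\otimes(n-i)}$, each of the $n$ summands has norm at most $\|D_h\|\,\|\psi_{(n)}\|$, and summing $n\|D_h\|\,\|\psi_{(n)}\|$ in $\ell^2$ over $n$ gives $\|D_h\|\,\|N\psi\|$.)

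The purely algebraic manipulations — commuting $a^\#$ past $\Gamma(j)$ and $U$, splitting off the commutator — are bookkeeping. The two points that need care are: justifying the weak second quantization identities and the interchange of $\sum_l$ with the inner product on the domain $D(d\Gamma(h))\cap D(N)$, where the hypotheses $\{e_l\}\subset\mathcal D\subset D(h)$ and $j_l(\mathcal D)\subset D(h)$ enter; and obtaining the bound in (a) with the correct power $\|N\psi\|$ rather than $\|N^{3/2}\psi\|$, which is exactly the content of the bilinear estimate quoted above and is also what makes the sector-wise counting close.
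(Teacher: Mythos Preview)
Your proposal is correct and essentially the same as the paper's proof. The paper proves (b) by expanding $d\Gamma(h)=\sum_{l,k}\langle e_l,he_k\rangle a^*(e_l)a(e_k)$ and applying Lemma~\ref{lem:jjest}(b),(c) directly in the tensor product (your ``alternative'' route), and then derives (a) by rewriting the right-hand side of (b) as $U\,d\Gamma(A)\,\Gamma(j)$ with $A=D_h j^*$ and bounding sector by sector exactly as in your final parenthetical remark; your primary ``doubled Fock space'' formulation $C=\Gamma(j)d\Gamma(h)-d\Gamma(h\oplus h)\Gamma(j)$ is just a mild repackaging that lands on the same expression and the same bound $\|Aj\|=\|D_h\|=(\|[j_1,h]\|^2+\|[j_2,h]\|^2)^{1/2}$.
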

\begin{proof} In the proof let $\inn{ \cdot , \cdot }$ denote the scalar product in $\hh$.
First we show (b).
We find using Lemma \ref{lem:jjest} (b)
\begin{align*}
J d \Gamma(h) &  = J \sum_{l,k} \inn{ e_l, h e_k} a^*(e_l) a(e_k)  \\
&  =  \sum_{l,k} \inn{ e_l, h e_k} ( a^*(j_1 e_l) \otimes \one + \one \otimes a^*(j_2 e_l))    J a(e_k)  \\
&  =  \sum_{l,k} (\inn{ e_l, j_1 h e_k} a^*(e_l) \otimes \one + \inn{ e_l, j_2 h e_k} \one \otimes a^*( e_l))    J a(e_k)  .
\end{align*}
Using Lemma \ref{lem:jjest} (c) we find
\begin{align*}
( d \Gamma(h) &\otimes \one + \one \otimes d \Gamma(h)  ) J \\ &  =
\sum_{l,k} \inn{ e_l, h e_k} ( a^*(e_l) a(e_k)   \otimes \one + \one \otimes a^*(e_l) a(e_k)    ) J  \\
&  =  \sum_{l,k} \inn{ e_l, h e_k} ( ( a^*( e_l) \otimes \one  ) J a(j_1 e_k)
+  ( \one \otimes a^*( e_l) )    J a(j_2 e_k) )   \\
&  =  \sum_{l,k} ( \inn{ e_l, h j_1e_k}  ( a^*( e_l) \otimes \one  ) J a( e_k)
+  \inn{ e_l, h j_2e_k} ( \one \otimes a^*( e_l) )    J a( e_k) )   .
\end{align*}
Taking the difference (b) follows.
 Now (a) follows from (b) using standard estimates.
For example using Lemmas \ref{eq:lemmgmma}  and
\ref{eq:lemmsumprod}  we find
\begin{align*}
& \text{Right  hand side of  (b)} \\
& \quad =  U \sum_l a^*(([j_1,h] e_l , [j_2,h] e_l) a(j_1 e_l, j_2 e_l ) \Gamma(j)   \\
& \quad =  U d \Gamma(A)  \Gamma(j)  ,
\end{align*}
where we defined the following  operator on $\hh \oplus \hh$
$$
A = \left( \begin{array}{cc}   [j_1,h] j_1    &    [j_1,h] j_2  \\  {[} j_{2},h {]} j_{1} & {[} j_{2},h {]} j_{2}  \end{array} \right) .
$$
Now the bound follows since the operator preserves the $n$-particle sector and satisfies the following estimate
\begin{align*}
& \| d \Gamma(A) \Gamma(j) |_{\hh^{(n)}}\| \\
 & \quad  \leq   \| ( A j  ) \otimes j \otimes \cdots \otimes  j \|   + \|  j \otimes ( A j  ) \otimes \cdots \otimes j \|  + \cdots +  \| j  \otimes \cdots \otimes j  \otimes (  A j  ) \|   \\
& \quad \leq   n \| A j \| \| j \|^{n-1} =  n \| A j \| = n   ( \| [j_1 , h ] \|^2 + \| [j_2 , h ] \|^2)^{1/2} .
\end{align*}
\end{proof}
Recall that $\HH = \C^{2s+1} \otimes \FF$.
We consider the map
\begin{align*}
\one \otimes J : \C^{2s+1} \otimes \FF \to \C^{2s+1} \otimes  ( \FF \otimes \FF)
\end{align*}
from $\HH $ to $\HH \otimes \FF$.  By abuse of notation we shall henceforth denote this map again by $J$.
We introduce  the operator
\begin{align*}
& \widetilde{H}_m(\xi)  := \frac{1}{2} ( \xi - P_f \otimes \one - \one \otimes P_f - e A \otimes \one )^2 +  e  S \cdot B \otimes \one + H_{f,m} \otimes \one + \one \otimes H_{f,m}
\end{align*}
on $\HH \otimes \FF$, with domain given by the natural domain of $\tilde{H}_m(\xi)|_{e = 0 }$.
\begin{lemma} \label{lem:exgsregmain}  Let  $m  >  0$ and $\rho \in L^2(\R^3;(|k| + |k|^{-1}) dk)$.
Then the following  holds.
\begin{itemize}
\item[(a)] For $\varphi \in D(H_m(\xi))$ we have
$$
| \inn{ \varphi, H_m(\xi) \varphi } -  \inn{ J \varphi ,  \widetilde{H}_m(\xi) J \varphi } | \leq o(L^0) ( \| H_m(\xi) \varphi \|^2 + \| \varphi \|^2 )  \quad ( L \to \infty) ,
$$
where $o(L^0)$  does not depend on $\varphi$.
\item[(b)]  For $\varphi \in D(\tilde{H}_m(\xi))$ we have
$$
\inn{  \varphi , \widetilde{H}_m(\xi) \varphi } \geq \inn{ \varphi, \left\{ E_m(\xi) +  \Delta_m(\xi) (\one - P_{\Omega,2}) \right\}  \varphi } ,
$$
where $P_{\Omega,2}$ denotes  the orthogonal projection in $\HH \otimes \FF$ onto $\HH \otimes \Omega$.
\end{itemize}
\end{lemma}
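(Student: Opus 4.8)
The plan is to prove (a) and (b) separately; I would do (b) first, as it is the easier of the two. Since the second Fock-space factor enters $\widetilde{H}_m(\xi)$ only through $\one\otimes H_{f,m}=\one\otimes d\Gamma(\omega_m)$ and $\one\otimes P_f=\one\otimes d\Gamma(\pi)$, the operator $\widetilde{H}_m(\xi)$ commutes with $\one\otimes N$ and hence leaves each sector $\HH\otimes\hh^{(n)}$ invariant. Under the identification $\HH\otimes\hh^{(n)}\cong L^2_{\rm sym}((\Z_2\times\R^3)^n;\HH)$, on which $\one\otimes P_f$ and $\one\otimes H_{f,m}$ become multiplication by $\sum_{i=1}^n k_i$ and $\sum_{i=1}^n\omega_m(k_i)$, the operator $\widetilde{H}_m(\xi)$ acts fibrewise over $(\lambda_1,k_1,\dots,\lambda_n,k_n)$ as an operator unitarily equivalent to $H_m(\xi-\sum_{i=1}^n k_i)+\sum_{i=1}^n\omega_m(k_i)$ (up to a photon-space (anti)unitary conjugation that does not change $E_m$). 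Thus the $n$-photon part of $\inn{\varphi,\widetilde{H}_m(\xi)\varphi}$ is at least $\int(E_m(\xi-\sum_i k_i)+\sum_i\omega_m(k_i))\|\varphi_{(n)}(\dots)\|^2$. For $n\ge1$ the elementary super-additivity $\sum_{i=1}^n\omega_m(k_i)\ge\omega_m(\sum_{i=1}^n k_i)$ (from the triangle inequality together with $(\sqrt{m^2+a^2}+\sqrt{m^2+b^2})^2-(m^2+(a+b)^2)\ge m^2$ and induction) and the definition of $\Delta_m(\xi)$ give $E_m(\xi-\sum_i k_i)+\sum_i\omega_m(k_i)\ge E_m(\xi)+\Delta_m(\xi)$, whereas on the $n=0$ sector $\HH\otimes\Omega$ the operator restricts to $H_m(\xi)\ge E_m(\xi)$. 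Summing over $n$ and using $\sum_{n\ge1}\|\varphi_{(n)}\|^2=\|(\one-P_{\Omega,2})\varphi\|^2$ yields (b).

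For (a) I would use $J^*J=\one$ (Lemma \ref{lem:jjest}(a)) to write, for $\varphi\in D(H_m(\xi))$,
\[
\inn{J\varphi,\widetilde{H}_m(\xi)J\varphi}-\inn{\varphi,H_m(\xi)\varphi}=\inn{J\varphi,(\widetilde{H}_m(\xi)J-JH_m(\xi))\varphi},
\]
so it suffices to show $\|(\widetilde{H}_m(\xi)J-JH_m(\xi))\varphi\|\le o(L^0)(\|H_m(\xi)\varphi\|+\|\varphi\|)$ uniformly in $\varphi$; Cauchy--Schwarz and $2ab\le a^2+b^2$ then conclude. I would decompose $H_m(\xi)=H_{f,m}+\tfrac12(\xi+v)^2+eS\cdot B$ with $v:=-P_f+eA$, and $\widetilde{H}_m(\xi)$ analogously, and estimate the ``commutators'' term by term: for $H_{f,m}=d\Gamma(\omega_m)$ by Lemma \ref{lem:jcommgamma}(a) with $h=\omega_m$; for the $-P_f$ piece inside $v$ by Lemma \ref{lem:jcommgamma}(a) with $h=\pi_i$; and for the field pieces $eA$ and $eS\cdot B$ by Lemma \ref{lem:jcommfield}, with $f=f_A$ and $f=f_B$ as in \eqref{eq:defoffieldfunc}. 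Each remainder carries a prefactor equal to one of $\|[j_1,\omega_m]\|+\|[j_2,\omega_m]\|$, $\|[j_1,\pi_i]\|+\|[j_2,\pi_i]\|$, $\|(1-j_1)f_A\|+\|j_2 f_A\|$, $\|(1-j_1)f_B\|+\|j_2 f_B\|$; since $j_l=\chi_l(-i\nabla_k/L)$ the commutators are $O(1/L)$ in norm (exactly so for $\pi_i$, and by a standard symbolic-calculus bound for $\omega_m$, whose gradient is bounded), while $j_1\to 1$, $j_2\to 0$ strongly and $f_A,f_B\in\hh$ because $\rho\in L^2((|k|+|k|^{-1})\,dk)$; hence every prefactor is $o(L^0)$.

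The step I expect to be the main obstacle is the quadratic term $\tfrac12(\xi+v)^2$. Writing $R_i:=\widetilde{v}_iJ-Jv_i$, with $\widetilde{v}$ the kinetic variable of $\widetilde{H}_m(\xi)$, one has $(\xi+\widetilde{v})^2J-J(\xi+v)^2=\sum_i\big(R_i(\xi+v)_i+(\xi+\widetilde{v})_iR_i\big)$, so besides $\|R_i\varphi\|$ one must bound $\|(\xi+\widetilde{v})_iR_i\varphi\|$, i.e.\ show that $R_i$ maps $D(H_m(\xi))$ into $D(\widetilde{v}_i)$ with an estimate still carrying the small prefactor. As $R_i$ is a sum of creation and annihilation operators with form factors of the type $(1-j_1)f_A$, $j_2f_A$ and commutator symbols, applying a further factor of $\widetilde{v}$ produces form factors of the type $\omega_m^{1/2}(1-j_1)f_A$, $|k|\,(1-j_1)f_A$, and the like, which are controlled precisely because $\rho\in L^2((|k|+|k|^{-1})\,dk)$ --- this is where the hypothesis on $\rho$ is essential. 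Finally, all auxiliary norms appearing along the way ($\|N\varphi\|$, $\|(N+1)^{1/2}\varphi\|$, $\|H_{f,m}\varphi\|$, $\|P_f^2\varphi\|$, $\|(N+1)^{1/2}\,|P_f|\,\varphi\|$, and so on) are dominated by $C(\|H_m(\xi)\varphi\|+\|\varphi\|)$ by Theorem \ref{hyp:op0} (equivalently by Lemma \ref{eq:last2}), which makes the graph norm of $H_m(\xi)$ equivalent to that of $\tfrac12 P_f^2+H_{f,m}$; collecting all the estimates then gives (a), with the $o(L^0)$ independent of $\varphi$.
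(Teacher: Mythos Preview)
Your proof of (b) is essentially the paper's: the same fibre decomposition of $\widetilde{H}_m(\xi)$ over the second Fock factor, the same subadditivity $\sum_i\omega_m(k_i)\ge\omega_m(\sum_i k_i)$, and the same splitting into $n=0$ and $n\ge 1$. (The parenthetical about an extra (anti)unitary conjugation is unnecessary: the fibre operator is literally $H_m(\xi-\sum_i k_i)+\sum_i\omega_m(k_i)$.)

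For (a) your route is correct in spirit but differs from the paper's, and the difference is precisely at the point you flag as the main obstacle. You aim for an \emph{operator} estimate $\|(\widetilde{H}_m(\xi)J-JH_m(\xi))\varphi\|=o(L^0)\|\varphi\|_{H_m(\xi)}$, which forces you to control $\|(\xi+\widetilde v)_iR_i\varphi\|$ with $R_i=\widetilde v_iJ-Jv_i$. The paper instead stays at the level of the \emph{quadratic form} and uses the algebraic identity (with $Q:=J(\xi+v)-(\xi+\widetilde v)J$)
\[
(\xi+v)^2-J^*(\xi+\widetilde v)^2J=(\xi+v)\,J^*Q+Q^*J\,(\xi+v)-Q^*Q,
\]
so that $\inn{\varphi,(H_m(\xi)-J^*\widetilde H_m(\xi)J)\varphi}$ is a sum of terms each bounded by $\|Q\varphi\|\cdot\|(\xi+v)\varphi\|$ or $\|Q\varphi\|^2$, plus the $B$- and $H_{f,m}$-commutator terms. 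Only $\|Q\varphi\|$ needs the $o(L^0)$ prefactor (obtained from Lemmas~\ref{lem:jcommfield} and~\ref{lem:jcommgamma} exactly as you describe), while $\|(\xi+v)\varphi\|$ is simply bounded by the $H_m(\xi)$-graph norm. This completely bypasses the need to push an unbounded $\widetilde v_i$ through $R_i$. Your approach can presumably be completed along the lines you sketch, but it requires controlling objects like $\|(N+1)^{1/2}|P_f|\varphi\|$ and commuting $\widetilde v$ through the weak-sense expression of Lemma~\ref{lem:jcommgamma}(b); the paper's quadratic-form identity buys you a noticeably shorter argument with no such step.
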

\begin{proof}
(a)  Defining  the operators
$$
Q = J( \xi  + v ) - (\xi - P_f \otimes \one -  \one \otimes  P_f  +   e A  \otimes \one ) J
$$
we can write
\begin{align*}
H_m(\xi) - J^* \widetilde{H}_m(\xi) J &  = \frac{1}{2} \left\{  (\xi + v)  J^* Q + Q^* J (\xi + v )   -   Q^* Q  
\right\}  \\
& \quad + e  S \cdot (  B  - J^*  ( B \otimes \one  ) J  ) \\
& \quad + H_{f,m }- J^*  (H_{f,m} \otimes \one + \one \otimes H_{f,m} ) J .
\end{align*}
Now using that $J$ is an isometry, it follows 
from Lemma  \ref{lem:jcommgamma}  (a)  (choosing for example $\mathcal{D} = C_c(\Z_2 \times \R^3)$)
 that  for $\varphi \in D(N)$ we have, 
$$
\| H_{f,m }- J^*  (H_{f,m} \otimes \one + \one \otimes H_{f,m} ) J ) \varphi \|  \leq  ( \| [ j_1,\omega_m] \| + \| [j_2,\omega_m ] \|)   \| N \varphi \| .
$$
Using again that  $J$ is an isometry
 it follows from Lemma  \ref{lem:jcommfield}  (a)   that  for $\varphi \in D(N^{1/2})$ we have, 
  recalling  the notation   \eqref{eq:defoffieldfunc},
$$
\| (  S \cdot (  B  - J^*  ( B \otimes \one  ) J  )   ) \varphi \|  \leq  ( \|( j_1-1) f_B   \| + \| j_2 f_B \|)  \| (N+1)^{1/2}  \varphi \| .
$$
Analogously, we find  from  Lemmas   \ref{lem:jcommgamma}  and   \ref{lem:jcommfield}  for $\varphi \in D(N)$
$$
\| Q \varphi \| \leq ( \| [ j_1,\nu] \| + \| [j_2, \nu  ] \|)   \| N \varphi \| + |e|   ( \|( j_1-1) f_A   \| + \| j_2 f_A \|)  \| (N+1)^{1/2}  \varphi \|  ,
$$
where  $\nu : \R^3 \to \R^3$  with  $\nu(k)=k$.
Now we will use  that for $m > 0$
$$
\| N \varphi \| \leq C (  \| H_m(\xi) \varphi \| + \| \varphi \|  ) ,
$$
and that 
\begin{align*}
 \|( j_1-1) f_B   \| ,   \| j_2 f_B \| ,  \|( j_1-1) f_A   \| , \| j_2 f_A \| = o(L^0) , 
\end{align*}
which follows by dominated convergence in Fourier space.  Furthermore, we note 
\begin{align*}
 & \| [ j_1,\nu] \| ,  \| [j_2, \nu  ] \|  = O(L^{-1}) ,  \\
&  \| [ j_1,\omega_m] \| , \| [j_2,\omega_m ] \| = O(L^{-1})  ,
\end{align*}
where the first line is easy to see and   the second line can be seen as  follows. Let $\chi_{1,L} = \chi_1(\cdot/L)$.   For normalized $\varphi_1, \varphi_2$ we find using the usual notation and  convention
for the Fourier transform
\begin{align*}
| \inn{ \varphi_1 , [j_1 , \omega_m ] \varphi_2 } | & = (2\pi)^{-3/2} \left| \int_{\R^3 \times \R^3}   \overline{\varphi_1(p)} \widehat{\chi_{1,L}}(p-q) ( \sqrt{ q^2 + m^2} - \sqrt{ p^2 + m^2 } ) \varphi_2(q) dp dq \right| \\
  & \leq  (2\pi)^{-3/2} \int_{\R^3 \times \R^3}  |\varphi_1(p) | \sum_{s=1}^3  | \widehat{\chi_{1,L}}(p-q) | |p_s - q_s| |\varphi_2(q)| dp dq   \\
  & =   (2\pi)^{-3/2} \int_{\R^3 \times \R^3}  |\varphi_1(p) |
\sum_{s=1}^3  | \widehat{ \partial_s \chi_{1,L}} (p-q) | |\varphi_2(q)| dp dq   \\
& \leq \sum_{s=1}^3  \| (| \widehat{ \partial_s \chi_{1,L}} |)^{\vee} \|_\infty = L^{-1} \sum_{s=1}^3  \|  (|\widehat{\partial_s \chi_1} |)^\vee \|_\infty .
\end{align*}
To treat  the term involving  $j_2$ we apply a similar  estimate  to  the function $1-\chi_2$.
Inserting the above estimates  Part   (a) now follows. \\
To show (b) we use the canonical isomorphism
$$
\HH \otimes \FF \cong      \bigoplus_{n \in \N_0} L^2_{\rm sym}(( \Z_2 \times \R^{3} )^n ; \HH )  ,
$$
where the summand with $n=0$ is by convention $\HH$.
With respect to this fiber decomposition the Hamiltonian $\widetilde{H}_m(\xi)$ fibrates and we have for $n \in \N_0$,
\begin{align} \label{eq:actofhamtilde}
(\widetilde{H}_m(\xi) \psi)_{(n)}(\lambda_1,k_1,\cdots,\lambda_n,k_n) & = \left( H_m(\xi - \sum_{j=1}^n k_j) + \sum_{j=1}^n \omega_m(k_j) \right) \psi_{(n)}(\lambda_1,k_1,\cdots,\lambda_n,k_n)  .
\end{align}
This yields the expectation
\begin{align*}
\inn{ \psi , \widetilde{H}_m(\xi) \psi } & = \sum_{n=0}^\infty  \inn{ \psi_{(n)} ,  ( \widetilde{H}_m(\xi)  \psi)_{(n)} } .
\end{align*}
To calculate the summands on the right hand side  we   again use   \eqref{eq:actofhamtilde}
and find for  $n \geq 1$
\begin{align*}
&  \inn{ \psi_{(n)} ,  ( \widetilde{H}_m(\xi)  \psi)_{(n)} }
 \geq  ( E_m(\xi) + \Delta_m(\xi) ) \| \psi_{(n)} \|^2 ,
\end{align*}
where  we employed  the following operator inequality. Using     $\omega_m(p) + \omega_m(q) \geq \omega_m(p+q)$ we find
\begin{align*}
  H_m\left(\xi - \sum_{j=1}^n k_j\right) + \sum_{j=1}^n \omega_m(k_j)   & \geq   E_m\left(\xi - \sum_{j=1}^n k_j\right) + \sum_{j=1}^n \omega_m(k_j) \\
 & \geq   E_m\left(\xi - \sum_{j=1}^n k_j\right) +  \omega_m\left( \sum_{j=1}^n k_j\right)  \\
& \geq E_m(\xi) + \Delta_m(\xi) .
\end{align*}
The above now implies
\begin{align*}
\inn{ \psi , \widetilde{H}_m(\xi) \psi } & \geq  E_m(\xi) \| \psi_{(0)} \|^2  +  (  E_m(\xi) + \Delta_m(\xi) )  \sum_{n=1}^\infty \| \psi_{(n)}\|^2  .
\end{align*}
Thus we have shown  (b).
\end{proof}
\begin{proof}[Proof of Proposition  \ref{prop:essspec}]
From Lemma \ref{lem:exgsregmain}  we find   with $\| \varphi \|_{H_m(\xi)} :=  ( \| H_m(\xi) \varphi \|^2 + \| \varphi \|^2)^{1/2}$
\begin{align} \label{eq:ineqforinfess}
\inn{ \varphi, H_m(\xi) \varphi } \geq ( E_m(\xi) + \Delta_m(\xi) ) \| \varphi \|^2  - \Delta_m(\xi) \| \Gamma(j_1) \varphi \|^2 - o(L^0) \| \varphi \|_{H_m(\xi)}^2 ,
\end{align}
where we used that $\inn{ J \varphi , ( \one \otimes P_\Omega) J \varphi } = \| \Gamma(j_1) \varphi \|^2$.  Let $\lambda \in \sigma_{\rm ess}(H_m(\xi))$. Then there
exists a normalized sequence $\psi_n$, $n \in \N$, converging weakly to zero such that \[\lim_{n \to \infty} \| (H_m(\xi) - \lambda ) \psi_n \|  =  0.\] Thus,
$$
\inn{ \psi_n, H_m(\xi) \psi_n } \geq E_m(\xi) + \Delta_m(\xi)   - \Delta_m(\xi) \| \Gamma(j_1) \psi_n \|^2 - o(L^0) \| \psi_n \|_{H_m(\xi)}^2  .
$$
Taking the limit $n \to \infty$ we find
$$
\| \Gamma(j_1) \psi_n \|^2 =  \inn{(1 + H_{f,m}) \psi_n ,(1 + H_{f,m})^{-1}  \Gamma(j_1^2)  \psi_n }  \to 0 ,
$$
since $ (1 + H_{f,m})^{-1} \Gamma(j_1^2)$ is compact (it is compact on every finite particle space and, since  $m > 0$, it is given by  $\lim_{n \to \infty} (1 + H_{f,m})^{-1} \Gamma(j_1^2) 1_{N \leq n}$
in operator norm). Thus we find
$$
\lambda \geq E_m(\xi) + \Delta_m(\xi) + o(L^0) (\lambda^2 + 1 ).
$$
Taking $L \to \infty$ yields the claim.
\end{proof}

\begin{proposition} \label{prop:deltpos}
Let  $\rho \in L^2(\R^3;(|k|+|k|^{-1}) dk)$ with $ \rho = \rho(- \cdot)$. Let $ e \in \R$ and $m > 0$, and  suppose  \eqref{eq:eineq} holds.
 Then  $\Delta_m(\xi) > 0$, whenever  $|\xi| \leq 1$.
\end{proposition}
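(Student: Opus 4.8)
The plan is to establish the pointwise bound
$g_\xi(k):=E_m(\xi-k)-E_m(\xi)+\omega_m(k)>0$ for every $k\in\R^3$; since $E_m$ is continuous and $g_\xi(k)\ge\omega_m(k)-\tfrac12|\xi|^2\to\infty$ as $|k|\to\infty$, the infimum $\Delta_m(\xi)=\inf_k g_\xi(k)$ is then attained on a bounded set and equals a positive minimum. Two soft facts feed into this. First, $E_m$ is continuous on $\R^3$: the operator $v=-P_f+eA$ is $H_m(0)$-bounded (the components of $A$, $B$ and $A^2$ being relatively bounded with respect to $H_{f,m}$), so $\zeta\mapsto H_m(\zeta)=H_m(0)+\zeta\cdot v+\tfrac12\zeta^2$ is norm-resolvent continuous and, $\inf\sigma$ being bounded below by $E_m(0)$ thanks to \eqref{eq:eineq}, $E_m$ depends continuously on $\xi$. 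Second, $E_m(\xi)\le E_m(0)+\tfrac12|\xi|^2$ for every $\xi$; this is obtained without circularity, as follows. By \eqref{eq:eineq}, $\Delta_m(0)=\inf_k\{E_m(-k)-E_m(0)+\omega_m(k)\}\ge\inf_k\omega_m(k)=m>0$, so Proposition \ref{prop:essspec} gives $\inf\sigma_{\mathrm{ess}}(H_m(0))\ge E_m(0)+m>E_m(0)$, whence $E_m(0)$ is an isolated eigenvalue with a normalized eigenvector $\psi_m$. Since $\xi=0$ minimises $E_m$ by \eqref{eq:eineq}, first order perturbation theory forces $\langle\psi_m,v_i\psi_m\rangle=0$ (this is the argument used in the proof of Lemma \ref{lem:1}, which needs only that $E_m(0)$ be an isolated eigenvalue), and therefore $E_m(\xi)\le\langle\psi_m,H_m(\xi)\psi_m\rangle=E_m(0)+\tfrac12|\xi|^2$.

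With these in hand, $g_\xi(0)=\omega_m(0)=m>0$. The standard trial-state argument — testing $H_m(\xi)$ on $a^*(h_L)\phi$, where $\phi$ is an approximate minimizer of $H_m(\xi-k)$ and $h_L$ is a normalized one-photon wave packet concentrating at $k$ with $h_L\rightharpoonup0$ — gives $E_m(\xi)\le E_m(\xi-k)+\omega_m(k)$, that is, $g_\xi(k)\ge0$ for all $k$. Whenever $\omega_m(k)>\tfrac12$ — in particular for $|k|\ge1$, and for every $k$ if $m\ge\tfrac12$ — the equality $g_\xi(k)=0$ is impossible, since it would yield $E_m(\xi-k)=E_m(\xi)-\omega_m(k)<E_m(0)+\tfrac12-\tfrac12=E_m(0)$, contradicting \eqref{eq:eineq}. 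Hence $g_\xi>0$ on $\{\omega_m(k)>\tfrac12\}$, and only the compact set $\{k:0<\omega_m(k)\le\tfrac12\}$ remains, which is nonempty only when $m<\tfrac12$. On it, continuity of $E_m$ disposes of a neighbourhood of $k=0$ (there $|E_m(\xi-k)-E_m(\xi)|<m\le\omega_m(k)$), so the genuinely open case is a compact annulus $\{\delta\le\omega_m(k)\le\tfrac12\}$ bounded away from the origin.

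On that annulus $g_\xi$ attains its minimum at some $k_0\ne0$, and the whole proposition reduces to the \emph{strict} one-photon threshold bound $g_\xi(k_0)>0$, i.e. $E_m(\xi)<E_m(\xi-k_0)+\omega_m(k_0)$. I expect this to be the main obstacle: the soft inputs give only ``$\le$'', and the naive trial state merely re-proves it. Following \cite{GriLieLos:01,LMS06}, I would argue by contradiction. Assuming $E_m(\xi-k_0)+\omega_m(k_0)=E_m(\xi)$, pick a Weyl sequence $\psi_n$ for $H_m(\xi)$ at $\inf\sigma$ (normalized, with $(H_m(\xi)-E_m(\xi))\psi_n\to0$ and $\langle\psi_n,H_f\psi_n\rangle$ bounded, which holds since $H_f$ is form-bounded with respect to $H_m(\xi)$). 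The pull-through relation at momentum $\xi$ — the computation of Lemma \ref{lem:pull} with $\xi$ in place of $0$ — expresses $a_\lambda(\kappa)\psi_n$ through $\bigl(H_m(\xi-\kappa)+\omega_m(\kappa)-E_m(\xi)\bigr)^{-1}$ applied to $(-\varepsilon_\lambda(\kappa)\cdot v+S\cdot(i\kappa\wedge\varepsilon_\lambda(\kappa)))\psi_n$, up to an error governed by $(H_m(\xi)-E_m(\xi))\psi_n$. Since $H_m(\xi-\kappa)+\omega_m(\kappa)-E_m(\xi)\ge g_\xi(\kappa)$ with $g_\xi(\kappa)\to0$ as $\kappa\to k_0$, while near the nonzero point $k_0$ the inhomogeneity does not degenerate, the photon-number density $\|a_\lambda(\kappa)\psi_n\|^2$ becomes non-integrable around $\kappa=k_0$ in the limit $n\to\infty$, contradicting the uniform bound $\sum_\lambda\int|\kappa|\,\|a_\lambda(\kappa)\psi_n\|^2\,d\kappa=\langle\psi_n,H_f\psi_n\rangle=O(1)$. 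Converting this heuristic into a proof — controlling $a_\lambda(\kappa)$ applied to the error term uniformly in $n$ and turning the resolvent blow-up into a genuine contradiction — is the technical heart of the argument; the rest is routine bookkeeping with \eqref{eq:eineq}, continuity of $E_m$, and $\omega_m(k)\ge\max(m,|k|)$.
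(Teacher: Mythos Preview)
Your reduction up to the compact annulus is fine, and your derivation of $E_m(\xi)\le E_m(0)+\tfrac12|\xi|^2$ (via $\Delta_m(0)\ge m$, Proposition~\ref{prop:essspec} at $\xi=0$, and first-order perturbation) is a legitimate alternative to the paper's route. But the step you yourself flag as ``the technical heart'' is a genuine gap, and the mechanism you sketch is unlikely to close it. Pull-through on a Weyl sequence yields at best an \emph{upper} bound on $\|a_\lambda(\kappa)\psi_n\|$ through the resolvent norm; to force non-integrability near $k_0$ you would need a \emph{lower} bound, i.e.\ that the inhomogeneity $(-\varepsilon_\lambda(\kappa)\cdot v+S\cdot(i\kappa\wedge\varepsilon_\lambda(\kappa)))\psi_n$ has a component along the bottom of the spectrum of $H_m(\xi-\kappa)+\omega_m(\kappa)-E_m(\xi)$ bounded away from zero uniformly in $n$ and in $\kappa$ near $k_0$. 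Nothing in your setup supplies this, and the error term $a_\lambda(\kappa)(H_m(\xi)-E_m(\xi))\psi_n$ is not small in any usable sense because $a_\lambda(\kappa)$ is unbounded. Your citations do not help: \cite{LMS06} proves exactly this strict gap by the convexity argument below, not by a Weyl-sequence contradiction.

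The missing idea is convexity. The function $G_m(\xi)=\tfrac12\xi^2-E_m(\xi)$ is convex (a pointwise supremum of affine functions), and the parity hypothesis $\rho=\rho(-\cdot)$ yields $E_m(-\xi)=E_m(\xi)$ via Lemma~\ref{lem:symmlemm}, so $G_m$ is even. Together with \eqref{eq:eineq} these feed an elementary lemma on convex functions (Lemma~A2 in \cite{LMS06}) to give the quantitative bound
\[
E_m(\xi-k)-E_m(\xi)\ \ge\ \begin{cases}-|k|\,|\xi|+\tfrac12 k^2,& |k|\le|\xi|,\\ -\tfrac12\xi^2,& |k|\ge|\xi|.\end{cases}
\]
Adding $\omega_m(k)>|k|$ then bounds $g_\xi(k)$ from below by an explicit strictly positive quantity on each region whenever $|\xi|\le1$, and $\Delta_m(\xi)>0$ follows directly --- no Weyl sequences, no pull-through, no contradiction argument.
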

\begin{proof} First we show that  the function $\xi \mapsto E_m(\xi)$ has  the properties
\begin{itemize}
\item[(i)] $E_m(0) \leq E_m(\xi)$,
\item[(ii)] $E_m(\xi) \leq \frac{1}{2} \xi^2 + E_m(0) $,
\item[(iii)]  $G_m : \xi \mapsto  \frac{1}{2} \xi^2 - E_m(\xi)$ is convex.
\end{itemize}
Property (i) follows from the assumption, (iii) follows since the pointwise supremum of a set of convex functions is convex.
The symmetry $E_m(-\xi) = E_m(\xi)$, which follows
form Lemma \ref{lem:symmlemm}  below,  implies  $G_m(-\xi) = G_m(\xi)$.  Thus by convexity  $G_m(0) \leq G_m(\xi)$, which implies (ii).
It follows from a lemma about convex functions (see Lemma A2 in [LossMiyaoSpohn]) that properties (i)-(iii) imply
$$
E_m(\xi - k ) - E_m(\xi) \geq \left\{ \begin{array}{ll} - |k| | \xi | + \frac{1}{2} k^2  & \text{,  if } |k| \leq |\xi | , \\ -\frac{1}{2} \xi^2  & \text{, if } |k| \geq |\xi| .\end{array} \right.
$$
Thus we find using  $\omega_m(k) > |k|$
\begin{align*}
E_m(\xi - k ) - E_m(\xi) + \omega_m(k)
  & >  \left\{ \begin{array}{ll} - |k| | \xi | + |k|   & \text{,  if } |k| \leq |\xi | , \\ -\frac{1}{2} \xi^2 + |\xi|  & \text{, if } |k| \geq |\xi| , \end{array} \right.  \\
 &  \geq  0 ,
\end{align*}
provided  $|\xi | \leq  1$.
\end{proof}

\begin{lemma} \label{lem:symmlemm}  Let $D_{\lambda,\lambda'}(k) = \varepsilon_\lambda(k) \cdot (- \varepsilon_{\lambda'}(-k))$. Define $I : L^2(\Z_2 \times \R^3 ) \to L^2(\Z_2 \times \R^3 )$
by  $(I \psi)(\lambda,k) = \sum_{\lambda'} D_{\lambda,\lambda'}(k) \psi(\lambda',-k)$ for $\psi \in L^2(\Z_2 \times \R^3)$. Then $I$ and $\Gamma(I)$ are  unitary operators.
 For   $m > 0$ and   $\rho \in L^2(\R^3;(|k| + |k|^{-1}  ) dk )$,
 we have  
 $\Gamma(I)  H_m(\xi)  \Gamma(I)^*  = H_m(-\xi)$.
\end{lemma}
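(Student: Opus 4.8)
The plan is to show that $I$ is a linear unitary on $\hh$, deduce that $\Gamma(I)$ is unitary by functoriality, and then compute the conjugation of each term of $H_m(\xi)$ by $\Gamma(I)$.

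\emph{Step 1: $I$ and $\Gamma(I)$ are unitary.} Since $\widehat{-k}=-\widehat k$, the triple $(-\widehat k,\varepsilon_1(-k),\varepsilon_2(-k))$ is an orthonormal basis, so $\{-\varepsilon_1(-k),-\varepsilon_2(-k)\}$ and $\{\varepsilon_1(k),\varepsilon_2(k)\}$ are two orthonormal bases of the plane $\widehat k^{\perp}$. Hence the $2\times2$ matrix $D(k)=(D_{\lambda,\lambda'}(k))$, whose entries are the coordinates of the second basis in the first, is real orthogonal for a.e.\ $k$; expanding $\varepsilon_\lambda(k)\in\widehat k^\perp$ in the basis $\{-\varepsilon_{\lambda'}(-k)\}$ gives moreover
\begin{align}\label{eq:symmpolexp}
\sum_{\lambda'}\big(\varepsilon_\lambda(k)\cdot(-\varepsilon_{\lambda'}(-k))\big)(-\varepsilon_{\lambda'}(-k))=\varepsilon_\lambda(k)\qquad(\text{a.e. }k).
\end{align}
Pointwise orthogonality of $D(k)$, Fubini, and the measure-preserving substitution $k\mapsto-k$ then show that $I$ is an isometry with a two-sided inverse given by the analogous formula with $D(k)^{\mathsf T}$, hence $I$ is unitary. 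That $\Gamma(I)$ is then unitary follows from Lemma \ref{eq:lemmgmma}.

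\emph{Step 2: conjugation of the building blocks.} Using $\Gamma(I)d\Gamma(h)\Gamma(I)^*=d\Gamma(IhI^*)$: because $\omega_m$ is even and polarization-independent one gets $I\omega_m I^*=\omega_m$, so $\Gamma(I)H_{f,m}\Gamma(I)^*=H_{f,m}$; for the multiplication operators $\pi_j(\lambda,k)=k_j$ one gets $I\pi_jI^*=-\pi_j$, so $\Gamma(I)P_f\Gamma(I)^*=-P_f$, and squaring componentwise $\Gamma(I)$ commutes with $\tfrac12 P_f^2+H_{f,m}$ — in particular it preserves the natural domain of Theorem \ref{hyp:op0} (the same for $\xi$ and $-\xi$), so all identities below hold as operator identities on that domain. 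For the fields one first checks that $A_j=a((f_A)_j)+a^*((f_A)_j)$ and $B_j=a((f_B)_j)+a^*((f_B)_j)$ with $f_A,f_B$ as in \eqref{eq:defoffieldfunc} (the annihilation-side coefficients come out correctly since $\varepsilon_\lambda$ is real), so that $\Gamma(I)A_j\Gamma(I)^*=a((If_A)_j)+a^*((If_A)_j)$ and similarly for $B$. Inserting the definition of $I$, using \eqref{eq:symmpolexp}, the relations $\varepsilon_{\lambda'}(-k)=-(-\varepsilon_{\lambda'}(-k))$ and $f_B(-k,\lambda')=-i(-k)\wedge f_A(-k,\lambda')=ik\wedge f_A(-k,\lambda')$, and the hypothesis $\rho=\rho(-\cdot)$, one obtains $If_A=-f_A$ and $If_B=f_B$; hence $\Gamma(I)A\Gamma(I)^*=-A$ and $\Gamma(I)(S\cdot B)\Gamma(I)^*=S\cdot B$ (the matrix factor $S$ being untouched).

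\emph{Step 3: assembling.} Combining Step 2,
\begin{align*}
\Gamma(I)H_m(\xi)\Gamma(I)^*=\tfrac12(\xi+P_f-eA)^2+eS\cdot B+H_{f,m}=\tfrac12\big(-(-\xi-P_f+eA)\big)^2+eS\cdot B+H_{f,m}=H_m(-\xi),
\end{align*}
using $X^2=(-X)^2$. The step requiring the most care is the polarization bookkeeping in Step 2, i.e.\ getting the signs in \eqref{eq:symmpolexp} and in the passage $\varepsilon_{\lambda'}(-k)\leftrightarrow -\varepsilon_{\lambda'}(-k)$ exactly right so that they combine with $\rho=\rho(-\cdot)$ to give $If_A=-f_A$ (rather than $+f_A$), together with verifying that $A_j,B_j$ genuinely have the Segal-field form $a(g)+a^*(g)$; the remaining verifications are routine functorial computations.
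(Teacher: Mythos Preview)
Your proof is correct and follows essentially the same approach as the paper's own proof, which is a terse one-paragraph sketch relying on the same three computations $\Gamma(I)H_{f,m}\Gamma(I)^*=H_{f,m}$, $\Gamma(I)P_f\Gamma(I)^*=-P_f$, and $\Gamma(I)A_j\Gamma(I)^*=-A_j$. You have carefully filled in the polarization bookkeeping and the domain considerations that the paper leaves implicit, and you correctly invoke the parity assumption $\rho=\rho(-\cdot)$, which the paper's proof also uses (``the parity symmetry of $\rho$'') even though it is, strictly speaking, missing from the lemma's hypotheses.
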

\begin{proof} It is straightforward to verify that $I$ is unitary.  Hence $\Gamma(I)$ is also unitary.
It is easy to see that   $\Gamma(I) H_f \Gamma(I)^* =H_f  $ and $\Gamma(I) P_f \Gamma(I) ^*= - P_f$. Using the properties of the polarization vectors and the parity symmetry of
$\rho$, an  elementary calculation shows that   $\Gamma(I) A_j \Gamma(I)^* = - A_j$.
The claim of the lemma now follows.
\end{proof}

\bibliography{existgsshortsubmitarxiv}

\begin{thebibliography}{10}

\bibitem{arai:14}
A.~Arai.
\newblock A new asymptotic perturbation theory with applications to models of
  massless quantum fields.
\newblock {\em Ann. Henri Poincar\'{e}}, 15:1145--1170, 1914.

\bibitem{BCFS05}
V.~Bach, T.~Chen, J.~Fr{\"o}hlich, and I.M. Sigal.
\newblock The renormalized electron mass in non-relativistic quantum
  electrodynamics.
\newblock {\em Journal of Functional Analysis}, 243(2):426 -- 535, 2007.

\bibitem{barchevouvou:10}
J.-M. Barbaroux, T.~Chen, V.~Vougalter, and S.~Vougalter.
\newblock Qauntitative estimates on the binding energy for hydrogen in
  non-relativistic {QED}.
\newblock {\em Ann. Henri Poincar\'{e}}, 11:1487--1544, 2010.

\bibitem{blonor:37}
F.~Bloch and A.~Nordsieck.
\newblock Note on the radiation field of the electron.
\newblock {\em Physical Review}, 52:54--59, 1937.

\bibitem{brahaslan:18}
G.~Br\"aunlich, D.~Hasler, and M.~Lange.
\newblock On asymptotic expansions in spin-boson models.
\newblock {\em Ann. Henri Poincar\'{e}}, 19:515--564, 2018.

\bibitem{C01}
T.~Chen.
\newblock Infrared renormalization in non-relativistic {QED} and scaling
  criticality.
\newblock {\em Journal of Functional Analysis}, 254(10):2555--2647, 2008.

\bibitem{chenfroehlich}
T.~{Chen} and J.~{Fr{\"o}hlich}.
\newblock Coherent infrared representations in non-relativistic {QED}.
\newblock In F.~{Gesztesy}, P.~{Deift}, C.~{Galvez}, P.~{Perry}, and
  W.~{Schlag}, editors, {\em Spectral Theory and Mathematical Physics},
  page~25, 2007.

\bibitem{ChenFroehlichPizzo2}
T.~Chen, J.~Fr{\"o}hlich, and A.~Pizzo.
\newblock Infraparticle scattering states in nonrelativistic quantum
  electrodynamics. {II}. {M}ass shell properties.
\newblock {\em Journal of Mathematical Physics}, 50(1):012103, 2009.

\bibitem{F73}
J.~Fr{\"o}hlich.
\newblock On the infrared problem in a model of scalar electrons and massless,
  scalar bosons.
\newblock In {\em Annales de l'IHP Physique th{\'e}orique}, volume~19, pages
  1--103, 1973.

\bibitem{F74}
J.~Fr{\"o}hlich.
\newblock Existence of dressed one electron states in a class of persistent
  models.
\newblock {\em Fortschritte der Physik}, 22(3):159--198, 1974.

\bibitem{GriLieLos:01}
M.~Griesemer, E.~H. Lieb, and M.~Loss.
\newblock Ground states in non-relativistic quantum electrodynamics.
\newblock {\em Inventiones mathematicae}, 145(3):557--595, 2001.

\bibitem{G72}
L.~Gross.
\newblock Existence and uniqueness of physical ground states.
\newblock {\em Journal of Functional Analysis}, 10(1):52--109, 1972.

\bibitem{haisei02}
C.~Hainzl and R.~Seiringer.
\newblock Mass renormalization and energy level shift in non-relativistic
  {QED}.
\newblock {\em Advances in Theoretical and Mathematical Physics}, 6:847--871,
  2002.

\bibitem{haslerherbst1}
D.~{Hasler} and I.~{Herbst}.
\newblock Absence of ground states for a class of translation invariant models
  of non-relativistic {QED}.
\newblock {\em Communications in Mathematical Physics}, 279:769--787, May 2008.

\bibitem{HH08}
D.~Hasler and I.~Herbst.
\newblock On the self-adjointness and domain of {Pauli}-{Fierz} type
  {Hamiltonians}.
\newblock {\em Reviews in Mathematical Physics}, 20(07):787--800, 2008.

\bibitem{hasher:11}
D.~Hasler and I.~Herbst.
\newblock Ground states in the spin boson model.
\newblock {\em Ann. Henri Poincar\'{e}}, 12:621--677, 2011.

\bibitem{H06}
F.~Hiroshima.
\newblock Fiber {Hamiltonians} in non-relativistic quantum electrodynamics.
\newblock {\em Journal of Functional Analysis}, 252(1):314--355, 2007.

\bibitem{HL08}
F.~Hiroshima and J.~Lorinczi.
\newblock Functional integral representations of the {P}auli-{F}ierz model with
  spin 1/2.
\newblock {\em Journal of Functional Analysis}, 254:2127--2185, 2008.

\bibitem{kato}
T.~Kato.
\newblock {\em Perturbation Theory for Linear Operators}.
\newblock Springer, 1980.

\bibitem{kulfad:70}
P.P. Kulish and L.~D. Faddeev.
\newblock Asymptotic conditions and infrared divergences in quantum
  electrodynamics.
\newblock {\em Theoretical and Mathematical Physics}, 4:745--757, 1970.

\bibitem{LMS06}
M.~Loss, T.~Miyao, and H.~Spohn.
\newblock Lowest energy states in nonrelativistic {QED}: Atoms and ions in
  motion.
\newblock {\em Journal of Functional Analysis}, 243(2):353 -- 393, 2007.

\bibitem{omatte}
O.~Matte.
\newblock {Pauli-Fierz} type operators with singular electromagnetic potentials
  on general domains.
\newblock {\em Mathematical Physics, Analysis and Geometry}, 20(2):18, 2017.

\bibitem{piz05}
A.~Pizzo.
\newblock Scattering of an infraparticle: The one particle sector in {N}elson's
  massless model.
\newblock {\em Ann. Henri Poincar\'{e}}, 6:553--606, 2005.

\bibitem{ressim:fou}
M.~Reed and B.~Simon.
\newblock {\em Methods of modern mathematical physics {II}: Fourier Analysis,
  Self-Adjointness}.
\newblock Methods of Modern Mathematical Physics. Elsevier Science, 1975.

\bibitem{ressim:ana}
M.~Reed and B.~Simon.
\newblock {\em Methods of modern mathematical physics IV: Analysis of
  operators}.
\newblock Methods of Modern Mathematical Physics. Academic Press, 1978.

\bibitem{rs1}
M.~Reed and B.~Simon.
\newblock {\em Methods of modern mathematical physics I: Functional Analysis}.
\newblock Methods of Modern Mathematical Physics. Elsevier Science, 1981.

\bibitem{Sim:77}
B.~Simon.
\newblock Lower semicontinuity of positive quadratic forms.
\newblock {\em Proceedings of the Royal Society of Edinburgh: Section A
  Mathematics}, 79(3-4):267–273, 1978.

\bibitem{SPOHN04}
H.~Spohn.
\newblock {\em Dynamics of Charged Particles and their Radiation Field}.
\newblock Cambridge University Press, 2004.

\bibitem{weid80}
J.~Weidmann.
\newblock {\em Linear Operators in Hilbert Spaces}.
\newblock Springer, 1980.

\bibitem{wuest}
R.~W{\"u}st.
\newblock Holomorphic operator families and stability of selfadjointness.
\newblock {\em Math. Z.}, 125:349--358, 1972.

\end{thebibliography}

\end{document}